\begin{document}
\title{Efficient Stochastic Routing in Path-Centric Uncertain Road Networks---Extended Version}

% \author{Chenjuan Guo$^{1}$, Ronghui Xu$^1$, Bin Yang$^{1}$, Ye Yuan$^2$, Tung Kieu$^2$, Yan Zhao$^2$, Christian S. Jensen$^2$}

% % \orcid{xxxx}
% \affiliation{%
%  \institution{$^1$East China Normal University, $^2$Aalborg University}
% }
% \email{{cjguo, byang}@dase.ecnu.edu.cn, rhxu@stu.ecnu.edu.cn, {yuanye, tungkvt, yanz, csj}@cs.aau.dk}

\author{Chenjuan Guo}
\affiliation{%
  \institution{East China Normal University}
%   \city{Shanghai}
%   \country{China}
}
\email{cjguo@dase.ecnu.edu.cn}

\author{Ronghui Xu}
\affiliation{%
  \institution{East China Normal University}
%   \city{Shanghai}
%   \country{China}
}
\email{rhxu@stu.ecnu.edu.cn}

\author{Bin Yang} 
\authornote{Corresponding authors}
\affiliation{%
  \institution{East China Normal University}
  % \city{Shanghai}
  % \country{China}
}
\email{byang@dase.ecnu.edu.cn} 

\author{Ye Yuan}
\affiliation{%
  \institution{Aalborg University}
  % \city{Aalborg}
  % \country{Denmark}
}
\email{yuanye@cs.aau.dk}

\author{Tung Kieu}
\author{Yan Zhao}
\affiliation{%
  \institution{Aalborg University}
  % \city{Aalborg}
  % \country{Denmark}
}
\email{{tungkvt, yanz}@cs.aau.dk}

\author{Christian S. Jensen}
\affiliation{%
  \institution{Aalborg University}
  % \city{Aalborg}
  % \country{Denmark}
}
\email{csj@cs.aau.dk}

\begin{abstract}
The availability of massive vehicle trajectory data enables the modeling of road-network constrained movement as travel-cost distributions rather than just single-valued costs, thereby capturing the inherent uncertainty of movement and enabling improved routing quality. Thus, stochastic routing has been studied extensively in the edge-centric model, where such costs are assigned to the edges in a graph representation of a road network. However, as this model still disregards important information in trajectories and fails to capture dependencies among cost distributions, a path-centric model, where costs are assigned to paths, has been proposed that captures dependencies better and provides an improved foundation for routing. Unfortunately, when applied in this model, existing routing algorithms are inefficient due to two shortcomings that we eliminate. First, when exploring candidate paths, existing algorithms only consider the costs of candidate paths from the source to intermediate vertices, while disregarding the costs of travel from the intermediate vertices to the destination, causing many non-competitive paths to be explored. We propose two heuristics for estimating the cost from an intermediate vertex to the destination, thus improving routing efficiency. Second, the edge-centric model relies on stochastic dominance-based pruning to improve efficiency. This pruning assumes that costs are independent and is therefore inapplicable in the path-centric model that takes dependencies into account. We introduce a notion of virtual path that effectively enables stochastic dominance-based pruning in the path-based model, thus further improving efficiency. Empirical studies using two real-world trajectory sets offer insight into the properties of the proposed solution, indicating that it enables efficient stochastic routing in the path-centric model. The source code has been made available at \url{https://github.com/decisionintelligence/Route-sota}.
\end{abstract}
\maketitle 

% \vspace{2em}
\section{Introduction}
Emerging innovations in transportation with disruptive potential, such as autonomous vehicles, transportation-as-a-service, and coordinated fleet transportation, call for high-resolution routing. For example, PostNord\footnote{https://www.postnord.dk/}, a logistics service provider, and FlexDanmark\footnote{https://flexdanmark.dk/}, a fleet transportation organization, often make deliveries with associated travel cost budgets, e.g., 2 hours. High-resolution routing enables them to maximize arrivals within given time budgets when scheduling trips, thereby improving the quality of their services~\cite{DBLP:journals/sigmod/GuoJ014}.

Meanwhile, the availability of increasingly massive volumes of vehicle trajectory data enables opportunities for capturing travel costs, such as travel time, at an unprecedented level of detail. The state-of-the-art models capture travel costs as distributions. 
This level of detail enables the above services that are not possible when using only average values. 
As another example, Table~\ref{tbl:intro_paths} shows the travel-time distributions of two paths $P_A$ and $P_B$ from an office to an airport, scheduled for an autonomous vehicle.
The average travel times, i.e., weighted sum, of $P_A$ and $P_B$ are 49 and 52, respectively. 
If a person needs to arrive at the airport in 60 minutes, taking $P_A$ is more risky since it incurs a 10\% chance of being late, although it has a smaller average. 
\vspace{-1em}
\begin{table}[!htp]
	\centering
	\begin{tabular}{|c|c|c|c|c|c|}
		\hline
		Travel time (mins) & 40 & 50 & 60 & 70 & AVG \\ \hline\hline
		$P_A$ & 0.5 & 0.2 & 0.2 & 0.1 & 49\\ \hline
		$P_B$ & 0 & 0.8 & 0.2 & 0 & 52\\ \hline
	\end{tabular}
        \vspace{0.5em}
	\caption{Travel Time Distributions for $P_A$ and $P_B$}
	\label{tbl:intro_paths}
        \vspace{-3em}
\end{table}

Given a source and a destination on an uncertain road network, a departure time, and a travel cost budget, e.g., 60 minutes, we study the classical arriving on time problem~\cite{nie2006arriving, DBLP:conf/ijcai/BentH07}, which conducts stochastic routing and aims to find a path that maximizes the probability of arriving at the destination within the cost budget. 

Travel cost distributions of an uncertain road networks are often built from available vehicle trajectory data. 
Cost distributions are used on two model settings: in the classical {\it edge-centric} (EDGE) model~\cite{nie2006arriving, niknami2016tractable, DBLP:conf/icde/YangGJKS14} and in the recent {\it PAth-CEntric} (PACE) model~\cite{pvldb17pathcost, DBLP:journals/vldb/YangDGJH18}.
In the paper, we study the problem of arriving on time in PACE uncertain road networks. This is proven to be a more accurate way to model travel costs in real traffic~\cite{pvldb17pathcost, DBLP:journals/vldb/YangDGJH18}.

Assume that 100 trajectories occurred on path $P=\langle e_1, e_2 \rangle$, where 80 trajectories spent 10 minutes on $e_1$ and $e_2$, respectively; and the other 20 trajectories spent 15 minutes each on both $e_1$ and $e_2$. 
These trajectories suggest strong dependency---a driver is either fast or slow on both edges. A driver is unlikely to be fast on one edge and slow on the other edge. 
However, the EDGE model ignores such dependency and simply splits the trajectories to fit edges $e_1$ and $e_2$, and both edges are assigned the distribution $\{[10, 0.8], [15, 0.2]\}$, meaning that traversing $e_1$ or $e_2$ takes 10 minutes with probability 0.8 and 15 minutes with probability 0.2. 

The PACE model solves the problem by maintaining the ``correct'' joint distributions for paths directly. 
Specifically, following the above example, in addition to the edge distributions, the PACE model also maintains the joint distribution that captures, e.g., 10 and 10 minutes on $e_1$ and $e_2$ with a probability of 0.8, derived from the original, non-split trajectories (see Table~\ref{tbl:PACE_weight}(a)), which preserves cost dependencies.

Based on this joint distribution, the derived cost distribution of path $P$ is consistent with the original trajectories (see Table~\ref{tbl:PACE_weight}(b)). 
However, when using the PACE model, it is unlikely that there will be sufficient data to maintain distributions for all meaningful paths in the underlying road network. 
In particular, the longer a path is, the fewer trajectories occurred on the path. 
Thus, the PACE model resorts to maintaining distributions for short paths with sufficient amounts of trajectories, e.g., above a threshold.
We call such paths \emph{trajectory-paths} or \emph{T-paths} for short. 
\vspace{-1.5em}
\begin{table}[!htbp]
	\centering
	\subfigure[Joint Distribution $W_J(P)$]{
		\begin{minipage}[b]{0.2\textwidth}
		\centering
			\begin{tabular}{|c|c|} \hline
				$\langle e_1, e_2 \rangle$  & $Probability$ \\ \hline \hline
				10, 10 & 0.80  \\ \hline
				15, 15 & 0.20  \\ \hline
			\end{tabular}
		\end{minipage}
	}
	\hspace{2mm}
	\subfigure[Cost Distribution $D(P)$]{
		\begin{minipage}[b]{0.2\textwidth}\centering
			\begin{tabular}{|c|c|} \hline
				$P$ & $Probability$ \\ \hline \hline
				$10+10=20$ & 0.80 \\ \hline								
                $15+15=30$ & 0.20 \\ \hline
			\end{tabular}
		\end{minipage}
	}
	\caption{The Path-Centric Model, PACE}
	\label{tbl:PACE_weight}
\end{table}
\vspace{-2.5em}

Computing the cost distribution of a path in PACE is complicated by the existence of multiple ways of computing the cost from the distributions of different T-paths and edges. 
For example, assume that distributions for edges $e_1$, $e_2$, and $e_3$ and for T-paths $\langle e_1, e_2\rangle$ and $\langle e_2, e_3\rangle$ exist. 
When computing the distribution of path $\langle e_1, e_2, e_3\rangle$, we can use the distributions for T-path $\langle e_1, e_2\rangle$ and edge $e_3$, edge $e_1$ and T-path $\langle e_2, e_3\rangle$, or T-path $\langle e_1, e_2\rangle$ and T-path $\langle e_2, e_3\rangle$ which overlap. 
An existing study~\cite{pvldb17pathcost} proves that the last option provides the best accuracy as T-path $\langle e_1, e_2\rangle$’s distribution preserves the dependency between $e_1$ and $e_2$ and T-path $\langle e_2, e_3\rangle$’s distribution preserves the dependency between $e_2$ and $e_3$. 

Although the PACE model offers better accuracy than the EDGE model, stochastic routing on the PACE model often takes longer than the EDGE model~\cite{DBLP:journals/vldb/YangDGJH18, DBLP:conf/mdm/Andonov018}, which is mainly due to two challenges. 

\noindent
\textbf{Search Heuristics: }Stochastic routing often needs to explore many candidate paths from the source to intermediate vertices. 
The efficiency depends highly on whether an algorithm is able to  explore more promising paths before less promising ones.
To enable this, the existing solution in the PACE model considers the distributions of the candidate paths themselves~\cite{DBLP:journals/vldb/YangDGJH18}. However, it ignores the cost from the intermediate vertices to the destination, thus often leading to the exploration of non-promising paths.  

For example, in Figure~\ref{fig:intro}, assume that we have three candidate paths $P_1$, $P_2$, and $P_3$ from source $v_s$. 
When only considering the candidate paths themselves, $P_3$ is the most promising,  i.e., having the least expected cost and the largest probability that the cost is below the budget.  
However, $P_3$ is a bad choice since $v_j$ is far away from destination $v_d$. 

If we also consider the cost from the intermediate vertex to the destination, $P_1$ and $P_2$ become more promising because although $P_1$ and $P_2$ themselves have higher cost than $P_3$, intermediate vertex $v_i$ is closer to the destination. 
Thus, heuristics that are able to accurately estimate the cost of travel from any intermediate vertex to the destination can be very effective, assuming that the heuristics are admissible to ensure correctness. 
We propose two types of admissible heuristics, i.e., binary and budget-specific, for the PACE model that are able to estimate the maximum probability of arriving at the destination within the budget from an intermediate vertex. %

\begin{figure}[!htp]
	\begin{center}
		\includegraphics[scale=0.5]{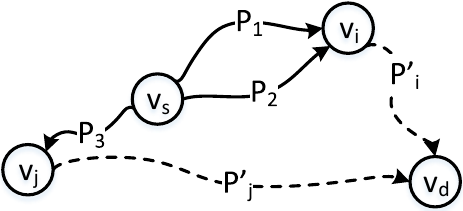}
	\end{center}
        \vspace{-1em}
	\caption{Motivating Example}
	\label{fig:intro}
\end{figure}
\vspace{-1em}

\noindent
\textbf{Effective Pruning: }Efficient stochastic routing in the EDGE model relies on stochastic dominance-based pruning~\cite{nie2006arriving, niknami2016tractable}. 
For example, consider Figure~\ref{fig:intro}, where candidate paths $P_1$ and $P_2$ both go from the source $v_s$ to an intermediate vertex $v_i$. 
If the cost distribution of $P_1$ stochastically dominates the cost distribution of $P_2$, we can safely prune $P_2$. 
This is equivalent to the deterministic case where $P_1$ has a smaller cost value than $P_2$.
More specifically, due to the independence assumption in the EDGE model, no matter which edge $e$ is extended from $v_i$, it is guaranteed that $D(P_1)\oplus D(e)$ stochastically dominates $D(P_2)\oplus D(e)$~\cite{nie2006arriving, niknami2016tractable}, where $\oplus$ represents the convolution operator and $D(\cdot)$ represents the cost distribution of a path or an edge. 
Thus, there is no need to further consider $P_2$ because there is always a better path that is extended from $P_1$.

Since the PACE model does consider dependency and does not use convolution to compute the distribution of a path, such pruning is no longer valid. 
As a result, massive amounts of candidate paths must be explored, which adversely affects efficiency. 
We propose virtual-path based stochastic routing for PACE as the second speedup technique. 
Here, we iteratively combine overlapping T-paths into so-called virtual paths (V-paths for short), such that time consuming distribution combination among overlapping T-paths is pre-computed before routing. 
Further, as the V-paths preserve the dependencies in the T-paths, it is possible to use stochastic dominance based pruning with the V-paths, which improves efficiency. 

To summarize, we make three contributions in the paper. 
First, we propose two heuristics that are able to estimate the costs of travel from intermediate vertices to the destination in the PACE model. 
Second, we propose a method that introduces V-paths in the PACE model, which enables effective pruning. 
Third, we conduct comprehensive experiments using two real-world data sets to gain detailed insight into the effectiveness of the speedup techniques. 

The paper is organized as follows. 
Section~\ref{sec:preliminaries} covers the basics of the PACE model. 
Section~\ref{sec:heur} presents the two versions of search heuristics followed by virtual-path based stochastic routing in Section~\ref{sec:vrouting}. 
Section~\ref{sec:exp} offers the experimental studies, Section~\ref{sec:related} reviews related work, and Section~\ref{sec:con} concludes.

\section{Preliminaries}
\label{sec:preliminaries}

We cover the \emph{edge}-centric and \emph{path}-centric models and formalize the problem to be solved. 

\vspace{-1mm}
\subsection{Edge-Centric Model (EDGE)}
\label{subsec:rn}

The standard model for stochastic routing is the \emph{edge-centric model (EDGE)}~\cite{nie2006arriving, niknami2016tractable, DBLP:conf/icde/YangGJKS14}. 
Here, a road network is modeled as a directed graph $\mathcal{G}=(\mathbb{V}, \mathbb{E}, W)$, where $\mathbb{V}$ is a set of vertices that represent road intersections or ends of roads, and $\mathbb{E} \subseteq \mathbb{V} \times \mathbb{V}$ is a set of edges that represent directed road segments. 
Weight function $W: \mathbb{E}\rightarrow \mathbb{D}$ maintains uncertain weights, e.g., travel-time distributions, for all edges in $\mathbb{E}$, where $\mathbb{D}$ is a set of distributions. 
As already explained in the introduction, the weight function is instantiated based on trajectory pieces. 
Figure~\ref{fig:origin_graph} shows a road network with 8 vertices and 10 edges along with the stochastic weights of the edges. 
For example, the stochastic weight of edge $e_1$ is a distribution of $\{[8, 0.9], [10, 0.1]\}$, meaning that traversing $e_1$ may take 8 or 10 time units with probability 0.9 and 0.1, respectively.

A path is a sequence of adjacent edges $P=\langle e_1, e_2$, $\ldots$, $e_n\rangle$, where $n \geq 1$, $e_i\neq e_j$ if $1 \leq i, j \leq n$ and $i\neq j$, and $e_{i}$ and $e_{i+1}$ are adjacent where $1 \leq i <n$. 
Next, we define sub-path $P'=\langle e_i, e_{i+1}$, $\ldots$, $e_{j-1}$, $e_j\rangle$ of path $P$ as a sub-sequence of edges of the path, where $1 \leq i<j\leq n$. 
For example, path $\langle e_1, e_4 \rangle$ is a sub-path of path $\langle e_1, e_4, e_9 \rangle$ in Figure~\ref{fig:origin_graph}. 

\vspace{-1.2em}
\begin{figure}[!htbp]
	\begin{center}
		\includegraphics[scale=0.5]{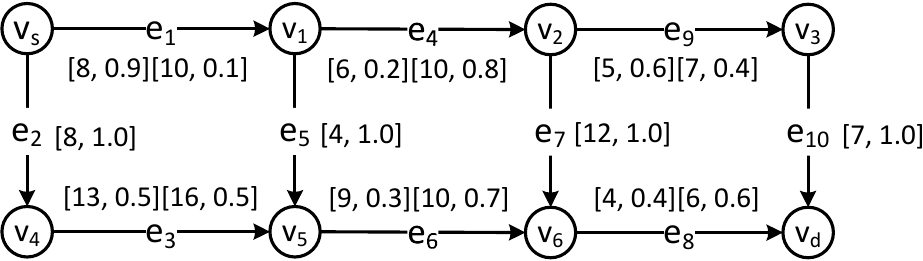}
	\end{center}
         \vspace{-1em}
	\caption{Edge-Centric Uncertain Road Network, EDGE}
	\label{fig:origin_graph}
\end{figure}
\vspace{-1.3em}

An important function in stochastic routing is to compute the cost distribution of a path. 
The EDGE model assumes that the uncertain weights of different edges are independent of each other. 
Thus, \emph{convolution} is used to sum distributions. 
Given a path $P=\langle e_1, e_2, \ldots, e_n \rangle$, the cost distribution of $P$ is computed as 
$D(P)=\oplus_{i=1}^{n} W(e_i)=W(e_1)\oplus W(e_2) \oplus \ldots \oplus W(e_n)$, where $\oplus$ is the convolution operator. 

\subsection{PAth-CEntric Model (PACE)}
\label{subsec:pace}

The path-centric model (PACE)~\cite{pvldb17pathcost, DBLP:journals/vldb/YangDGJH18}
models a road network as a directed graph $\mathcal{G}^p=(\mathbb{V}, \mathbb{E}, \mathbb{P}, \mathbb{W})$, where $\mathbb{V}$ and $\mathbb{E}$ still represent vertices and edges. 
Next, $\mathbb{P}$ is a set of paths that have been traversed by at least $\tau$ trajectories. 
We call these trajectory paths (a.k.a., T-paths). 
We evaluate the impact of threshold $\tau$ in experiments. 
Finally, $\mathbb{W}$ is a weight function set. The first weight function $W: \mathbb{E}\rightarrow \mathbb{D}$ returns a cost distribution for each edge $e\in \mathbb{E}$, as in the EDGE model. 
For each T-path, we maintain weight functions $W_J$ and $W$, both with signature $\mathbb{P}\rightarrow \mathbb{D}$. 
Given a T-path, $W_J$ returns its joint distribution that models the cost dependency among the edges in the T-path, e.g.,  as seen in Table~\ref{tbl:PACE_weight}(a); 
and $W$ returns a distribution that captures the total cost of the T-path, e.g., as seen in Table~\ref{tbl:PACE_weight}(b).

Figure~\ref{fig:pace_graph} shows a PACE graph with 5 T-paths and the same road network in Figure~\ref{fig:origin_graph}.  
Due to the space limitation, Figure~\ref{fig:pace_graph} only shows the total cost distributions of T-paths and omits the joint distributions. Next, we sketch how to compute the distribution of a path $P=\langle e_1, e_2, \ldots, e_n \rangle$ in PACE. 
As illustrated in the introduction, this distribution can generally be assembled from different sets of edge and T-path costs. 
It has been shown that the coarsest combination, i.e., the combination with the longest overlapping T-paths, gives the most accurate uncertain travel time~\cite{pvldb17pathcost}. 
For example, when computing the distribution of $P=\langle e_1, e_4, e_9 \rangle$, the coarsest combination is $\{p_1, p_2\}$, which is coarser than combinations $\{p_1, e_9\}$, $\{e_1, p_2\}$, and $\{e_1, e_4, e_9\}$.
\begin{figure}[!htp]
	\begin{center}
		\includegraphics[scale=0.45]{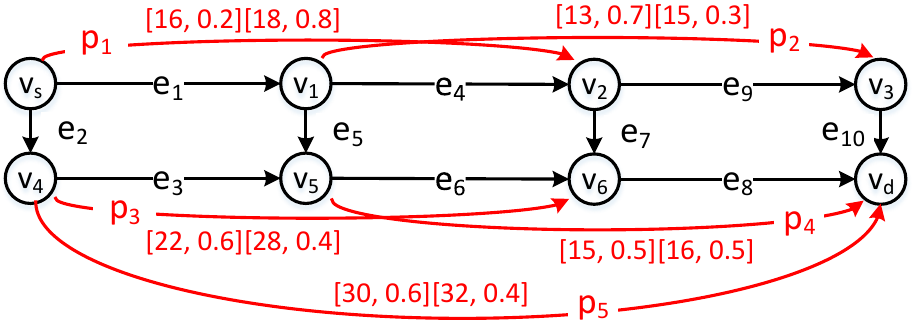}
	\end{center}
        \vspace{-1em}
	\caption{Path-Centric Uncertain Road Network, PACE}
	\label{fig:pace_graph}
 \vspace{-2em}
\end{figure}
After identifying the coarsest T-path sequence ${CPS(P)}=(p_1, p_2, \ldots, p_m)$ for path $P$, we use a T-path assembly operation $\diamond$ to compute the joint distribution of $P$~\cite{DBLP:journals/vldb/YangDGJH18} as follows:   
\begin{equation}
\label{eqn_pace}
D_J(P)  
= p_1 \diamond p_2 \diamond \ldots \diamond p_m
= \frac{\Pi_{i=1}^{m}W_J(p_i)}{\Pi_{i=1}^{m-1}W_J(p_i\cap p_{i+1})},
\end{equation}
where $p_i\cap p_{i+1}$ denotes the overlap sub-path of the T-paths $p_i$ and $p_{i+1}$.
From the joint distribution maintained in PACE, we can derive the distribution of the cost of the path $D_J(P)$ that captures cost dependencies. 
Consider the example of ${P}=\langle e_1, e_4, e_9\rangle$. 
In the PACE model, we have $D_J({P})=\frac{W_J(p_1)\cdot W_J(p_2)}{W_J(\langle e_4 \rangle)}$ as $p_1 \cap p_2=\langle e_4 \rangle$. 
In the case of no overlapping between two sequential T-paths, they are assembled by a convolution operation, as implemented in the EDGE.

We make a note on the notation in the paper: (1) $W_J$ and $W$ represent distributions that are maintained in EDGE or PACE; (2) $D_J$ and $D$ represent distributions that are derived from $W_J$ and $W$; (3) $W_J$ and $D_J$ indicate joint distributions. 

Finally, EDGE and PACE each supports time-dependent uncertainty modeling by maintaining different uncertain graphs for different time periods, e.g., peak vs. off-peak hours.  

\vspace{-0.5em}
\noindent
\subsection{Problem Definition}
\label{subsec:probdef}

Given a source vertex $v_s$, a destination vertex $v_d$, a departure time $t$, and a travel cost budget $B$, we aim at identifying path $P^*$ from $v_s$ to $v_d$ that has the largest probability of arriving at destination $v_d$ within the budget $B$ when leaving at $t$. 
\vspace{-0.5em}
\begin{equation}
P^*=\arg \max_{{P}\in \mathit{Path}} \mathit{Prob}(D({P})\leq B),   
\end{equation}
where set $\mathit{Path}$ contains all paths from $v_s$ to $v_d$, and $D({P})$ refers to the cost distribution of path ${P}$ leaving at $t$ using the PACE model. 
For example, given travel time budget %when considering travel time as the travel cost and 
$B=60$ minutes, we aim at finding the path that has the highest probability of arriving $v_d$ within 60 minutes, if a driver departs at 8:00 a.m. 

\vspace{-1em}
\begin{algorithm}[h]
	\caption{Naive Stochastic Routing in PACE}
	\label{algo:naivepace}
	\begin{small}
		\KwIn{Source $v_s$; Destination $v_d$; Budget $B$; PACE graph $\mathcal{G^{P}}$;}
		\KwOut{The path with the largest probability of getting $v_d$ within $B$;}
		Priority queue $Q \leftarrow \emptyset$; \\% \space\space\space\space
		Double $maxProb \leftarrow 0$;  Path $P^*\leftarrow  \emptyset$; \\
		%$Q.\mathit{push}(\langle v_d, v_d \rangle, 1)$;\\
		$Q.\mathit{push}(\langle v_s, v_s \rangle, 0)$;\\
		\While{$Q \neq \emptyset$}
		{
		    Path $\hat{P} \leftarrow Q.peek()$;\\
		  %  \If{$Prob(D(\hat{P})\leq B)< maxProb$}
		  %  {
		  %      break;
		  %  }
		    \If{$\hat{P}$ reaches $v_d$ and $Prob(D(\hat{P})\leq B)>maxProb$}
		    {
		        $maxProb \leftarrow Prob(D(\hat{P})\leq B)$; $P^*\leftarrow \hat{P}$; 
		    }
		    \For{each outgoing edge or T-path $e$}
		    {
		        Extend $\hat{P}$ by edge $e$; \\
		        %$Q.insert(\hat{P}, Prob(D(P)\leq B))$;\\
		        $Q.insert(\hat{P}, Expectation(D(\hat{P})))$;\\
		    }
		}
		\Return $P^*$;%
	\end{small}
%	\vspace{-3mm}
\end{algorithm}
\vspace{-1em}

\noindent
\textbf{Limitations of existing stochastic routing in PACE: }
The Existing stochastic routing in PACE~\cite{DBLP:journals/vldb/YangDGJH18} (see Algorithm~\ref{algo:naivepace}) faces the two challenges.
First, it starts exploring candidate paths from source $v_s$, each time extending a candidate path with its neighboring edges. 
A priority queue is used to maintain candidate paths, and each candidate path has a priority, e.g., the expected cost of the candidate path or the probability that the candidate path has cost below the budget $B$. 
It keeps exploring candidate paths until no path with higher probability exists. 
This approach determines the candidate path exploration priority purely based on the distributions of different candidate paths, while ignoring the costs from intermediate vertices to the destination. 
This leads to unnecessary exploration of non-competitive paths (e.g., $P_3$ in Figure~\ref{fig:intro}), yielding reduced efficiency. 

Second, stochastic dominance based pruning is inapplicable in the PACE model. 
Consider $P_1$ and $P_2$ in Figure~\ref{fig:intro}. 
When extending both $P_1$ and $P_2$ with edge $e$ to get two new paths $P_1^\prime$ and $P_2^\prime$, $D(P_1)\diamond p_x$ and $D(P_2) \diamond p_y$ are the distributions for $P_1^\prime$ and $P_2^\prime$. 
Here, $p_x$ and $p_y$ are the longest T-paths that cover $e$ and some edges in $P_1$ and $P_2$, respectively, and $p_x$ and $p_y$  are often different. 
Thus, although $D(P_1)$ stochastically dominates $D(P_2)$, $D(P_1)\diamond p_x$ may not stochastically dominates $D(P_2) \diamond p_y$, as $p_x$ and $p_y$ may be different,  rendering stochastic dominance based pruning inapplicable. 
\section{Search Heuristics}
\label{sec:heur}
\subsection{Definition of Heuristics}
\label{subsec:heurDef}

A routing algorithm explores a search space of candidate paths.
During the exploration, it is beneficial to first explore paths that are expected to have high probabilities of arriving at the destination within the budget. 
Assume that we explore a path $P_i$ from source $v_s$ to an intermediate vertex $v_i$. 
We are able to compute the cost distribution of $P_i$ using the PACE model. 
However, we do not know the cost distribution of the path from intermediate vertex $v_i$ to the destination $v_d$ because we do not yet know that path---there may possibly be many paths from $v_i$ to $v_d$. 

We target a heuristic function $U(v_i, v_d, x)$ that is able to estimate the largest possible probability, among possibly multiple paths from $v_i$ to $v_d$, of having a cost within $x$ units.  
Supported by the branch-and-bound principle that is often used to solve optimization problems with admissible heuristics~\cite{goodrich2006algorithm}, we need to ensure that the heuristic function is \emph{admissible}, meaning that the function never \emph{underestimates} the probability of being able to travel from $v_i$ to $v_d$ within $x$ units. In other words, the heuristic function must provide an \emph{upper bound} on the probability. 

If the upper bound probability of a candidate path is lower than the probability of a path $P^*$ that already reaches the destination, then the candidate path can be pruned safely because it cannot get a higher probability than that of $P^*$. This guarantees correctness.

Since the destination is fixed for a given query, we often omit $v_d$ and use $U(v_i, x)$ when the query is clear. 
However, the heuristic function is destination-specific, meaning that $U(v_i, x)$ is different for different query destinations. Based on the heuristic function, we are able to estimate the maximum probability of arriving at destination $v_d$ within budget $B$ when using path $P_i$, as shown in Eq.~\ref{eqn_est}. 
% \vspace{-0.5em}
\begin{equation}
\label{eqn_est}
\mathit{maxProb}(P_i, B) = \sum\limits_{t=1}^{B} D(P_i).\mathit{pdf}(t) \cdot U(v_i, B-t),  
\end{equation}
where $D(P_i).\mathit{pdf}(t)$ denotes the probability of the travel taking $t$ units when following path $P_i$ from source $v_s$ to intermediate vertex $v_i$, which is computed using the PACE model. Next, $U(v_i, B-t)$ denotes the largest probability of travel from $v_i$ to $v_d$ costing at most $B-t$ units, which is returned by the heuristic function.   

The intuition behind Eq.~\ref{eqn_est} is as follows. The maximum probability of arriving at the destination within $B$ units when using path $P_i$ is the sum, over all possible costs $t$, of the products of the probability that $P_i$ takes $t$ units and the maximum probability that the remaining path from $v_i$ to $v_d$ takes at most $B-t$ units, as illustrated in Figure~\ref{fig:heuristic_func}(a). 
In Figure~\ref{fig:heuristic_func}(b), $P_i$'s distribution is \{[8, 0.9], [10, 0.1]\}. 
When budget $B=25$, we get $\mathit{maxProb}(P_i$, $25) = 0.9 \cdot U(v_1,17) + 0.1 \cdot U(v_1, 15)$. 
\vspace{-1em}
\begin{figure}[H]
	\begin{center}
        \includegraphics[scale=0.55]{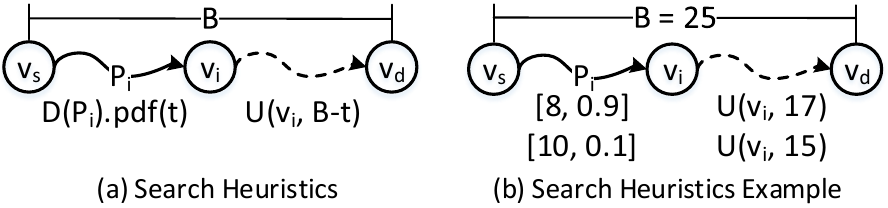}
	\end{center}
        \vspace{-1.5em}
	\caption{Search Heuristics}
	\label{fig:heuristic_func}
 \vspace{-1em}
\end{figure}

With the heuristic function, we are able to first explore more promising paths, i.e., paths that have higher $\mathit{maxProb}$ values. 
In addition, it enables early stopping. If the path with the largest $\mathit{maxProb}$ value, say $P^*$, already reaches the destination, we can safely stop exploring other paths. 
This is because, even when using admissible heuristics, the other paths can only achieve probabilities that are no greater than $P^*$'s probability. 
Thus, in the best case, the other paths cannot have a larger probability than that of $P^*$ and it is safe to stop. 

We proceed to introduce two different strategies for instantiating the admissible heuristic function $U(v_i, x)$---a binary heuristic and a budget-specific heuristic. 

\subsection{Binary Heuristic}
\label{subsec:T_heuristic}
The binary heuristics simplifies function $U(v_i, x)$ by only distinguishing between probability values 0 and 1---whether or not it is possible to go from $v_i$ to the destination within $x$ units. 

To this end, we maintain an auxiliary function $v_i.\mathit{getMin}()$ for each vertex $v_i$ that returns the {\it least travel cost} required to travel from $v_i$ to destination $v_d$. 
We define the binary heuristics in Eq.~\ref{eqn_ui}. 
\vspace{-0.5em}
\begin{equation}
\label{eqn_ui}
U(v_i, x)= \begin{cases}
0 &\text{if $x < v_i.\mathit{getMin}()$}\\
1 &\text{if $x \geq v_i.\mathit{getMin}()$}
\end{cases}
\vspace{-0.5em}
\end{equation}

If $x$ is below $v_i.\mathit{getMin}()$, it is impossible to go from $v_i$ to $v_d$ within $x$ units. 
As $v_i.\mathit{getMin}()$ is the smallest travel cost within which it is possible to reach $v_d$, it is impossible to reach $v_d$ with a travel cost $x$ that is smaller than $v_i.\mathit{getMin}()$. 
Thus, the highest probability of reaching $v_d$ within $x$ units is 0. 
Otherwise, if $x \geq v_i.\mathit{getMin}()$, it is possible to reach $v_d$ within $x$ units. 
In this case, we set $U(v_i, x)=1$, which is a very optimistic estimate. 
The resulting binary heuristic is \emph{admissible} because the probability of reaching $v_d$ within $x$ units is at most 1, meaning that the heuristic never underestimates the probability. 
In other words, the actual probability of reaching $v_d$ from $x_i$ may be less than 1 because it can also take more than $x$ units. 

Next, we instantiate function $v_i.\mathit{getMin}()$ for a given destination $v_d$.
The main idea is to perform a backward search from destination $v_d$ to all other vertices and then annotate each vertex with the least travel cost from the vertex to $v_d$.  

To this end, we first build a reversed graph $\mathcal{G}^p_\mathit{rev}=(\mathbb{V}, \mathbb{E}', \mathbb{P}', \mathbb{W}')$ from $\mathcal{G}^p=(\mathbb{V}, \mathbb{E}, \mathbb{P}, \mathbb{W})$, where the vertices remain unchanged but the directions of the edges in $\mathbb{E}'$ and T-paths in $\mathbb{P}'$ are reversed compared to the counterparts in $\mathbb{E}$ and $\mathbb{P}$. 
Weight function $\mathbb{W}'$ maintains only deterministic costs, thus mapping a reversed edge or T-path to the minimum cost in its distribution. 
For example, Figure~\ref{fig:rev_graph} includes a reversed edge $e'_1 = (v_1, v_s)$ has cost $\mathbb{W}'(e'_1)=8$ because of edge $e_1 = (v_s, v_1)$ in $\mathcal{G}^p$ (cf.\ Figure~\ref{fig:origin_graph}) that has cost distribution $\mathbb{W}(e_1)=\{[8, 0.9]$, $[10, 0.1]\}$, in which 8 is the minimum cost. 
Similarly, %
T-path $p_1 = \langle e_1, e_4\rangle$ in Figure~\ref{fig:origin_graph} yields the reversed T-path $p'_1 = \langle e'_4, e'_1\rangle$ in Figure~\ref{fig:rev_graph}, with $W'(p'_1)=16$.

Next, we explore the reversed graph $\mathcal{G}^p_\mathit{rev}$ starting from $v_d$, annotating each vertex with the least cost from $v_d$ to the vertex.  
If only considering reversed edges and disregard reversed T-paths~\cite{DBLP:conf/mdm/Andonov018}, Dijkstra's algorithm could produce a shortest path tree, as exemplified in Figure~\ref{fig:heuristic_graph}(a). However, the reversed T-paths render Dijkstra's algorithm inapplicable.

For example, when only considering reversed edges, the vertex $v_5$ is annotated with 13 (i.e., the sum of the weights of $e'_8$ and $e'_6$).
 \vspace{-1em}
\begin{figure}[h]
	\begin{center}
		\includegraphics[scale=0.44]{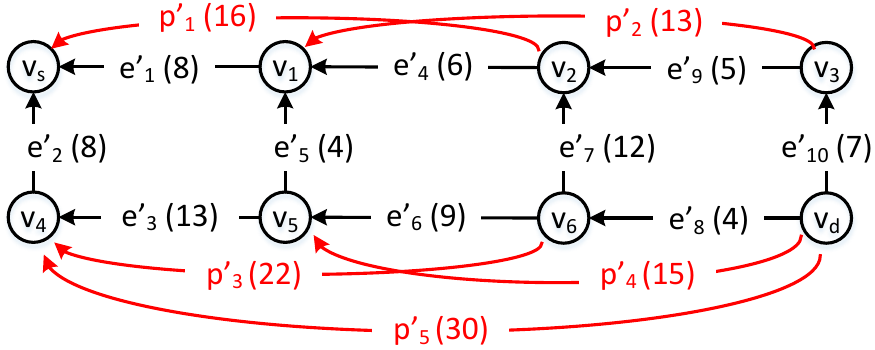}
	\end{center}
         \vspace{-1.5em}
	\caption{Reversed Graph $\mathcal{G}^p_\mathit{rev}$} 
	\label{fig:rev_graph}
 \vspace{-2em}
\end{figure}

\begin{algorithm}[h]
	\caption{Shortest Path Tree Generation in PACE}
	\label{algo:1-all-Dij}
	\begin{small}
		\KwIn{$\mathcal{G_{\mathit{rev}}^{P}}=(\mathbb{V}, \mathbb{E'}, \mathbb{P'}, \mathbb{W'})$, $v_d$;}
		\KwOut{Shortest path tree $T$ from $v_d$;}
		$\mathit{Init}(\mathbb{V})$;\space\space\space
		$/\ast$ \texttt{set} $c_1 \leftarrow \infty$ and $c_2 \leftarrow 0 \texttt{ for each vertex}$	$\ast/$\\
		Shortest path tree $T \leftarrow \emptyset$; \\% \space\space\space\space
		Priority queue $Q \leftarrow \emptyset$; \\% \space\space\space\space
		$v_d.c_1 \leftarrow 0$; \\
		$Q.\mathit{push}(v_d)$;\\
		\While{$Q \neq \emptyset$}
		{
			$v \leftarrow Q.peek()$;\\
			$T.add(v)$;\\
			\For{each vertex $u$ that is reachable from $v$ by a reversed edge or T-path $ep$} 
			{
				$\hat{c}_1 \leftarrow v.c_1 + W'(ep)$;\\
				$\hat{c}_2 \leftarrow v.c_2 + \mathit{countEdges}(ep)$;\\
				\Switch{$\mathit{checkDominance}(\hat{c}_1, \hat{c}_2, u.c_1, u.c_2)$}{
					\Case{\textsc{Non-domination}}{
						$\hat{P}_{\mathit{old}} \leftarrow \mathit{tracePath}(v_d,u)$;\\
						$\hat{P}_{\mathit{new}} \leftarrow \mathit{tracePath}(v_d,v) + \langle v, u\rangle$;\\
						\If{($\hat{P}_{\mathit{old}} \neq \hat{P}_{\mathit{new}}$) \&\& ($\hat{c}_1 < u.c_1$)}{					
							$u.c_1 \leftarrow \hat{c}_1$; $u.c_2 \leftarrow \hat{c}_2$; $u.\mathit{parent}\leftarrow v$;\\
							$Q.\mathit{add}(u)$;					
						}
						\If{($\hat{P}_{\mathit{old}} == \hat{P}_{\mathit{new}}$) \&\& ($\hat{c}_2 > u.c_2$)}{							
							$u.c_1 \leftarrow \hat{c}_1$; $u.c_2 \leftarrow \hat{c}_2$; $u.\mathit{parent}\leftarrow v$;\\
							$Q.\mathit{add}(u)$;						
						}
					}
					\Case{\textsc{Domination}}{
						$u.c_1 \leftarrow \hat{c}_1$; $u.c_2 \leftarrow \hat{c}_2$; $u.\mathit{parent}\leftarrow v$;\\
						$Q.\mathit{add}(u)$;
					}
				}
			}
		}
		\Return $T$;%
	\end{small}
\end{algorithm}
\noindent When considering reversed T-paths, reversed T-path $p'_4$ has a weight $W'(p'_4) =15$, which exceeds the sum of the weights of $e'_8$ and $e'_6$. 
According to the theory underlying the PACE model~\cite{pvldb17pathcost, DBLP:journals/vldb/YangDGJH18}, the T-path weight is more accurate as it preserves the dependency between the two edges better. 
Thus, we should annotate $v_5$ with the reversed T-path cost $W'(p'_4) =15$. 
However, if we simply run Dijkstra's algorithm, $v_5$ is annotated with 13 as it is smaller than 15.
This exemplifies why we need a new algorithm that is able to consider reversed T-paths that produce more accurate weights in PACE.

Algorithm \ref{algo:1-all-Dij} shows a new shortest path tree generation algorithm that takes into account the weights of the T-paths. 
We maintain two costs $c_1$ and $c_2$ for each vertex $v_i$. 
Cost $c_1$ represents the shortest path weight from $v_d$ to $v_i$, where the shortest path may use both reversed edges and reversed T-paths.  
Cost $c_2$ represents the number of edges in the reversed T-paths that are used to connect $v_d$ to $v_i$. 
\noindent We prefer low values for $c_1$ and high values for $c_2$, meaning that we prefer shortest paths and reversed T-paths with many edges. 
We use a priority queue $Q$ to maintain the vertices, where the priority is according to cost $c_1$. 
\vspace{-1em}
\begin{figure}[!htp]
	\begin{center}
		\includegraphics[scale=0.45]{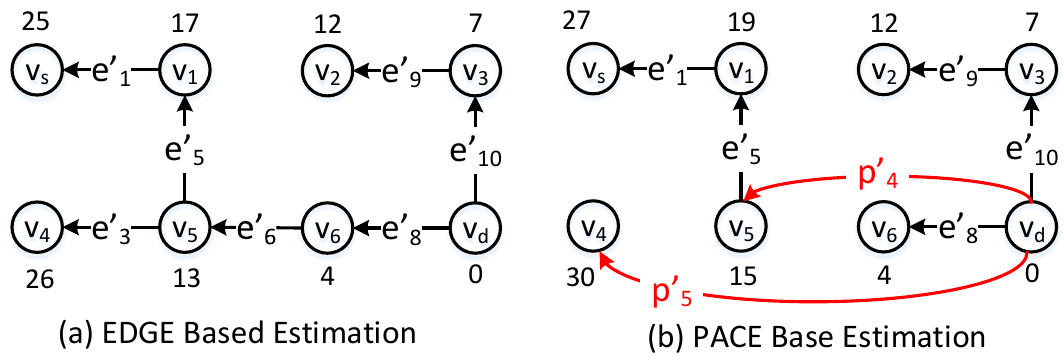}
	\end{center}
        \vspace{-1.5em}
	\caption{Computing $v_i.\mathit{getMin}()$, Given Destination $v_d$}
	\label{fig:heuristic_graph}
\end{figure}
\vspace{-1em}

\vspace{-1em}
 \begin{table}[!htp]
	\centering
	\renewcommand{\arraystretch}{1}
	\begin{tabular}{|c|l|c||c|l|c||c|l|c|} \hline
	    \multicolumn{3}{|c||}{Iteration 1} & \multicolumn{3}{c||}{Iteration 2} &
		\multicolumn{3}{c|}{Iteration 3} \\\hline
		\multicolumn{3}{|c||}{$v_d.c_1=0$} & \multicolumn{3}{c||}{$v_6.c_1=4$} &
		\multicolumn{3}{c|}{$v_3.c_1=7$} \\\hline
		$Q$ & $c_1$, $c_2$  & $\mathit{Par}$ & $Q$ & $c_1$, $c_2$  & $\mathit{Par}$ & $Q$ & $c_1$, $c_2$  & $\mathit{Par}$\\ \hline 
		$v_6$ & 4, 0 & $e'_8$ & $v_3$ & 7, 0 & $e'_{10}$ & $v_5$ & 15, 3 & $p'_4$\\ \hline 
		$v_3$ & 7, 0 & $e'_{10}$ & $v_5$ & 15, 2 & $p'_4$ & $v_2$ & 12, 0 & $e'_9$\\ \hline 
		$v_5$ & 15, 2 & $p'_4$ & $v_2$ & 16, 0 & $e'_7$ & $v_1$ & 20, 2 & $p'_2$\\ \hline 
		$v_4$ & 30, 3 & $p'_5$ & $v_4$ & 30, 3 & $p'_5$ & $v_4$ & 30, 3 & $p'_5$\\ \hline 		
	\end{tabular}
	\vspace{0.5em}
	\caption{Shortest Path Tree Generation in PACE} 
	% \vspace{-15pt}
	\label{tbl:dij}
\vspace{-2.5em}
\end{table}

Table~\ref{tbl:dij} shows the first three iterations when applying Algorithm~\ref{algo:1-all-Dij} to $\mathcal{G}^p_\mathit{rev}$ from Figure~\ref{fig:rev_graph}. In the first three iterations, the priority queue peeks vertices $v_d$, $v_6$, and $v_3$, respectively, as shown in the top row of Table~\ref{tbl:dij}. 
For each iteration, we show the vertices in the priority queue $Q$, the costs $c_1$ and $c_2$ of each vertex, and the reversed edge or T-path that connects the vertex from its parent in the shortest path tree, denoted by $\mathit{Par}$.

Starting from $v_d$, its neighbor vertices $v_6$, $v_3$, $v_5$, and $v_4$ are explored by following reversed edges $e'_8$ and $e'_{10}$ and reversed T-paths $p'_4$ and $p'_5$, respectively (Line 9), resulting in $c_1$ values 4, 7, 15, and 30 (Line 10). 
Since vertices $v_5$ and $v_4$ are reached from $v_d$ via reversed T-paths $p'_4 = \langle e'_8, e'_6\rangle$ and $p'_5 = \langle e'_8, e'_6, e'_3\rangle$, their $c_2$ values are 2 and 3, indicating that there are 2 and 3 edges in the reversed T-paths that connect $v_d$ to $v_5$ and to $v_6$, respectively (using function $\mathit{countEdge}(ep)$ in Line 11). 
The $c_2$ values for $v_6$ and $v_3$ are zero because they do not use reversed T-paths. 

In the second iteration, as shown in Table~\ref{tbl:dij}, the priority queue peeks $v_6$, which has three neighbors $v_5$, $v_2$, and $v_4$ by following $e'_6$, $e'_7$, and $p'_3$. When visiting $v_5$ by the reversed edge $e'_6$, we obtain $\hat{c}_1=13$ and $\hat{c}_2=0$ (Lines 10 and 11). 
Now, we need to consider whether we should replace the existing $c_1$ and $c_2$ values for vertex $v_5$ in the priority queue with $\hat{c}_1$ and $\hat{c}_2$.

To utilize the accurate cost values maintained in reversed T-paths, a smaller $c_1$ but a larger $c_2$ are preferred, meaning that we aim at (1) identifying the least cost from $v_d$ to a vertex and at (2) maximizing the use of reversed T-paths, especially long ones, during the search. 

Following this principle, we need to consider both values when making a decision on whether to update an existing vertex in the priority queue based on a new candidate path. 
More specifically, vertex $v_5$ has $c_1=15$ and $c_2=2$ through $\langle p'_4\rangle$ in the priority queue. 
Now, the new candidate path $\langle e'_8, e'_6\rangle$ has $\hat{c}_1=13$ and $\hat{c}_2=0$. 
We apply a $\mathit{checkDominance}(\cdot)$ function (Line 12) to compare the two values following the concept of pareto-optimality. 
We define two relationships---\textsc{non-domination} and \textsc{domination}.

\textsc{Non-domination} (lines 13--21). If the new candidate path is better in one value but worse in the other than those already in the priority queue, we are in the non-domination case. 
For example, since $\hat{c}_1<v_5.c_1$ and $\hat{c}_2<v_5.c_2$, this is a non-domination case.

In the non-domination case, we consider the new candidate path $\hat{P}_{new}$ and the old path $\hat{P}_{old}$ that leads to the values in the priority queue. 
We check whether they correspond to the same path in the road network (using function $\mathit{tracePath}(\cdot)$ in lines 14--15). 
If yes, following criterion (2), we use the path with more edges covered by T-paths as the corresponding cost should be more accurate. 
If no, we follow criterion (1) and use the path that has a smaller $c_1$ value to make sure that we find the minimum cost of candidate paths. 

In the above example, paths $\hat{P}_{new}$ and $\hat{P}_{old}$ both correspond to the same path $\langle e'_8, e'_6\rangle$. 
We then keep the values in the priority queue since $v_5.c_2=2$ indicates that two edges are covered by T-paths, while $\hat{c}_2=0$ indicates that no edges are covered by T-paths in the new candidate path  $\hat{P}_{new}$.

\textsc{Domination} (Lines 22--24). If the candidate path is better in one value and not worse in the other value than those already in the priority queue, domination occurs. 
In the domination case, we update the priority queue using the candidate path. 

For example, at the third iteration, where $v_3$ is peeked from the priority queue, $v_2$ is visited again as $v_3$'s neighbor via $e'_9$. We now have $v_2.c_1=16$ and $v_2.c_2=0$, and $\hat{c}_1=12$ and $\hat{c}_2=0$. 
Since $\hat{c}_1 < u.c_1$ and $\hat{c}_2 = u.c_2$, we update $v_2.c_1$ and $v_2.c_2$ to be $\hat{c}_1$ and $\hat{c}_2$. 
If the candidate path is dominated by the values already in the priority queue, we do nothing. 

Figure~\ref{fig:heuristic_graph}(b) shows the identified shortest path tree using both reversed edges and T-paths.

The time complexity for building the binary heuristics (Algorithm~\ref{algo:1-all-Dij}) is $O((|\mathbb{E'}|+|\mathbb{P'}|)lg|\mathbb{V}|)$, where $|\mathbb{E'}|$, $|\mathbb{P'}|$ and $|\mathbb{V}|$ represent the number of reversed edges, reversed T-paths, and vertices (i.e., road intersections or ends of roads), respectively. 

\begin{figure}[h]
	\begin{center}
		\includegraphics[scale=0.6]{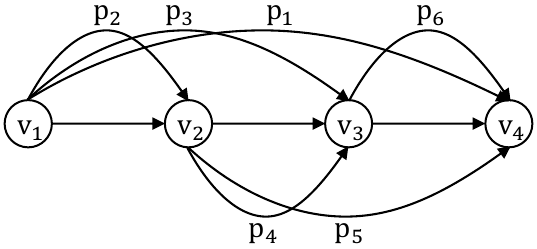}
	\end{center}
        % \vspace{-1.5em}
	\caption{Motivating Example of Generating T-path}
	\label{fig:t-path}
 \vspace{-1.5em}
\end{figure}

The number of T-paths $|\mathbb{P'}|$ is polynomial in the length of the longest T-path. For instance, as illustrated in Figure~\ref{fig:t-path}, assume that the longest T-path is $p_1=\langle v_1, v_2, v_3, v_4 \rangle$. From $v_1$, it generates three T-paths: $p_1=\langle v_1, v_2, v_3, v_4 \rangle$, $p_2=\langle v_1, v_2 \rangle$ , and $p_3=\langle v_1, v_2, v_3 \rangle$. 
Likewise, from $v_2$, it generates two T-paths: $p_4=\langle v_2, v_3 \rangle$, $p_5=\langle v_2, v_3, v_4\rangle$; and from $v_3$, it generates one T-path: $p_6=\langle v_3, v_4 \rangle$. Hence, the total number of T-paths we can generate is $\frac{n(n-1)}{2}$, where $n$ is the length of the longest T-path.

Specifically, in the worst case, we visit all edges and T-paths, thus having $|\mathbb{E'}|+|\mathbb{P'}|$. For each visit, we need to update the priority queue, which at most has $|\mathbb{V}|$ elements, and thus the update takes at most $lg|\mathbb{V}|$.

The space complexity of Algorithm~\ref{algo:1-all-Dij} is $O(|\mathbb{V}|^2)$. For each destination, each vertex maintains a binary heuristic value.

\vspace{-1em}
\subsection{Budget-Specific Heuristic}
\label{subsec:policy}

The binary heuristic often provides overly optimistic probabilities from an intermediate vertex to the destination, which can lead to unnecessary exploration of paths that cannot provide high probabilities of arriving at the destination within the budget. 
To enable more accurate heuristics, we propose a budget-specific heuristic that formulates function $U(v_i, x)$ at a finer granularity so that it can return probabilities in the range [0, 1].
Specifically,
\begin{equation}
\vspace{-0.5em}
\begin{aligned}
    U(v_i, x) &= \max_{z\in ON(v_i)} H(v_i, z, x), \\
H(v_i, z, x)& =  \sum_{k=1}^{x} \mathbb{W}(\langle v_i,  z\rangle).\mathit{pdf}(k) \cdot U(z, x-k),
\end{aligned}
\label{eqn_U}
\end{equation} 
\noindent and $ON(v_i)$ includes $v_i$'s outgoing neighbor vertices. 
Recall that the heuristic function $U(v_i, x)$ returns the largest probability of reaching $v_d$ from $v_i$ within $x$ units. 
Eq.~\ref{eqn_U} computes this probability by getting the largest $H(v_i, z, x)$ over all $v_i$'s outgoing neighbor vertices $z$, where  $H(v_i, z, x)$ represents the probability of arriving at $v_d$ within time budget $x$ when following $\langle v_i,  z\rangle$ by first going from $v_i$ to vertex $z$ and then to $v_d$. Note that $\langle v_i,  z\rangle$ may be an edge or a T-path.  
Since the heuristic function never ignores any possibility leaving from $v_i$, the function never underestimates the probability and thus is \emph{admissible}. 
More specifically, we define $H(v_i, z, x)$ as the sum, over all possible costs $k\in[1, x]$, of the products of the probability that it takes $k$ units from $v_i$ to vertex $z$, i.e., $\mathbb{W}(\langle v_i,  z\rangle).\mathit{pdf}(k)$, and the probability that it takes at most $x-k$ units to travel from $z$ to destination $v_d$, i.e., $U(z, x-k)$.

\subsubsection{Heuristic Tables}
As we aim at providing $U(v_i, x)$ at a fine granularity, we need to consider different values for $x$. To this end, we define the finest budget granularity $\delta$ and then consider multiple budget values $\delta$, $2\cdot \delta$, $\ldots$, and $\eta \cdot \delta$ for $x$. The largest value  $\eta \cdot \delta$ ensures that $U(v_i, \eta\cdot \delta)=1$ for all vertices $v_i$ given a destination vertex $v_d$. 
Then, we can represent function $U(v_i, x)$ as a \emph{heuristic table} with $|\mathbb{V}|$ rows and $\eta$ columns. 
The cell in the $i$-th row and $j$-th column represents value  $U(v_i, j \cdot \delta)$. 

Table~\ref{tbl:U} shows an example heuristic table for destination vertex $v_d$ based on the graph shown in Figures~\ref{fig:origin_graph} and~\ref{fig:pace_graph}.
Here, $\delta$ is set to 3, and the largest budget is 36 because no matter at which vertex we start, the probability that we reach destination $v_d$ within 36 units is 1. 
The blue cell in the row of vertex $v_5$ and the column of budget $x=15$ denotes $U(v_5, 15)=0.5$, meaning that the largest probability of reaching $v_d$ from $v_5$ within time budget 15 is 0.5.  

The binary heuristic can also be represented as a heuristic table. 
Specifically, for vertex $v_i$, we identify the smallest budget value in the table that is larger than or equal to $v_i.\mathit{getMin}()$, say $j \cdot \delta$. 
Then, in the row for $v_i$, the first $j-1$ cells are 0, and the remaining cells are 1. 
For example, Table~\ref{tbl:Ubinary} shows the binary heuristic for vertex $v_1$ in a heuristic table with $\delta=3$. 
 
\begin{table}[!htp]
	\centering
	\scriptsize
	\begin{tabular}{|c|c|c|c|c|c|c|c|c|c|c|c|c|} \hline
		x& 3&6&9&12&15&18&21&24&27&30&33&36 \\\hline	\hline
		$v_s$	& 0&0&0&0&0&0&0&0&0&0.18&0.91&1 \\\hline	
		$v_1$	& 0&0&0&0&0&0&\cellcolor{green!50}0.12&1&1&1&1&1 \\\hline	
		$v_2$	& 0&0&0&0&0.6&1&1&1&1&1&1&1 \\\hline	
		$v_3$	& 0&0&1&1&1&1&1&1&1&1&1&1 \\\hline	
		$v_4$	& 0&0&0&0&0&0&0&0&0&0.6&1&1 \\\hline	
		$v_5$	& 0&0&0&0&\cellcolor{blue!35}0.5&1&1&1&1&1&1&1 \\\hline	
		$v_6$	& 0&1&1&1&1&1&1&1&1&1&1&1 \\\hline
		$v_d$	& 1&1&1&1&1&1&1&1&1&1&1&1 \\\hline			
	\end{tabular}
        \vspace{0.5em}
	\caption{Heuristic Table of $U(v_i, x)$, Destination  $v_d$}\label{tbl:U}
 \vspace{-3.5em}
\end{table}

\begin{figure*}[h]
    \centering
    \centering
	\subfigure[Computing $U(v_i, x)$]{
		\centering
			\includegraphics[scale=0.425]{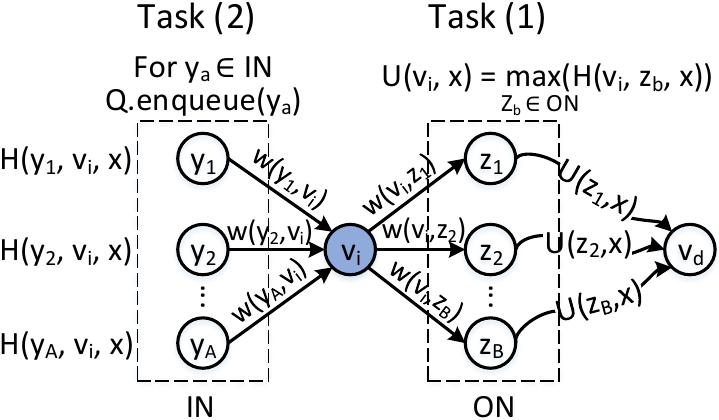}
	}
	\subfigure[Computing $U(v_d, x)$]{
		\centering
			\includegraphics[scale=0.425]{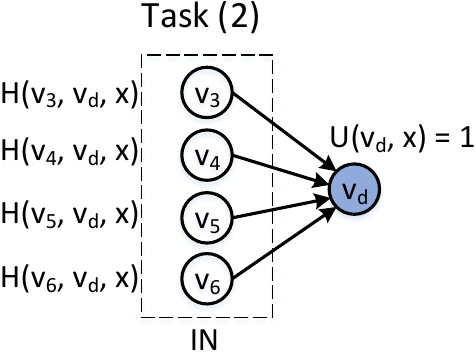}
	}
	\subfigure[Computing $U(v_3, x)$]{
		\centering
			\includegraphics[scale=0.425]{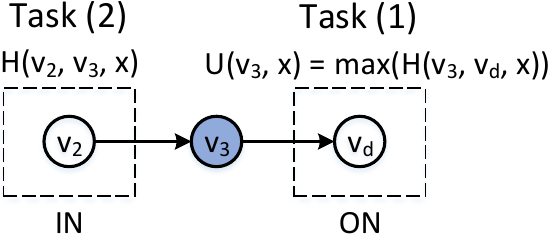}
	}
	\subfigure[Computing $U(v_4, x)$]{
		\centering
			\includegraphics[scale=0.425]{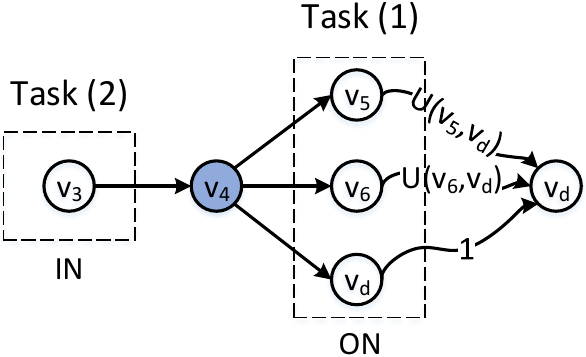}
	}
    \vspace{-1.5em}
    \caption{Computing Budget Specific Heuristic: (a) Provides an Abstraction of the Heuristic Computation for any Vertex $v_i$; (b), (c) and (d) Show Concrete Heuristic Computations for Vertices $v_d$, $v_3$, and $v_4$, Respectively}
    \vspace{-1.2em}	
    \label{fig:policy}
\end{figure*}

\vspace{-1em}
\begin{table}[!htp]
	\centering
	\scriptsize
	\begin{tabular}{|c|c|c|c|c|c|c|c|c|c|c|c|c|} \hline
		x& 3&6&9&12&15&18&21&24&27&30&33&36 \\\hline	\hline
		$v_1$	& 0&0&0&0&0&0&\cellcolor{green!50}1&1&1&1&1&1 \\\hline	
	\end{tabular}
        \vspace{0.5em}
	\caption{Heuristic Table of $v_1$, Binary Heuristics, Destination $v_d$}
	\label{tbl:Ubinary}
 \vspace{-4em}
\end{table} 

The first 6 cells are 0, and the remaining cells are 1. 
This is because $7\cdot \delta=21$ is the first budget value that exceeds $v_1.\mathit{getMin}()=19$ (see Figure~\ref{fig:heuristic_graph}(b)). 
If we compare the two rows of $v_1$ in the budget-specific vs. 
binary heuristic tables, we see that $U(v_1, 21)=0.12$ in the former, which is $U(v_1, 21)=1$ in the latter. 
This suggests that the budget-specific heuristic is able to provide finer-granularity probabilities than the binary heuristic, which estimates the probability of an intermediate vertex $v_i$ reaching $v_d$ within the time budget more accurately. Thus, budget-specific heuristic helps path exploration such that more promising candidate paths can be  explored first.

\subsubsection{Instantiating Heuristic Tables}

Efficiently instantiating a budget-specific heuristic table is non-trivial. 
A naive and computationally expensive approach is to compute according to Eq.~\ref{eqn_U} for each of $|\mathbb{V}|\cdot \eta$ cells for each destination, i.e., for each of $|\mathbb{V}|^2\cdot \eta$ cells in the heuristic table.

We make two observations that can speed up the heuristic table instantiation. 
First, for each row, two budget values $l$ and $s$ exist, where $l$ is the smallest budget value whose cell is larger than 0 and $s$ is the smallest budget values whose cell is 1. 
Thus, all budget values smaller than $l$ should have cells of 0, and all budget values larger than $s$ should have cells of 1. 
In other words, the cells to the left of $l$'s cell are 0, and the cells to the right of $s$'s cell are 1. 
For example, in Table~\ref{tbl:U}, for $v_5$, we have $l=15$ and $s=18$, and cells to the left of 15 are 0 and the cells on the right of 18 are 1. 
This observation suggests that for each vertex, we need to identify the two budget values $l$ and $s$ and compute values only for the cells between $l$ and $s$. 

Second, all cells for destination vertex $v_d$ are 1, since the probability of going from the destination vertex to itself within any budget is always 1. 
Thus $U(v_d, x)=1$ no matter what $x$ is. 
This motivates us to compute according to Eq.~\ref{eqn_U} starting from the destination $v_d$. 

\begin{algorithm}[h]
	\caption{BudgetSpecificHeuristics}
	\label{algo:policy}
	\begin{small}
		\KwIn{$\mathcal{G}^{p}=(\mathbb{V}, \mathbb{E}, \mathbb{P}, \mathbb{W})$, %
			$v_d$, $\delta$, $\eta$, $\mathbb{U}\leftarrow \emptyset$;}
		\KwOut{Heuristic table $\mathbb{U}$;} %
		Queue $Q \leftarrow \emptyset$;\\  %
		$Q.\mathit{enqueue}(v_d)$;\\		
		\While{$Q \neq \emptyset$}
		{
			$v^* \leftarrow Q.dequeue()$;%
			 $/\ast$ \texttt{Perform Task (1)} $\ast/$\\
			\If{$v^*=v_d$}
			{				
				\For{each $i$ from 1 to $\eta$}
				{
				    $\mathbb{U}(v^*, i\cdot \delta) \leftarrow 1$ \\%, where $x=[0, \delta, 2\delta, \ldots, \eta\cdot\delta]$;\\
				}
			}
			\Else
			{
				$ComputeOneRowU(\mathcal{G}^{p}, v^*, \delta, \eta, \mathbb{U})$;\\
			}
			$/\ast$ \texttt{Perform Task (2)} $\ast/$\\
			\For{each vertex $v$, where $\langle v, v^*\rangle \in\mathbb{E}\cup \mathbb{P}$} 
			{
				\If{The row for vertex $v$ in $\mathbb{U}$ is still empty} 
				{
					$Q.\mathit{enqueue}(v)$;		\\	  
				}  
			}
		}
		\Return $\mathbb{U}$; 
	\end{small}
\end{algorithm}

\begin{algorithm}[h]
	\caption{ComputeOneRowU}
	\label{algo:computeU}
	\begin{small}
		\KwIn{$\mathcal{G}^{p}=(\mathbb{V}, \mathbb{E}, \mathbb{P}, \mathbb{W})$, %
			$v_i$, $\delta$, $\eta$, $\mathbb{U}$;}
		$l \leftarrow$ the smallest budget that is no less than $v_i.\mathit{getMin}()$; \\
		$s \leftarrow \eta\cdot\delta$;\\ %
		\For{each each budget $x$ that is smaller than $l$}
		{
		    $U(v_i, x)\leftarrow 0$; $/\ast$ \texttt{left cells of $l$ are all 0} $\ast/$\\
		}
		$/\ast$ \texttt{Compute} $U(v_i, x)$ \texttt{using Eq.~\ref{eqn_U}} $\ast/$\\
		\For{each vertex $z$, where $\langle v_i, z\rangle \in\mathbb{E}\cup\mathbb{P}$,}
		{
			\If{The row for vertex $z$ in $\mathbb{U}$ is still empty} 
			{
				$ComputeOneRowU(\mathcal{G}^{p}, z, \delta, \eta,  \mathbb{U})$;\\
			}
			$x \leftarrow l$; \\ %
			\While{$x < \eta \cdot \delta$} %
			{
			    $H(v_i, z, x)\leftarrow 0$;\\
				\For{each cost $c\in \mathbb{W}(\langle v_i,z\rangle)$}
				{
					
					$H(v_i, z, x) \leftarrow H(v_i, z, x) + \mathbb{W}(\langle v_i,z\rangle).\mathit{pdf}(c)\cdot U(z, x-c)$;\\
					
				}
				
				\If{$H(v_i, z, x)<1$}
				{
				$x \leftarrow x + \delta$; \space\space\space $/\ast$ \texttt{next time budget} $\ast/$\\
				}
				\Else
				{
				    \Break;\\
				}
				
			}
			$s \leftarrow \min(s,x)$;\\ %
		}
		\For{each budget $x \in [l, s]$}
		{
			$U(v_i, x) \leftarrow \max_{\langle v_i, z\rangle\in\mathbb{E}\cup\mathbb{P}}(H(v_i, z, x))$;\\
		}	
		\For{each budget $x>s$}
		{
			$U(v_i, x) \leftarrow 1$;\\
		}
	\end{small}
\end{algorithm}

Based on the above two observations, we propose Algorithm~\ref{algo:policy} to instantiate a heuristic table for a given destination $v_d$. 
Algorithm~\ref{algo:policy} takes as input graph $\mathcal{G}^{p}$ and destination $v_d$, as well as $\delta$ and $\eta$ that specify different budget values. 
Algorithm~\ref{algo:policy} returns heuristic table $\mathbb{U}$, where each row includes the heuristic function values $U(v_i, x)$ for a vertex, as exemplified by Table~\ref{tbl:U}. To enable compact storage, we only store budget values between $l$ and $s$. The search strategy in Algorithm~\ref{algo:policy} is shown in Figure~\ref{fig:policy}. 

We use a FIFO queue to maintain the vertices. First, we insert the destination $v_d$. 
We iteratively check each vertex in the queue until the queue is empty. For each iteration, we remove a vertex, say $v_i$, from the queue and perform two tasks---(1) compute $U(v_i, x)$ using Eq.~\ref{eqn_U} and the above two observations and (2) insert into the queue all incoming vertices that are connected to $v_i$ by an edge or a T-path and that are not yet computed in the heuristic table (see Figure~\ref{fig:policy}(a)).

 % \vspace{-1em}

\noindent
\textbf{Task (1): }
Based on the second observation, if $v_i$ is the destination vertex $v_d$, $U(v_i, x)=1$ for all budget value $x$. 
Otherwise, according to the first observation, we need to identify the two boundary budget values $l$ and $s$. 
Here, the $l$ value should be the smallest budget value that is no less than $v_i.\mathit{getMin}()$ from the binary heuristics. 
This is because $v_i.\mathit{getMin}()$ represents the least travel cost from $v_i$ to $v_d$, and for any budget value that is smaller than $v_i.\mathit{getMin}()$, it is impossible to reach the destination within the budget. 
Then, we start computing $U(v_i, x)$, where $x$ starts from $l$ and each time increases by $\delta$. 
Once we find that $U(v_i, x)$ is 1, we set the current $x$ to $s$ and we do not need to compute for budgets that are larger than $s$ because they must be 1 as well. 

Then, we consider vertex set $ON$ that includes $v_i$'s outgoing neighbor vertices (see Figure~\ref{fig:policy} (a)) to compute $U(v_i, x)$ using Eq.~\ref{eqn_U}: 
\vspace{-1em}
\begin{equation}
    \begin{aligned}
        U(v_i, x)&=\max_{z\in ON} H(v_i, z, x)\\
        &=\max_{z\in ON} \sum_{c=1}^x W(\langle v_i, z \rangle).\mathit{pdf}(c) \cdot U(z, x-c).
    \end{aligned}   
\end{equation}
The first term $W(\langle v_i, z \rangle).\mathit{pdf}(c)$ is available from graph $\mathcal{G}^{p}$, no matter $\langle v_i, z \rangle$ is an edge or a T-path. 
We use dynamic programming to compute the second term $U(z, x-c)$. As we start searching from the destination, vertex $z$'s row in the heuristic table often has already been computed, we may get $U(z, x-c)$ directly from the table. Otherwise, we recursively apply the same equation to compute $U(z, x-c)$.

\noindent
\textbf{Task (2): }
We perform the second task using $IN$ that includes $v_i$'s incoming neighbor vertices (see Figure~\ref{fig:policy} (a)). 
For each vertex $y_a\in IN$, we insert the vertex into the queue $Q$. In addition, we now can easily compute $H(y_a, v_i, x)=\sum_{c=1}^x W(\langle y_a, v_i \rangle).\mathit{pdf}(c) \cdot U(v_i, x-c)$ since the first term is available from graph $\mathcal{G}^{p}$ and since we have just computed $U(v_i, x)$ in the first task. 

Figure~\ref{fig:policy} (b), (c), and (d) illustrate the first three steps of computing the heuristic table for the graph shown in Figure~\ref{fig:pace_graph} for destination $v_d$. 
The first step starts from destination $v_d$ itself. Next, it explores $v_d$'s incoming neighbors $v_3$ and $v_4$.

The time complexity of Algorithm \ref{algo:policy} and \ref{algo:computeU} is $O(|\mathbb{V}|\cdot \eta^2\cdot |\mathit{out}|)$, where $\eta$ is the number of columns in the heuristic table and $|\mathit{out}|$ is the largest outdegree of a vertex. We apply dynamic programming to build the heuristics table using Eq.~\ref{eqn_ui}. The heuristics table has in total $\mathbb{V} \cdot \eta$ elements. To fill in an element in the table, we need to visit at most $|\mathit{out}| \cdot \eta$ other elements because we may visit up to $|\mathit{out}|$ rows (see $Z\in ON(v_i)$ in Eq.~\ref{eqn_ui}) and each row we may visit up to $\eta$ elements (see $\sum_{k=1}^x$ in Eq.~\ref{eqn_ui}). 

The space complexity is $O(|\mathbb{V}|^2\cdot \eta)$.  For each destination, we maintain a heuristic table taking $|\mathbb{V}| \cdot \eta$, therefore a total of $|\mathbb{V}| \cdot  |\mathbb{V}| \cdot \eta$.

\section{Virtual Path Based Routing}
\label{sec:vrouting}

Stochastic dominance based pruning is inapplicable in PACE because PACE captures cost dependencies and does not assume independence.
More specifically, the cost distribution of a path is computed using the T-path assembly operation $\diamond$ (c.f. Eq.\ref{eqn_pace}), not the convolution operation $\oplus$. 
To enable stochastic dominance based pruning, we propose to build V-paths, such that the cost distribution of a path can be computed by the convolution of cost distributions of V-paths, T-paths, and edges. 
% \vspace{-0.5em}
\subsection{Building Virtual Paths}
\label{subsec:vpath}

We use a concrete example to illustrate the intuition of introducing V-paths. 
The example in Figure~\ref{fig:buildingvpath} includes T-paths $p_1$, $p_2$, and $p_3$. 
Given a path $P=\langle e_1, e_2, e_3, e_4, e_5 \rangle$, its joint distribution is computed by $D_J(P)=p_1 \diamond p_2 \diamond p_3 \diamond \langle e_5 \rangle$ according to Eq.~\ref{eqn_pace}. 
Then, we derive the cost distribution $D(P)$ from the joint distribution using a simple transformation as shown in Table~\ref{tbl:PACE_weight}.

As $p_3$ and $\langle e_5 \rangle$ do not overlap, we have 
\begin{equation}
  \label{eq:vpath}
D_J(P)=\frac{W_J(p_1)W_J(p_2)W_J(p_3)}{W(\langle e_2 \rangle)W_J(\langle e_3 \rangle)} W(e_5) = D_J(P') W(e_5),  
\end{equation}
where $P^\prime=\langle e_1, e_2, e_3, e_4 \rangle$ is a sub-path of $P$. %
This suggests that $P'$ and $e_5$ are independent. 
\begin{figure}[ht!]
    \centering
    \includegraphics[scale=0.42]{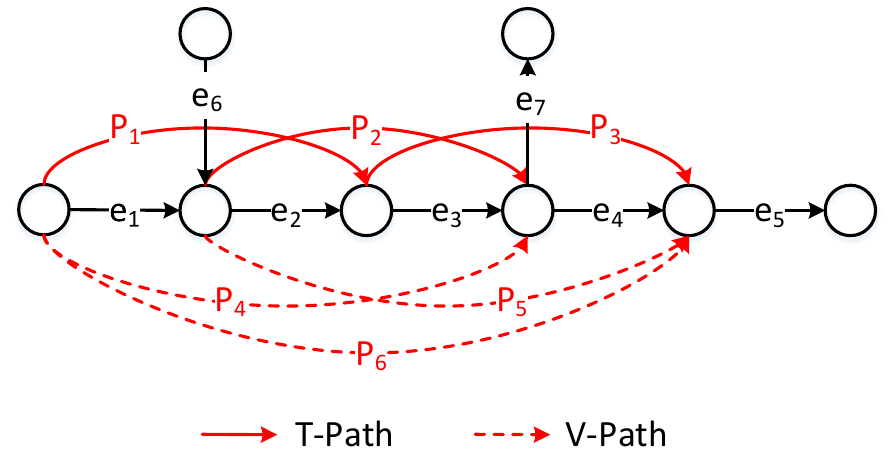}
    \vspace{-1em}
    \caption{Example of Building V-paths from T-paths}
    \label{fig:buildingvpath}
\end{figure}

Thus, we could apply the convolution operation to compute the cost distribution of $P$: $D(P)=D(P^\prime) \oplus W(e_5)$, where 
the cost distribution $D(P^\prime)$ can be derived from the joint distribution of $D_J(P^\prime)$,  i.e., the fraction part in Eq.~\ref{eq:vpath}, using T-paths $p_1$, $p_2$, and $p_3$.

We call $P^\prime$ a \emph{V-path}. 
First, a V-path must have fewer than $\tau$ trajectories because, otherwise, it should already have been a T-path. 
Second, the distribution of a V-path needs to be computed from the T-path assembly operation $\diamond$ (see Eq.~\ref{eqn_pace}) using distributions of multiple T-paths. 

This example suggests that if we pre-compute the distribution of V-path $P^\prime$, we are able to use only convolution $\oplus$ to compute the distribution of $P$.
This motivates us to systematically identify all V-paths and pre-compute their distributions.
The intuition is that we move the time consuming online computation of Eq.~\ref{eqn_pace} offline by pre-computing the distributions of all V-paths. As  Eq.~\ref{eqn_pace} is used both online and offline, there is no accuracy loss.
This enables us to use only convolution to compute the distributions of candidate paths during routing, thus making it possible to use stochastic dominance based pruning in the PACE model. 

The idea is to first combine overlapping T-paths into V-paths and then recursively combine overlapping V-paths into longer V-paths.  

\noindent
\textbf{Combining Two T-Paths: }
In the first iteration, we combine overlapping T-paths into V-paths. 
If two T-paths overlap and the path underlying the two T-paths does not have a corresponding T-path, we combine the two T-paths into a V-path. 
For example, consider T-paths $p_1$ and $p_2$ in Figure~\ref{fig:buildingvpath} that overlap on edge $e_2$. 
The path underlying $p_1$ and $p_2$ is $\langle e_1, e_2, e_3 \rangle$, and no T-path exists that also corresponds to $\langle e_1, e_2, e_3 \rangle$. Thus, we build a V-path $p_4$ with joint distribution $D_J(p_4)=p_1\diamond p_2$. Similarly, we build V-path $p_5$ with joint distribution $D_J(p_5)=p_2\diamond p_3$.  

\noindent
\textbf{Combining Two V-Paths: }
In the second iteration, we combine overlapping V-paths into longer V-paths. 
For example, in Figure~\ref{fig:buildingvpath}, since the two V-paths $p_4$ and $p_5$ overlap, we build V-path $p_6$. 
The joint distribution of the new, longer V-path is also computed based on the corresponding T-paths. 
Here, $D_J(p_6)=p_1 \diamond p_2 \diamond p_3$. 
When combining two V-paths, we do not need to check if a T-path exists that corresponds to the same underlying path. 
The existence of the two V-paths implies that there are fewer than $\tau$ trajectories. And since the V-paths are sub-paths of the combined path, the combined path also cannot have more than $\tau$ trajectories. 
Thus, no T-path can exist for the combined path. 

We keep combining overlapping V-paths to obtain longer V-paths until there are no more overlapping V-paths to combine. 
Next, we observe that we need all V-paths rather than only the longest ones. 
Table~\ref{tbl:convvpatnv} shows that all V-paths $p_4$, $p_5$, and $p_6$ in Figure~\ref{fig:buildingvpath} contribute to computing the distribution of some path. 
So although we have a longer V-path $p_6$, we still need the two short V-paths $p_4$ and $p_5$. 

\begin{table}[!htbp]
	\centering
	\begin{tabular}{|c|c|} \hline
		Path & Distribution  \\\hline	\hline
		$\langle e_1, e_2, e_3, e_7\rangle$	& $D(p_4)\oplus W(e_7)$ \\\hline	
		$\langle e_6, e_2, e_3, e_4\rangle$	& $W(e_6)\oplus D(p_5)$ \\\hline	
		$\langle e_1, e_2, e_3, e_4, e_5\rangle$	& $D(p_6)\oplus W(e_5)$ \\\hline
	\end{tabular}
        \vspace{0.5em}
	\caption{Computing Path Distributions Using Only Convolution  }\label{tbl:convvpatnv}		
 \vspace{-2.5em}
\end{table} 

Given graph
$\mathcal{G^{P}}=(\mathbb{V}, \mathbb{E}, \mathbb{P}, \mathbb{W})$,
the time complexity for building V-paths is $O(|\mathbb{P}|\cdot |\mathbb{V}|)$. 

Recall that V-paths are generated iteratively. Each iteration can at most generate $|\mathbb{P}|$ new V-paths and we can at most have $|\mathbb{V}|$ iterations. 
In the first iteration, overlapping T-paths are combined into V-paths. Thus, the number of generated V-paths must be smaller than the number of T-paths because the generated V-paths must have higher cardinality than the corresponding T-paths and they both represent the same path. In the next iteration, the same principle applies. Thus, each iteration can at most generate $O(|\mathbb{P}|)$ new V-paths.  

Next, we explain why we can get most have $|\mathbb{V}|$ iterations. At each iteration, the cardinality of V-paths are at least one more than the V-paths in the previous iteration. In a graph with $|\mathbb{V}|$ vertices, the longest simple path (i.e., without loops) in the graph can at most have $|\mathbb{V}|$ vertices. Otherwise, there must be at least one vertex that appears twice in the path making a loop and thus the path is not a simple path anymore. Thus, we only need to perform at most $|\mathbb{V}|$ iterations as we need to ensure that V-paths are simple paths.  

Space complexity is $O(|\mathbb{P}|\cdot|\mathbb{V}|\cdot |\mathit{St}| )$ where $|\mathit{St}|$ is the storage used for the longest V-path.

\noindent
\textbf{Updated PACE graph:} 
After generating all V-paths, we define an updated PACE graph  $\mathcal{G}^{p^+}=(\mathbb{V}, \mathbb{E}, \mathbb{P^+}, {W^+})$, where $\mathbb{P^+}$ is a union of T-paths, i.e., $\mathbb{P}\in\mathcal{G}^p$, and the newly generated V-paths, and where weight function $W^+$ takes as input an edge and a T-path or a V-path and returns the total cost distribution. With V-paths, there is no need to maintain the joint distributions for T-paths and V-paths. 

By introducing V-paths, Lemma~\ref{lm:vpath} shows that the cost distribution of any path $P$ can be computed using convolution only, thus offering a theoretical foundation for using stochastic dominance based pruning in PACE. 
Put differently, by introducing V-paths, Eq.~\ref{eqn_pace} is no longer needed, and we have turned PACE into EDGE, where stochastic dominance-based pruning guarantees correctness. 

\begin{lemma}
	\label{lm:vpath}
Given any path $P$ in the updated PACE graph $\mathcal{G}^{p^+}$, the distribution of $P$ is computed by using convolution of the weights of edges, T-paths, and V-paths maintained in $\mathcal{G}^{p^+}$. 
\end{lemma}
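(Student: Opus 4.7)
The plan is to show that every path $P$ in the updated PACE graph $\mathcal{G}^{p^+}$ admits a decomposition $P = q_1 \cdot q_2 \cdots q_m$ in which each $q_i$ is either an edge in $\mathbb{E}$, a T-path in $\mathbb{P}$, or a V-path generated by the procedure in Section~\ref{subsec:vpath}, and such that consecutive components $q_i$ and $q_{i+1}$ share no edge. Once this structural claim is established, the note in Section~\ref{subsec:pace} that non-overlapping sequential components are assembled by convolution (since the $\diamond$ operator in Eq.~\ref{eqn_pace} collapses to $\oplus$ when $p_i \cap p_{i+1} = \emptyset$) immediately yields $D(P) = W^+(q_1) \oplus W^+(q_2) \oplus \cdots \oplus W^+(q_m)$, which is exactly the statement of the lemma.

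First I would apply the construction underlying Eq.~\ref{eqn_pace} to obtain the coarsest T-path sequence $\mathit{CPS}(P) = (p_1, p_2, \ldots, p_k)$ of $P$ in the original graph $\mathcal{G}^p$, with any uncovered edges treated as singleton components. Within this sequence I would identify the \emph{maximal runs} of mutually overlapping T-paths: inside each run, every adjacent pair shares at least one edge, while the boundary elements do not overlap with the surrounding components. The key observation is that the V-path construction in Section~\ref{subsec:vpath} is precisely designed to pre-combine such overlapping runs: iteration one combines overlapping T-paths (whose underlying path is not already a T-path) into V-paths, and subsequent iterations combine overlapping V-paths into longer ones, retaining all intermediate V-paths rather than only the maximal ones (as justified by Table~\ref{tbl:convvpatnv}).

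The main argument proceeds by induction on the length $r$ of a maximal overlapping run in $\mathit{CPS}(P)$. The base case $r = 1$ is immediate, since the single T-path is already an element of $\mathbb{P^+}$. For the inductive step, I would argue that combining a V-path covering a run of length $r$ with the next overlapping T-path produces a V-path whose joint distribution, by the recursive application of Eq.~\ref{eqn_pace}, equals the $\diamond$-assembly of all $r{+}1$ T-paths in the extended run, and that this V-path is exactly one of the outputs of the generation procedure because the cardinality of the combined path remains below $\tau$. Replacing each maximal run in $\mathit{CPS}(P)$ by its associated single V-path (or T-path, for runs of length one) therefore produces the desired decomposition $q_1 \cdots q_m$ in which consecutive components are disjoint, so $\diamond$ reduces to $\oplus$ on this decomposition while the total distribution still equals the original $D(P)$ computed by Eq.~\ref{eqn_pace}.

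The principal obstacle will be arguing rigorously that, for every path $P$ and every maximal overlapping run in $\mathit{CPS}(P)$, a corresponding element of $\mathbb{P^+}$ really is produced by the generation procedure, and that no V-path is erroneously generated when a T-path for the same underlying path already exists (the construction explicitly guards against this in the first iteration, but one must verify the guard is not needed in later iterations since by construction no such T-path can exist for a combined path of cardinality below $\tau$). A secondary subtlety is that the coarsest sequence is the one that yields the most accurate distribution, and I must argue that replacing a run by its pre-computed V-path does not change the outcome of Eq.~\ref{eqn_pace}; this follows because both quantities are defined by the same $\diamond$-assembly over the same underlying T-paths, so only the order of computation (offline versus online) differs, incurring no accuracy loss.
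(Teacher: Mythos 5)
Your proposal follows essentially the same route as the paper's proof: both split the coarsest T-path sequence $\mathit{CPS}(P)$ at the non-overlapping boundaries into maximal runs of mutually overlapping T-paths, identify each multi-element run with a pre-computed V-path (single elements remaining edges or T-paths), and then observe that the $\diamond$-assembly of pairwise non-overlapping components degenerates to a product of joint distributions, i.e., to convolution of the total-cost distributions. Your added induction on run length to certify that each run's V-path is actually produced by the generation procedure is a reasonable tightening of a step the paper asserts without elaboration; just note that the guard for later iterations rests on the combined path having fewer than $\tau$ \emph{trajectories} (inherited from its sub-paths), not on its cardinality being below $\tau$.
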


\begin{proof}
In the original PACE model, the distribution $D_J(P)$ of path $P$ is computed using Eq.~\ref{eqn_pace} on the coarsest T-path sequence $\mathit{CPS}(P)=(p_1, p_2, \ldots, p_m)$, where each $p_i$ is an edge or a T-path that maintains a joint distribution.
If adjacent $p_i$ and $p_{i+1}$, where $i\in[1,m-1]$, do not overlap, we split $\mathit{CPS}(P)$ such that $p_i$ and $p_{i+1}$ go to two different sub-sequences. 
Finally, $\mathit{CPS}(P)$ is split into multiple sub-sequences $(\mathit{cps}_1, \mathit{cps}_2, \ldots, \mathit{cps}_n)$, where each sub-sequence $\mathit{cps}_k = (p_j, \ldots, p_{j+x})$, with $k\in[1,n]$, $j\in[1,m]$, and $x\in[0,m-1]$, does not overlap any others. 
Here, a sub-sequence $\mathit{cps}_k$ can be an edge, a T-path, or multiple overlapping T-paths. When $x>0$, $\mathit{cps}_k$ includes multiple overlapping T-paths.
Since we already built V-paths for overlapping T-paths, there must be a V-path that corresponds to $\mathit{cps}_k$. 
Thus, $\mathit{cps}_k$ is an edge, a T-path, or a V-path.

Since there are no overlaps among different $\mathit{cps}_k$, Eq.~\ref{eqn_pace} yields $D_J(P)=\Pi_{k=1}^{n}W_J(\mathit{cps}_k)$, meaning that the edges, T-paths, and V-paths are independent of each other.  
Thus, the cost distribution $D(P)$, which is derived from $D_J(P)$ in the PACE model, is equivalent to $\oplus_{k=1}^{n}W^+(\mathit{cps}_k)$ that only involves convolution of items in the weight function $W^+(\cdot)$ in the updated PACE graph with V-paths. 
\end{proof}

\subsection{Routing Algorithm}

The V-paths introduced in the updated PACE graph $\mathcal{G}^{p^+}$ enable us to use stochastic dominance to prune non-competitive candidate paths as in the EDGE model. 
However, after adding the V-paths, the degrees of many vertices also increase. 
This indicates that when exploring an vertex, more candidate paths may be generated and examined. 
For example, consider the left most vertex in  Figure~\ref{fig:buildingvpath}. 
Before introducing the V-paths, its degree is 2 with edge $e_1$ and T-path $p_1$. 
After introducing the V-paths, its degree becomes 4 due to V-paths $p_4$ and $p_6$. 
Luckily, with the proposed search heuristics, we are able to choose the most promising candidate paths among many candidate paths. 
Thus, the increased degrees do not adversely affect the query efficiency. 

Algorithm~\ref{algo:stochastic} shows the final routing algorithm that uses the two proposed speedup techniques---search heuristics and V-paths. 

\begin{algorithm}[ht]
	\caption{V-PathBasedStochasticRouting}
	\label{algo:stochastic}
	\begin{small}
		\KwIn{
		PACE Graph $\mathcal{G}^{p^+}$, source $v_s$, destination $v_d$, departure time $t$, budget $B$;}
		\KwOut{Path $P^*$;}

		Priority queue: $Q \leftarrow \emptyset$; \\
		
		\For{each vertex $v_i$ that is connected from $v_s$ by an edge, a T-path or a V-path
		}
		{
			${P}_i.\mathit{path} \leftarrow \mathit{tracePath}(v_s,v_i)$;\\
			$D({P}_i) \leftarrow W^+(v_s,v_i)$;\\
			\If{$D({P}_i).\min + v_i.\mathit{getMin}() \leqslant B$}
			{
				${P}_i.\mathit{maxProb} \leftarrow \mathit{maxProb}({P}_i, B)$; \space\space\space\space
				$/\ast$ \texttt{Eq.}~\ref{eqn_est} $\ast/$\\
				$Q.\mathit{add}(P_i)$;\\
			}
		}		
		\While{$Q \neq \emptyset$}
		{
			$\hat{P} \leftarrow Q.peek()$;\\
			$v \leftarrow$ last vertex in $\hat{P}$;\\
			\If{$v=v_d$}
			{
				$P^* \leftarrow \hat{P}$; \\
				\Break; 
			}
			\For{each vertex $u$ that is connected from $v$ by an edge, a T-path or a V-path abd $u$ has not appeared in ${P}.\mathit{path}$}
			{
				$c \leftarrow D({P}).\min + W^+(v,u).\min$;\\
				\If{$c + u.\mathit{getMin}() \leqslant B$}
				{
					${P}'.\mathit{path} \leftarrow {P}.\mathit{path} + \mathit{tracePath}(v, u)$; \\
					$D({P}') \leftarrow D({P}) \oplus W(v, u)$; $/\ast$ \texttt{Use convolution to compute a distribution.} $\ast/$
					\\
					${P}'.\mathit{maxProb} \leftarrow \mathit{maxProb}({P}', B)$; \space\space\space\space
					$/\ast$ \texttt{Eq.}~\ref{eqn_est} $\ast/$\\		
					\If{$\mathit{Prune}({P}', Q)$}
					{
						$Q.add({P}')$;
					}			
				}				
			}
		}
		\Return $P^*$;%
	\end{small}
\end{algorithm}

Given a PACE graph $\mathcal{G}^{p^+}$, for each candidate path ${P}_i$ that departures at time $t$ and connects source $v_s$ to an intermediate vertex $v_i$, 
we maintain two attributes.  The first, ${P}_i.\mathit{path}$, refers to the underlying path, i.e., a sequence of edges. 
For example, the $\mathit{path}$ attribute of $\langle p_4, e_4\rangle$ in Figure~\ref{fig:buildingvpath} is $\langle e_1, e_2, e_3, e_4\rangle$ since $p_4$ is a V-path, which is a sequence of edges in the road network. 
We maintain this attribute to avoid cycles in candidate paths. 

The second attribute ${P}_i.\mathit{maxProb}$ is the maximum probability of reaching $v_d$ within cost budget $B$ when following ${P}_i$ to $v_i$ and then proceeding to $v_d$. 
This probability can be computed using either the proposed binary heuristic or the budget-specific heuristic. 

We use a priority queue to maintain all candidate paths, where the priority is according to the $\mathit{maxProb}$ attribute of each candidate path. 
In each iteration, we explore the path $P$ with the largest $\mathit{maxProb}$ attribute, as it is the most promising candidate path (line 10). 
If this path $P$ has already achieved to the destination $v_d$, it is the path with the largest probability of arriving within budget $B$. 
This is because all the other candidates in the priority queue cannot have a larger probability as the heuristics are admissible (lines 12--13).

If path $P$ has not reached the destination, we extend the path with an adjacent edge, T-path, or V-path to get a new candidate path $P'$. 
The distribution of $P'$ is computed using convolution (line 18). 
Finally, we check stochastic dominance between $P'$ and each path $\hat{P}$ in $Q$ that also reaches $u$. 
If $P'$ stochastically dominates $\hat{P}$, we can safely prune $\hat{P}$ from $Q$. If $\hat{P}$ stochastically dominates $P'$, we can safely prune $P'$, i.e., not add $P'$ to $Q$. 
If eventually, no path in $Q$ stochastically dominates $P'$, we add $P'$ to $Q$ as a new candidate path (lines 20--21).

\section{Empirical Study}
\label{sec:exp}

\subsection{Experimental Setup}
\label{ssec:expsetup}

\noindent
{\bf Road Networks and GPS Trajectories:} 
We use two pairs of a road network and an associated GPS dataset: $N_1$ and $D_1$ covering Aalborg, Denmark, and $N_2$ and $D_2$ covering Xi'an, China, where $N_1$ and $N_2$ are obtained from OpenStreetMap. The sampling rates of the GPS records in $D_1$ and $D_2$ are 1~Hz and 0.2~Hz, respectively. We use travel times extracted from the GPS records as travel costs, and we utilize an existing tool~\cite{DBLP:conf/gis/NewsonK09} to map match $D_1$ and $D_2$ to $N_1$ and $N_2$. The mapped trajectories cover 23\% and 4\% of the edges in $N_1$ and $N_2$. The covered edges, typically main roads, show high uncertainty. The uncovered edges are typically small roads with low uncertainty. For these, we use speed limits to derive deterministic travel times. After detecting and filtering the abnormal data~\cite{DBLP:conf/icde/KieuYGCZSJ22,DBLP:conf/icde/KieuYGJZHZ22,davidpvldb}, statistics of the Aalborg and Xi'an data are shown in Table~\ref{table:data}.
\begin{table}[!htp]

\vspace{-0.5em}
\renewcommand{\arraystretch}{1.1}
\centering
\setlength{\tabcolsep}{1mm}{\begin{tabular}{c c c}
\hline
&Aalborg& Xi'an\\
\hline
Number of vertices&32,226& 117,415\\
Number of edges&78,348& 236,733\\
AVG vertex degree&2.43& 2.02\\
AVG edge length (m)&172.85& 58.36\\
\hline
Number of traj.&553,904& 363,308\\
AVG number of vertices per traj.&5.41& 27.05\\
\hline
\end{tabular}}
\caption{Data Statistics}
\vspace{-3em}
\label{table:data}
\end{table}

We conduct the empirical analysis using five-fold cross validation. Specifically, the trajectory dataset is partitioned evenly into five disjoint groups. Each time, four groups are used for training, and the remaining group is used as the testing set. This process is repeated five times so that each group is used as the testing set.

\noindent 
{\bf Stochastic Routing Queries:} 
%Recall that 
A stochastic routing query takes three parameters: a source, a destination, and a travel time budget. 
We select source-destination pairs from the testing set to obtain meaningful source-destination pairs. %
We categorize the pairs into groups based on their Euclidean distances (km): (0, 5], (5, 10], (10, 25], and (25, 35]. 
We ensure that each category has at least 90 pairs. 
Note that 35 km is a long distance in a city. 
Following existing studies~\cite{DBLP:journals/pvldb/PedersenYJ20}, we focus on intra-city routing because cities often have uncertain traffic and have many alternative paths between a source and a destination. For inter-city travel, there is often limited choices, e.g., using vs. not using highways. We leave the support for country-level, inter-city routing as future work.

Next, we generate meaningful travel time budgets for the source-destination pairs. This is important because too small budgets result in many paths having probability 0, while too large budgets result in all paths having probability 1.  
For each source-destination pair, we run Dijkstra's algorithm based on a road network with expected travel times as edge weights. 
This yields a path with the least expected travel time $\bar{t}$. 
Then, we set 5 time budgets that correspond to 50\%, 75\%, 100\%, 125\%, and 150\% of $\bar{t}$, respectively, meaning that the time required to travel between the source-destination pair will not stay beyond the range of [50\% $\bar{t}$, 150\% $\bar{t}$]. This enables us to evaluate a range of meaningful time budgets. 

\noindent 
{\bf Routing Algorithms:} 
We consider a baseline routing algorithm in the PACE model~\cite{DBLP:journals/vldb/YangDGJH18} called \emph{T-None} (cf. Section~\ref{subsec:probdef}) that does not use any heuristics to estimate the cost from intermediate vertices to the destination and that does not use any V-paths.
Next, we consider routing algorithms using only T-paths with different heuristics:
\begin{itemize}[leftmargin=3.5mm]
    \item \emph{T-B-EU}: Binary heuristic using Euclidean distance divided by the maximum speed limit in the road network to derive $v_i.\mathit{getMin}()$;
    \item \emph{T-B-E}: Binary heuristic using shortest path trees derived from only edges. 
    \item \emph{T-B-P}: Binary heuristic with shortest path trees derived from both edges and T-paths.
    \item \emph{T-BS-$\delta$}: Budget-specific heuristics using a specific $\delta$. 
\end{itemize}

Finally, we consider V-path based routing algorithms: \emph{V-None} uses no heuristics, and \emph{V-B-P} and \emph{V-BS-$\delta$} use the above heuristics.  

\begin{figure*}[!h]
	\centering
	\subfigure[Numbers of T-Paths]{
	\centering
			\includegraphics[height=0.14\linewidth]{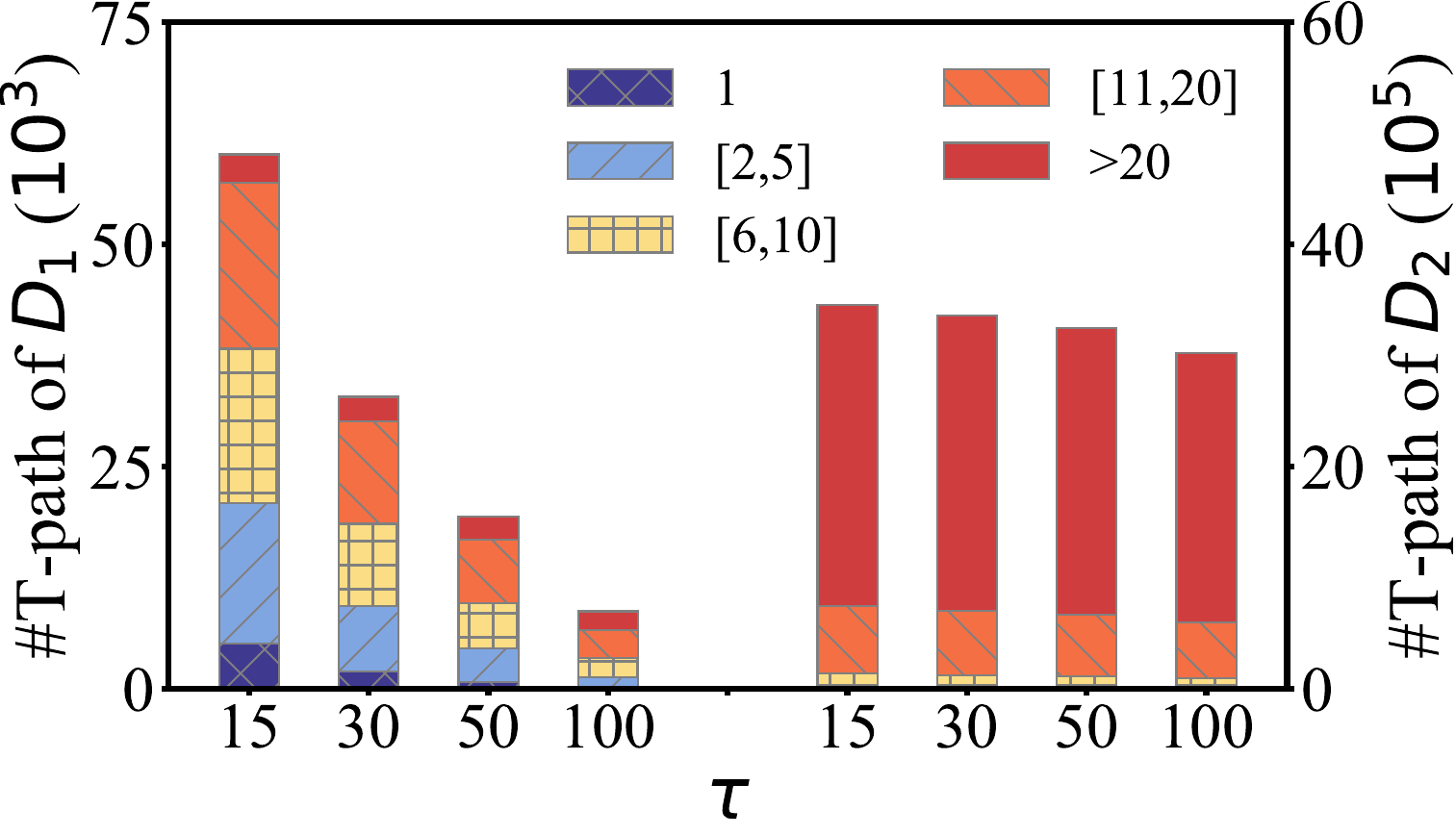}
	}
	\subfigure[Accuracy]{
	\centering
			\includegraphics[height=0.148\linewidth]{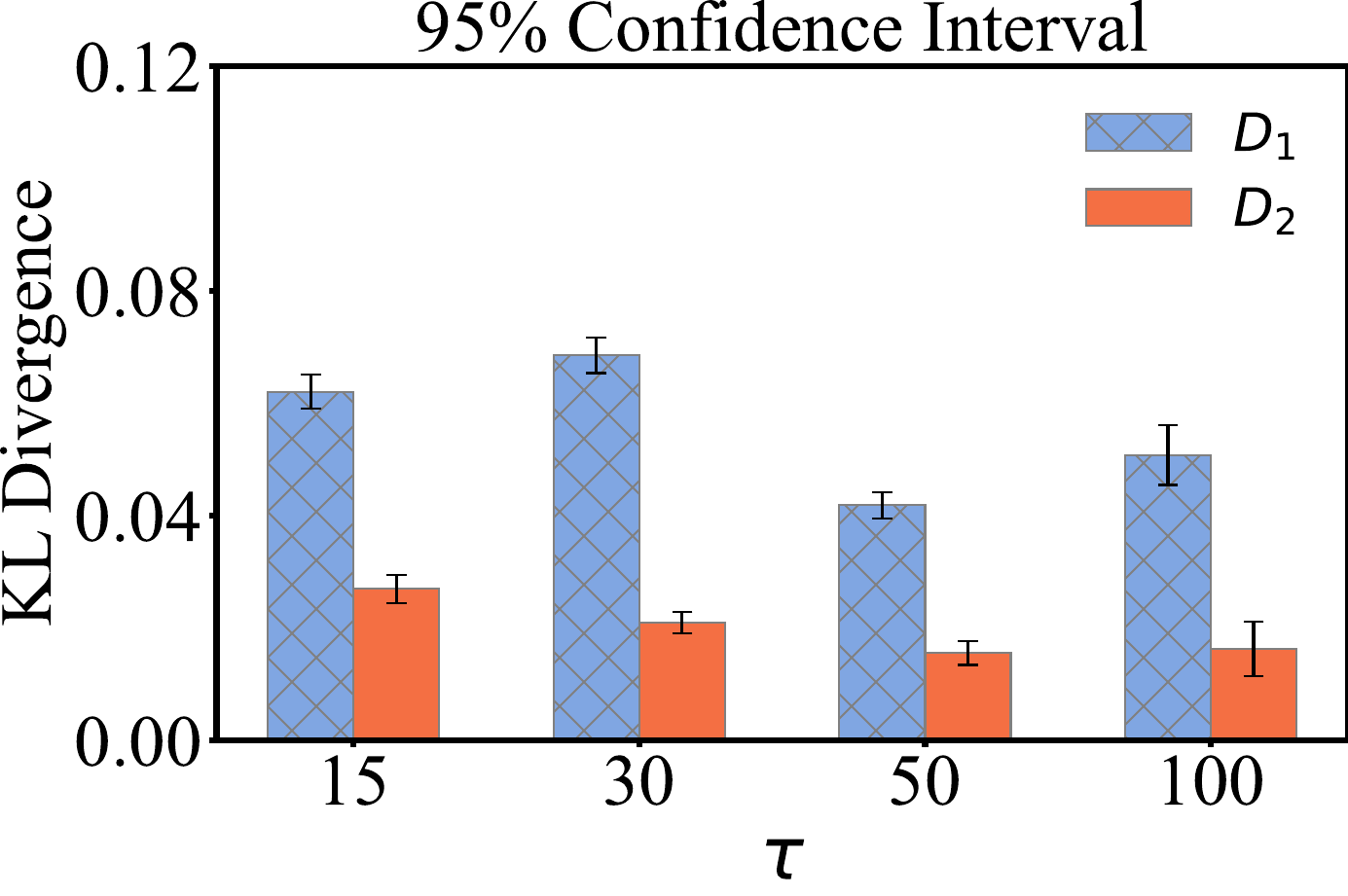}
	}
	\subfigure[Number of V-paths]{
	\centering
			\includegraphics[height=0.14\linewidth]{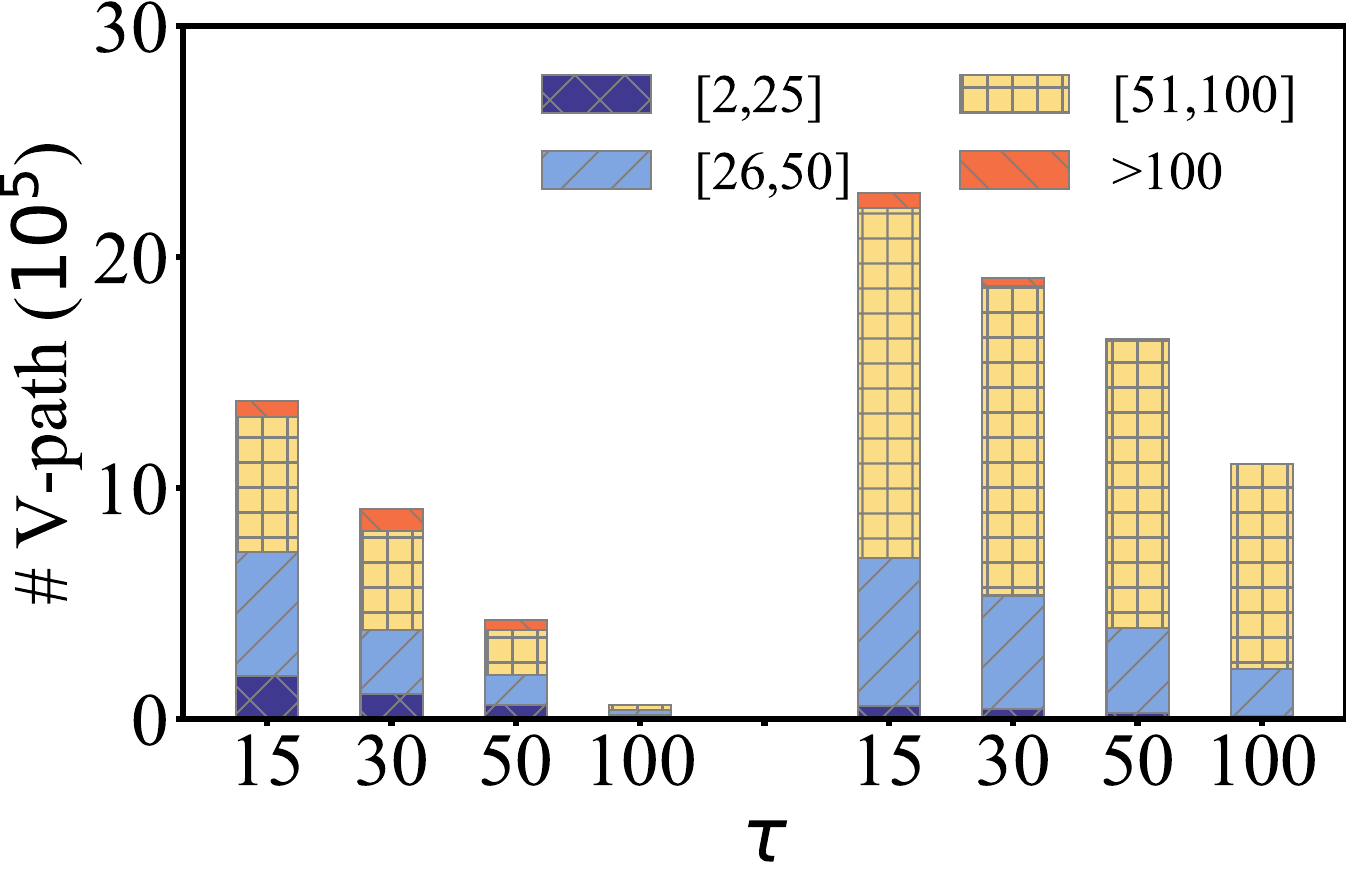}
	}
	\subfigure[Runtime and Out-Degrees]{
	\centering
			\includegraphics[height=0.14\linewidth]{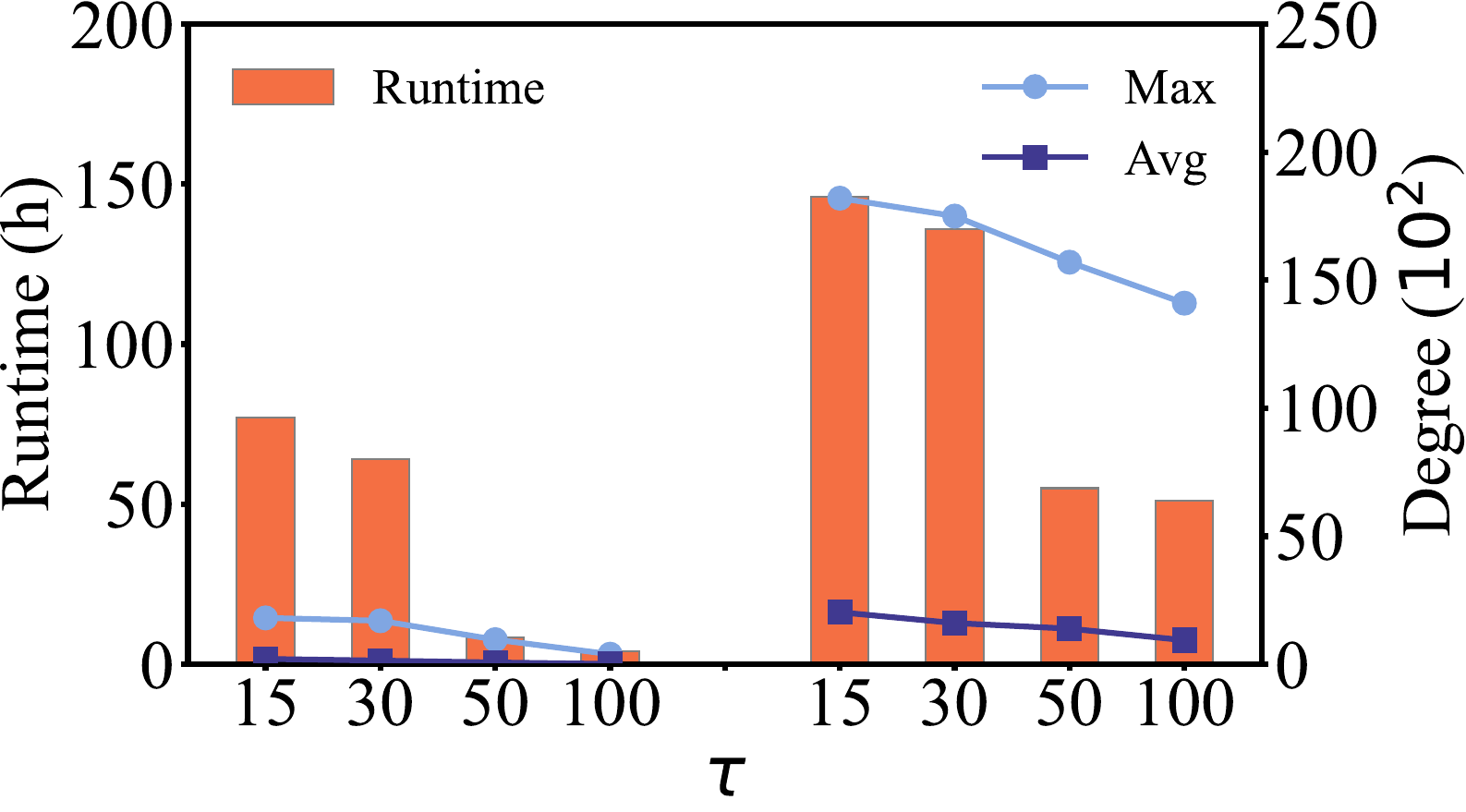}
	}
	\vspace{-1.5em}
	\caption{Effect of Varying $\tau$: In (a), (c), and (d), the First Four and last Four $\tau$ Values Correspond to $D_1$ and $D_2$, Respectively}
	\vspace{-10pt}
	\label{fig:exp_tau}
\end{figure*}

\noindent
{\bf Parameters:} 
We vary parameters $\tau$ among 15, 30, \textbf{50}, and 100, and $\delta$ among 30, \textbf{60}, 120, and 240, where $\tau$ is the trajectory threshold used when generating %
T-paths (cf. Section~\ref{subsec:pace}) and $\delta$ is the finest budget value  used in the budget-specific heuristics (cf. Section~\ref{subsec:policy}). Default values are shown in bold. % 
We conduct sensitivity analyses on the two parameters to identify the most appropriate values. %

\noindent 
{\bf  Evaluation Metrics and Settings:} 
We evaluate the runtime and space overhead needed for maintaining the different heuristics. 

\noindent
\emph{Comparison with the EDGE model: }
We do not compare the paths returned by the PACE models vs. the EDGE model, since a previous study describes their differences and shows the benefits of the PACE model over the EDGE model~\cite{DBLP:journals/vldb/YangDGJH18}. 
Thus, we focus on evaluating the routing algorithm with different heuristics in the PACE model. 

\noindent
\emph{Time-dependency: }
For each network, we build two PACE models using the trajectories from peak (7:00-8:30 and 16:00-17:30) vs. off-peak (others) hours.

\noindent
{\bf Implementation Details:} All algorithms are implemented in Python 3.7.3. All experiments are conducted on a server with a 64-core AMD Opteron 2.24 GHZ CPU and 528 GB main memory under Ubuntu 16.%

\vspace{-0.5em}
\subsection{Instantiating T-Paths and V-Paths}
\label{subsec:vpath_exp}
\noindent
{\bf Instantiating T-Paths:} If a path is traversed by more than $\tau$ trajectories, we instantiate a T-path for it. %
Figure~\ref{fig:exp_tau}(a) shows the numbers of T-Paths on $D_1$ and $D_2$, respectively, when varying $\tau$. %
We group T-paths according to their cardinalities, i.e., the numbers of edges they cover. When the cardinality is 1, a T-path is just one edge. %
Intuitively, a larger $\tau$ requires that more trajectories occurred on the T-paths. Thus a larger $\tau$ yields fewer T-paths. 
However, large $\tau$ also yields more accurate travel time distributions for the T-paths.  

Next, we evaluate the accuracy when estimating the path cost distributions using the T-paths with different $\tau$ values. 
To do so, we use the testing set (not used for instantiating T-paths), where a travel time distribution, i.e., the ground truth distribution, is instantiated for each path from the trajectories that traversed on it. 

\begin{figure}[!t]
\centering
  \centering
	\subfigure[$D_1$]{
	\centering
			\includegraphics[width=0.45\linewidth]{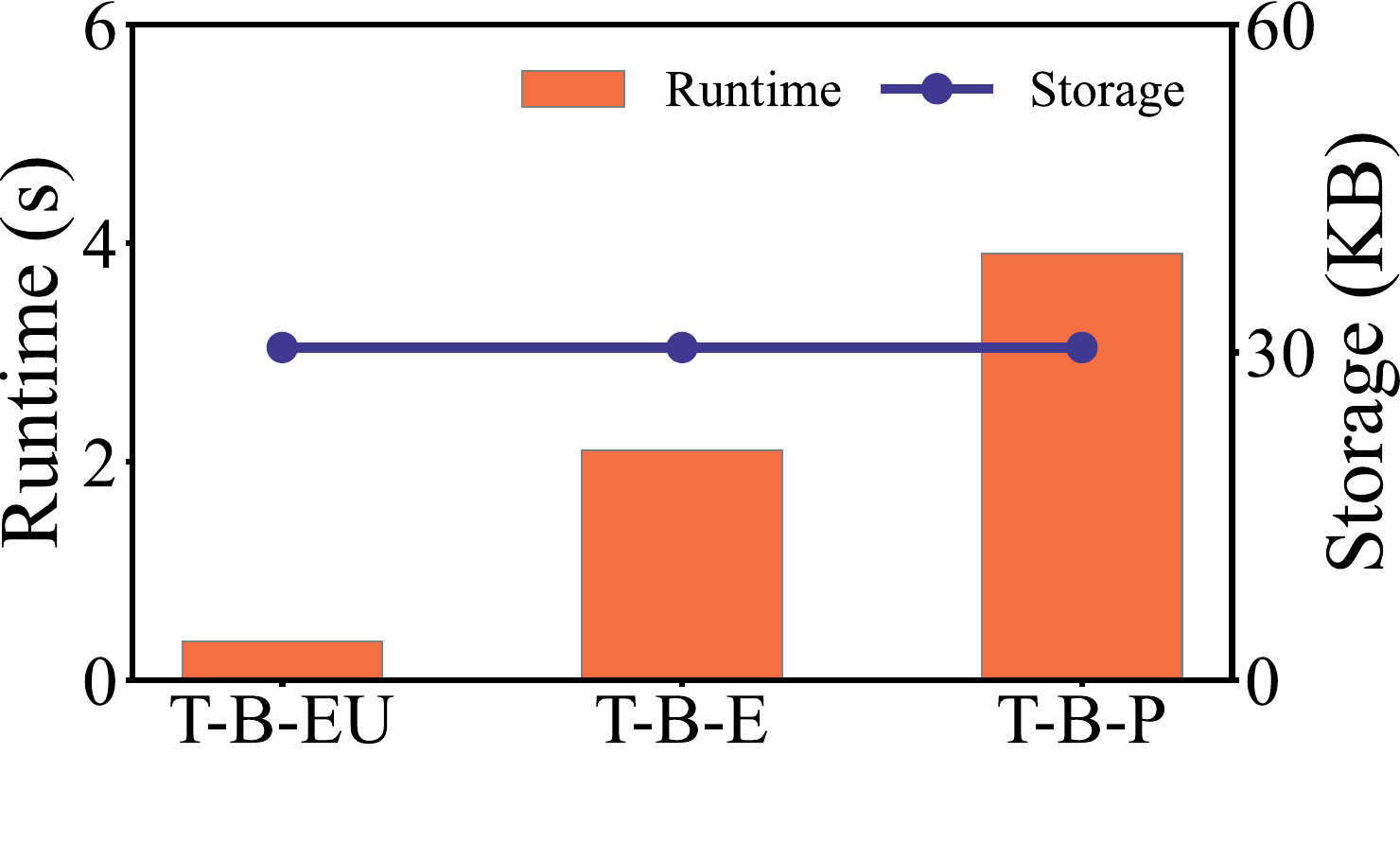}
	}
	\subfigure[$D_2$]{
	\centering
			\includegraphics[width=0.45\linewidth]{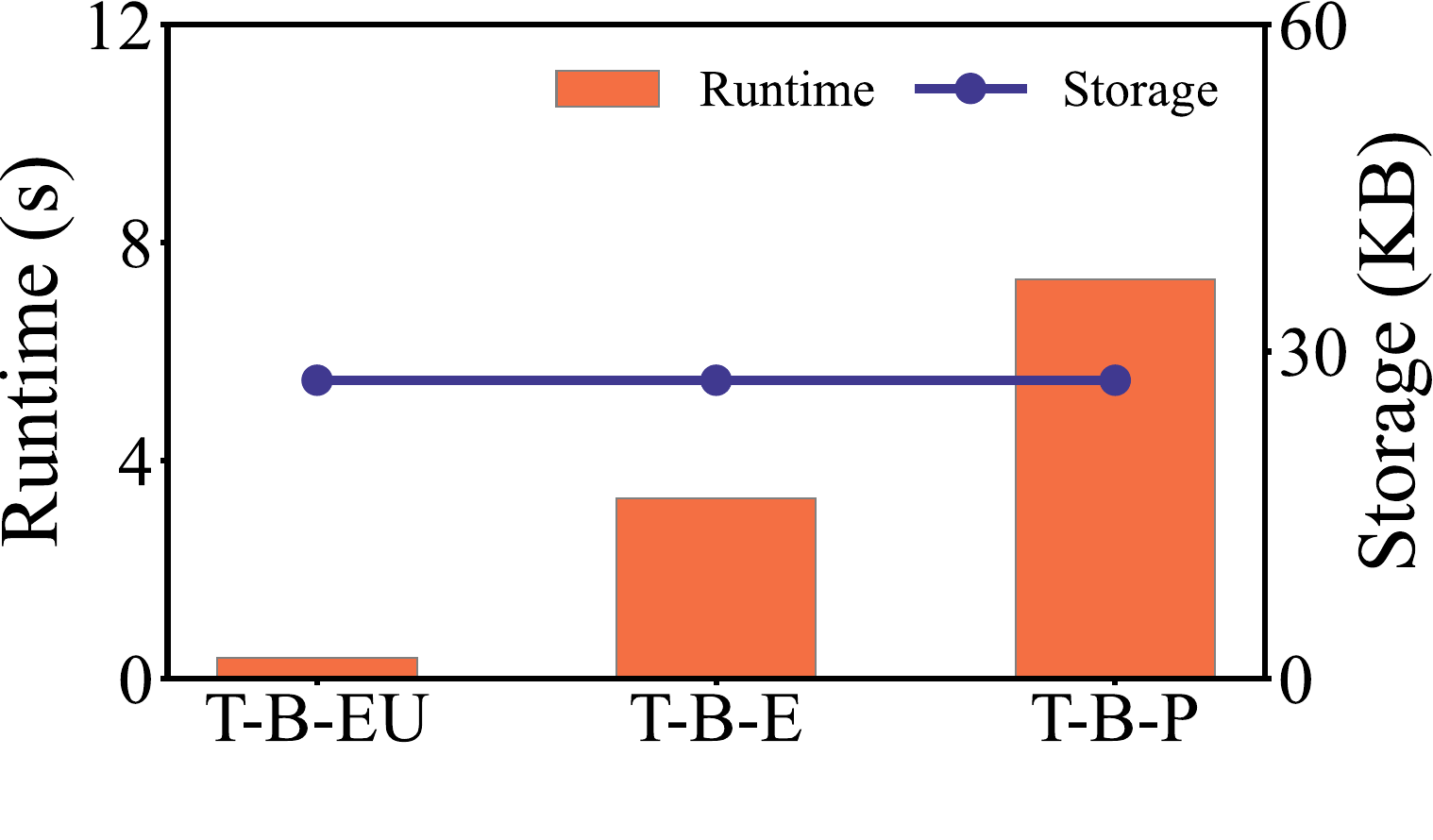}
	}
	\vspace{-1.5em}
    \caption{Building Offline Binary Heuristics}
	\label{fig:Binary-preprocessing}
    \vspace{-1.5em}
\end{figure}

\begin{figure}[!t]
\centering
  \centering
	\subfigure[$D_1$]{
	\centering
			\includegraphics[width=0.45\linewidth]{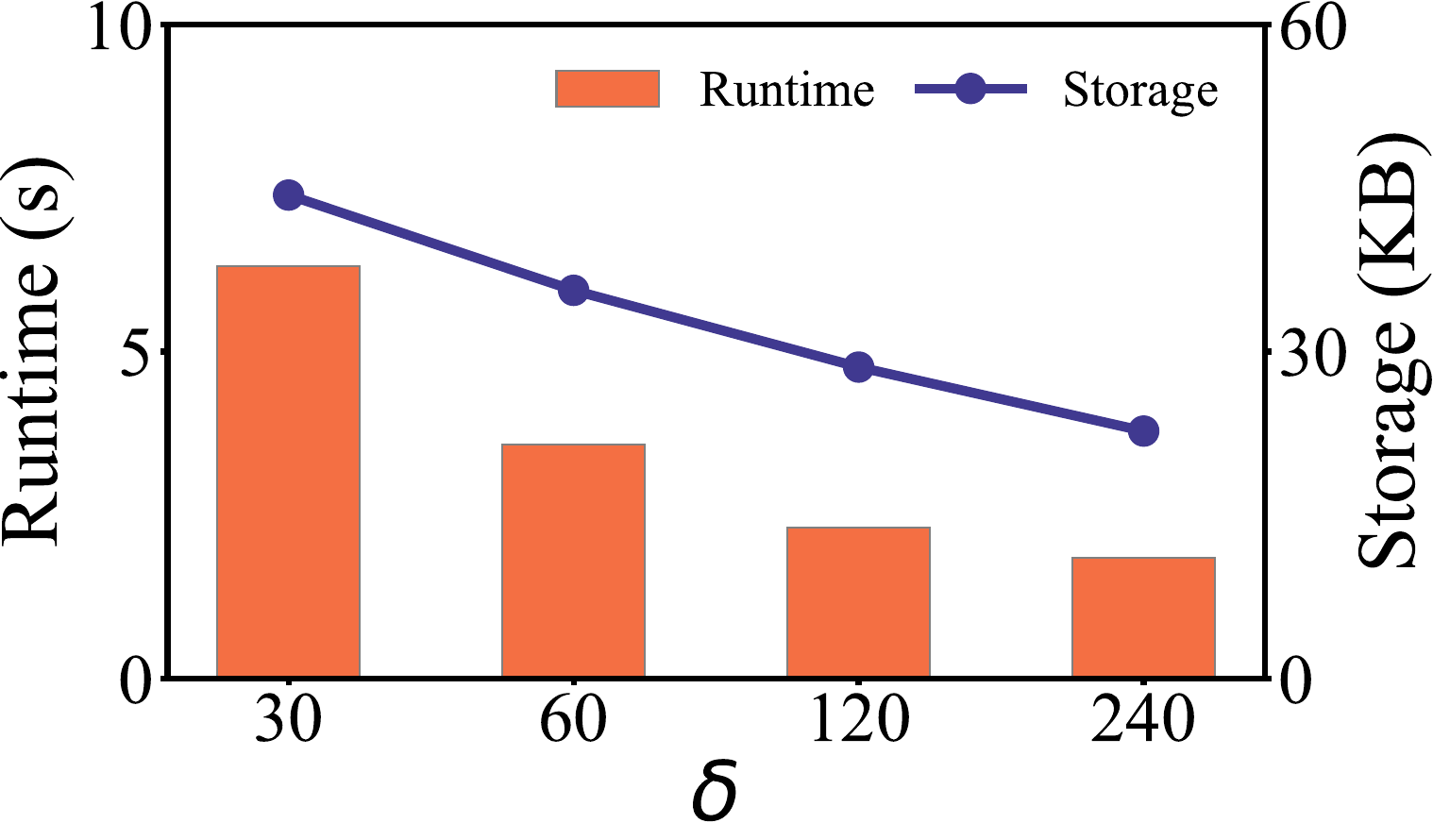}
	}
	\subfigure[$D_2$]{
	\centering
			\includegraphics[width=0.45\linewidth]{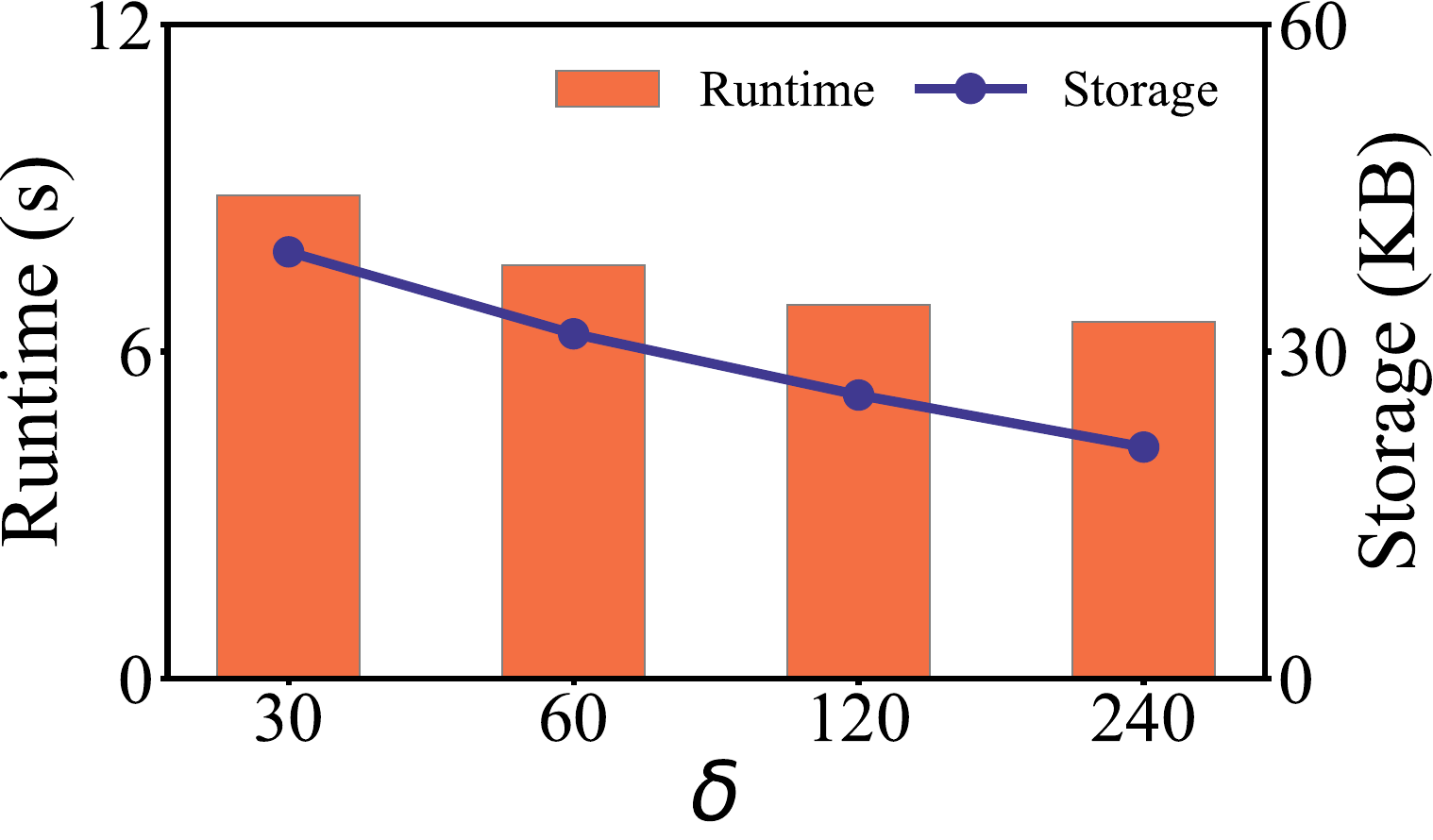}
	}
	\vspace{-1.5em}
    \caption{Building Offline Budget-Specific Heuristics}
	\label{fig:Budget-Specific-preprocessing}
    \vspace{-1.5em}
\end{figure}
\noindent
Then, we use the T-path weights obtained in the training set to estimate the distribution of each path in the testing set using Eq.~\ref{eqn_pace}.
We use KL-divergence to quantify the distances between the ground truth and the estimated distributions. 
A smaller KL-divergence value indicates higher accuracy, meaning that the estimated distributions are closer to the ground truth. 

Based on the five-fold cross validation, we report the 95\% confidence interval for the KL-divergence values for different $\tau$ values in Figure~\ref{fig:exp_tau}(b). %
When $\tau$ increases from 15 to 50, the KL-divergence values decrease, suggesting that a large $\tau$ indeed yields T-paths with more accurate cost distributions. %
However, when $\tau=100$, the KL-divergence values start increasing, and the 95\% confidence interval grows. This is because too few T-paths are instantiated, which in turn affects the accuracy of cost distributions. 
Thus, many cost dependencies that were kept with settings like $\tau=50$ are now lost, which reduces accuracy.  
We choose $\tau=50$ as the default value because it provides the most accurate results with substantial amounts of T-paths. 

\noindent
{\bf Instantiating V-Paths:} 
The numbers of V-paths for varying $\tau$ values are shown in Figure~\ref{fig:exp_tau}(c).
Smaller $\tau$ values result in more T-paths, which also lead to more V-paths. The cardinalities of V-paths are much higher than those of T-paths because V-paths are obtained by merging T-paths. %
Next, we report the average and maximum out-degrees of vertices after introducing the V-paths, in Figure~\ref{fig:exp_tau}(d).  
The out-degrees are often large because a vertex can now be connected to edges, T-paths, and V-paths. 
We show that the increased out-degrees do not adversely affect the routing efficiency in Section~\ref{ssec:vpathrouting}, due to the proposed search heuristics. 
Figure~\ref{fig:exp_tau}(d) also reports the runtime for generating V-paths. 
This procedure is conducted offline. When using the default of $\tau=50$, it takes around 8.5 and 55 hours for $D_1$ and $D_2$, 
which is acceptable. 
Thus, we trade affordable pre-computation time for interactive response times, i.e., at sub 0.1 second level, to be seen next.

\vspace{-1em}
\subsection{Search Heuristics with Only T-Paths}
\label{subsec:heurexp}

We study both search heuristics when using only T-paths. 

\noindent
{\bf Binary Heuristics:} We consider three variations of binary heuristics---T-B-EU, T-B-E, and T-B-P, as described in Section~\ref{ssec:expsetup}. 
We first study the pre-processing step. Specifically, we study the runtime of computing the $v.\mathit{getMin}()$ function and the storage needed for the results at peak hours, as shown in Figure~\ref{fig:Binary-preprocessing}. %

Recall that the binary heuristics are destination-specific. 
To support routing queries with arbitrary destinations, we need to maintain a $v.\mathit{getMin}()$ function for each destination vertex.  
Figure~\ref{fig:Binary-preprocessing} shows the average runtime of $v.\mathit{getMin}()$ and the storage needed for one destination. 
T-B-EU is the fastest, as it computes the Euclidean distance between an intermediate vertex and the destination and divides by the maximum speed limit. 
T-B-E and T-B-P take longer time as they both need to generate shortest path trees. T-B-P takes the longest time as it also involves dominance checking when taking into account the accurate costs maintained in T-paths. 
Note that T-B-P takes less than 4.0 seconds. 
We argue that this is reasonable because it is an offline computation and the $v.\mathit{getMin}()$ function can be computed in parallel for different destinations. %

\begin{figure*}[!htp]
\centering
  \centering
  \vspace{-1em}
	\subfigure[By distances, $D_1$]{
	\centering
			\includegraphics[width=0.23\linewidth]{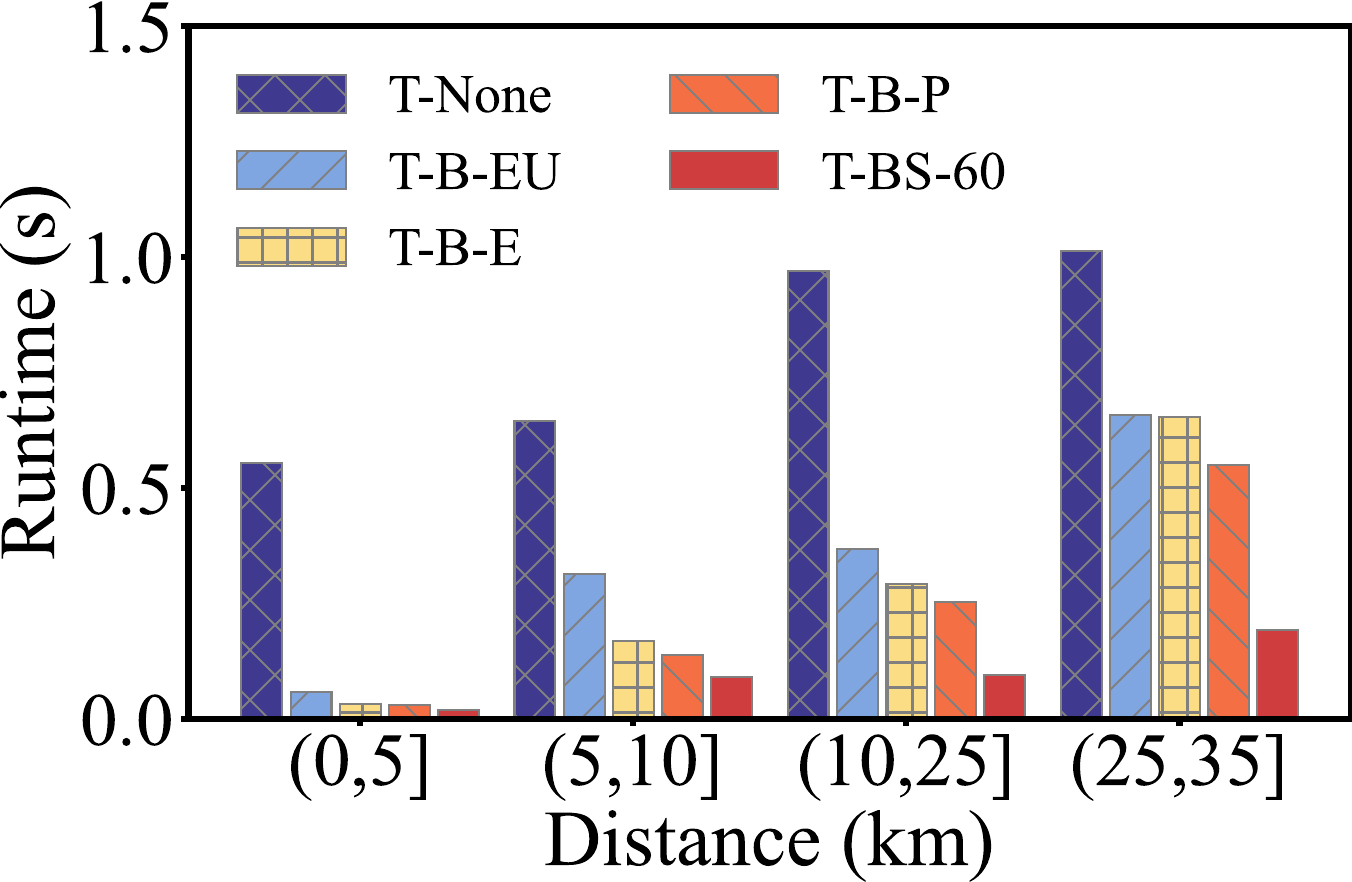}
	}
	\subfigure[By distances, $D_2$]{
	\centering
			\includegraphics[width=0.23\linewidth]{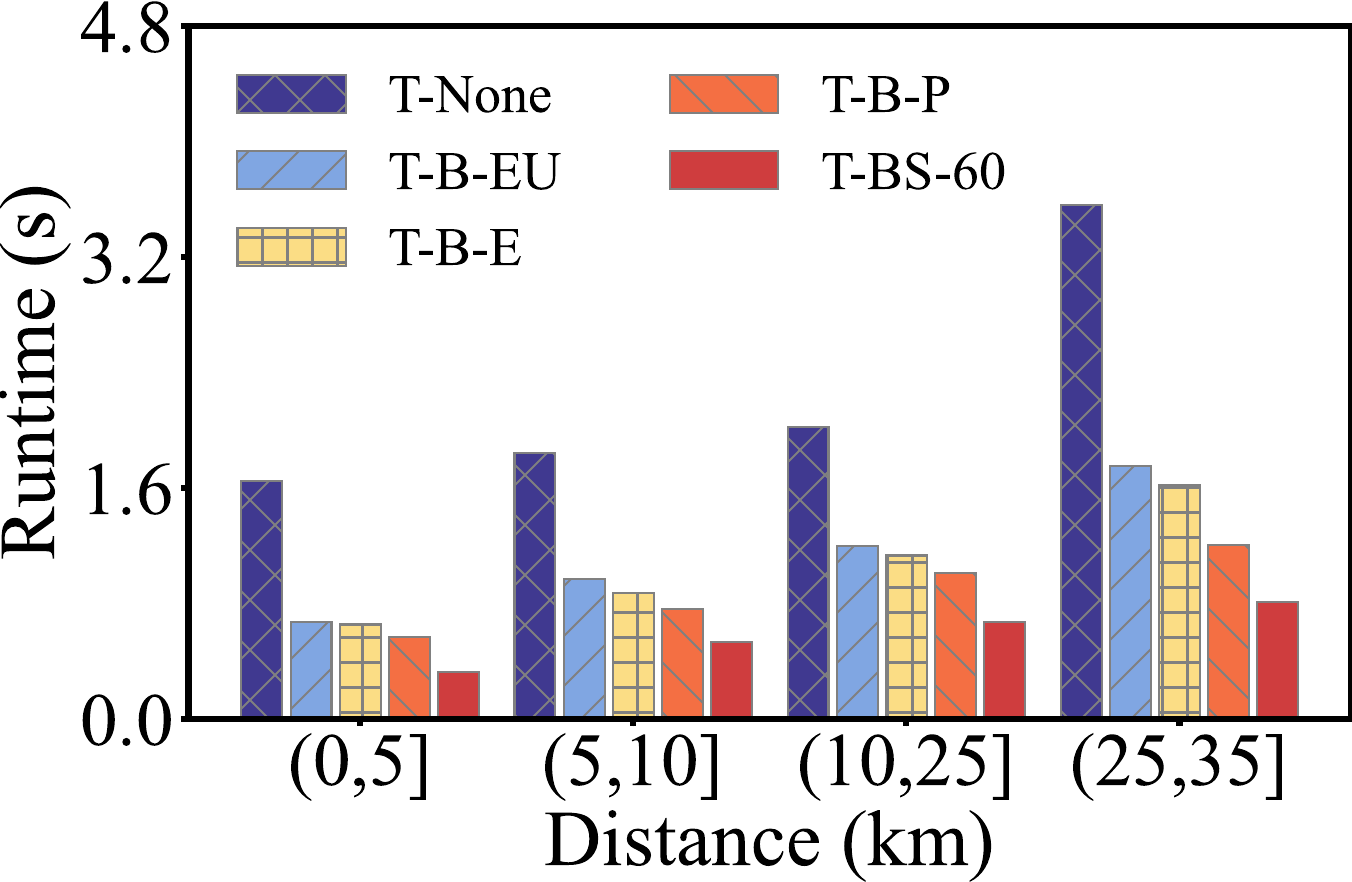}
	}
    \subfigure[By budget values, $D_1$]{
	\centering
			\includegraphics[width=0.23\linewidth]{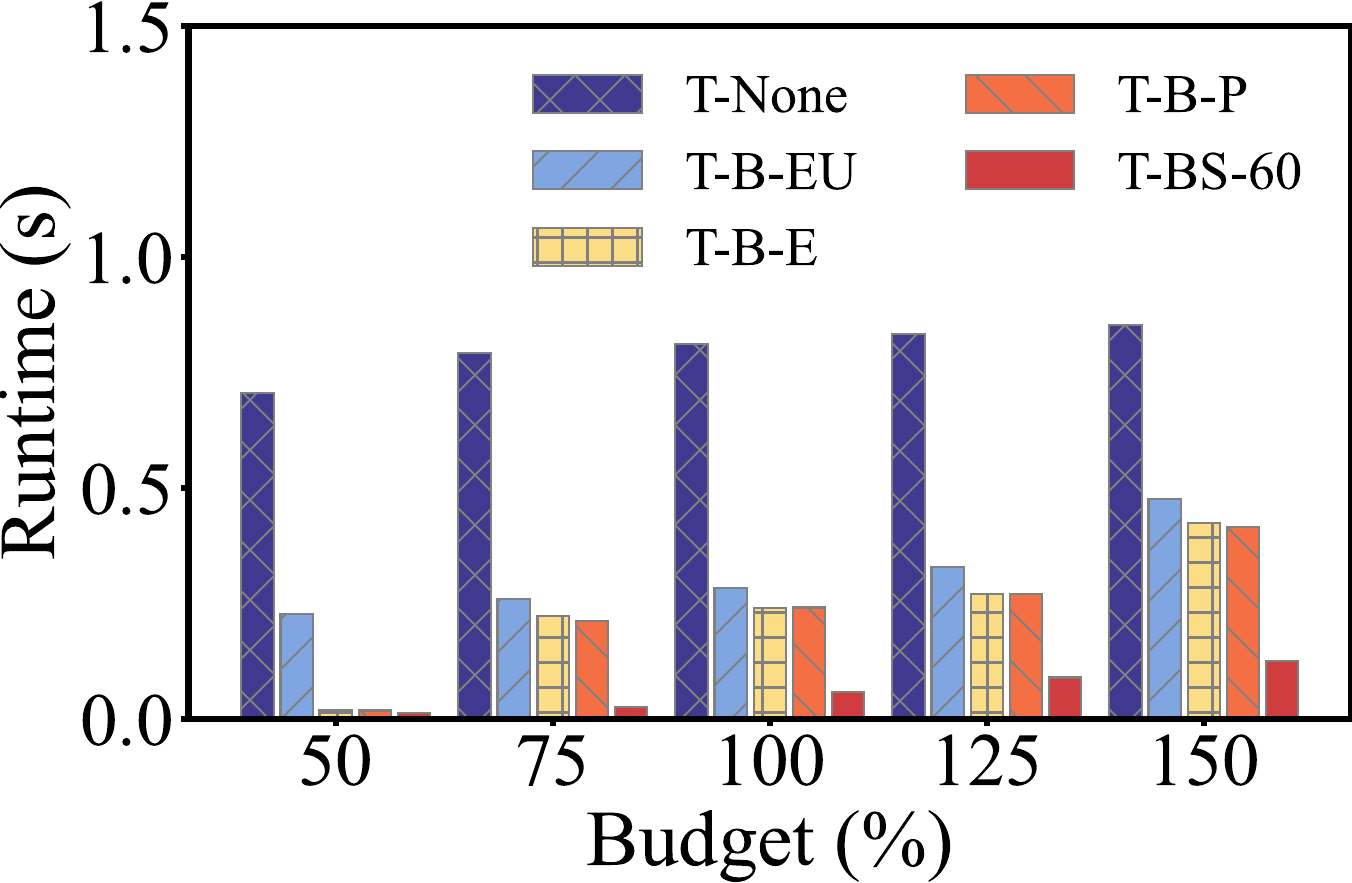}
	}
	\subfigure[By budget values, $D_2$]{
	\centering
			\includegraphics[width=0.23\linewidth]{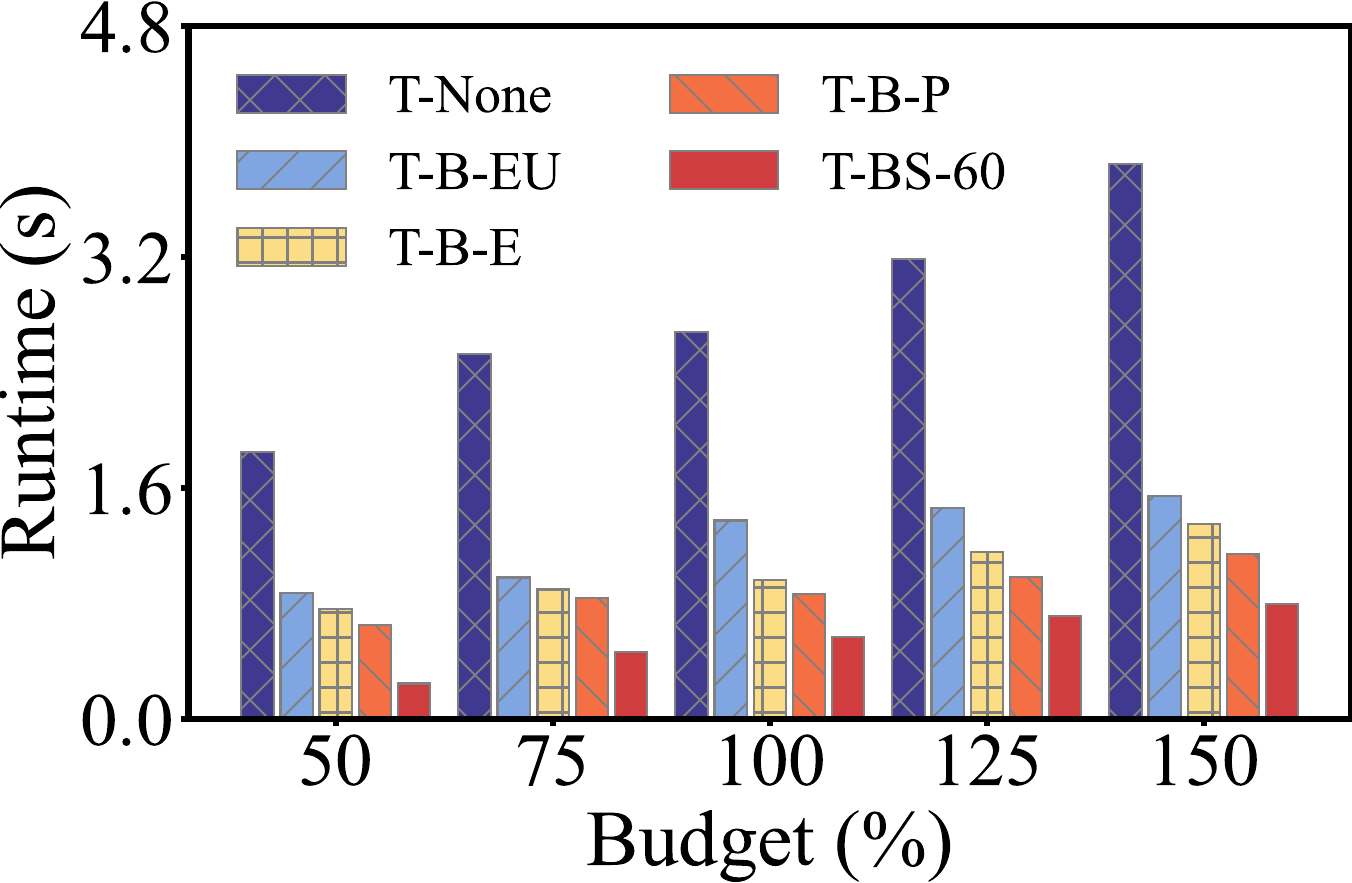}
	}
	\vspace{-1.5em}
	\caption{Stochastic Routing with Binary Heuristics at Peak Hours}
	\vspace{-10pt}
	\label{fig:binaryRouting-peak}
\end{figure*}
\begin{figure*}[!htp]
\centering
  \centering
	\subfigure[By distances, $D_1$]{
    %\subfigure[$D_1$]{
	\centering
			\includegraphics[width=0.23\linewidth]{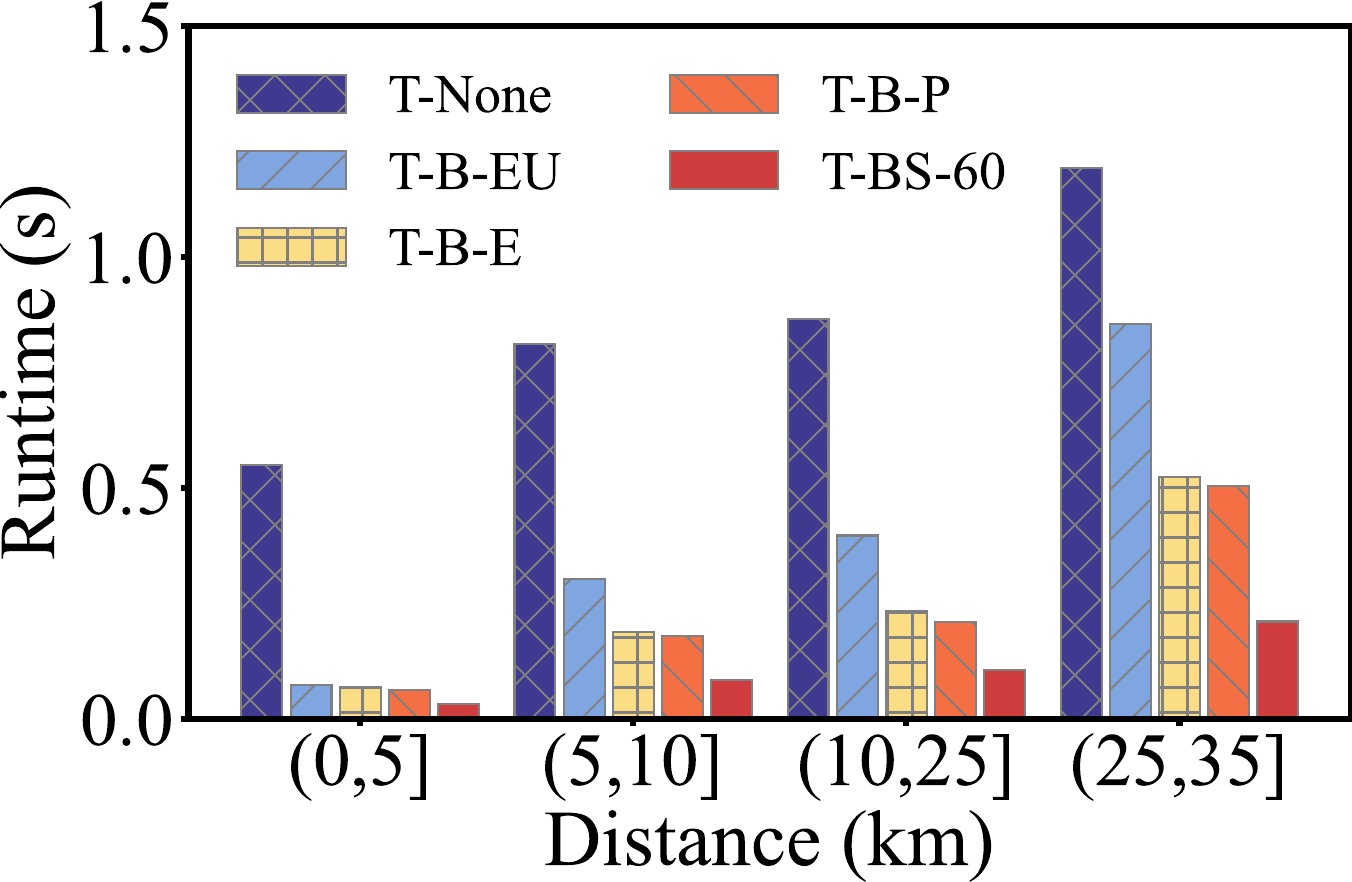}
	}
	\subfigure[By distances, $D_2$]{
    %\subfigure[$D_2$]{
	\centering
			\includegraphics[width=0.23\linewidth]{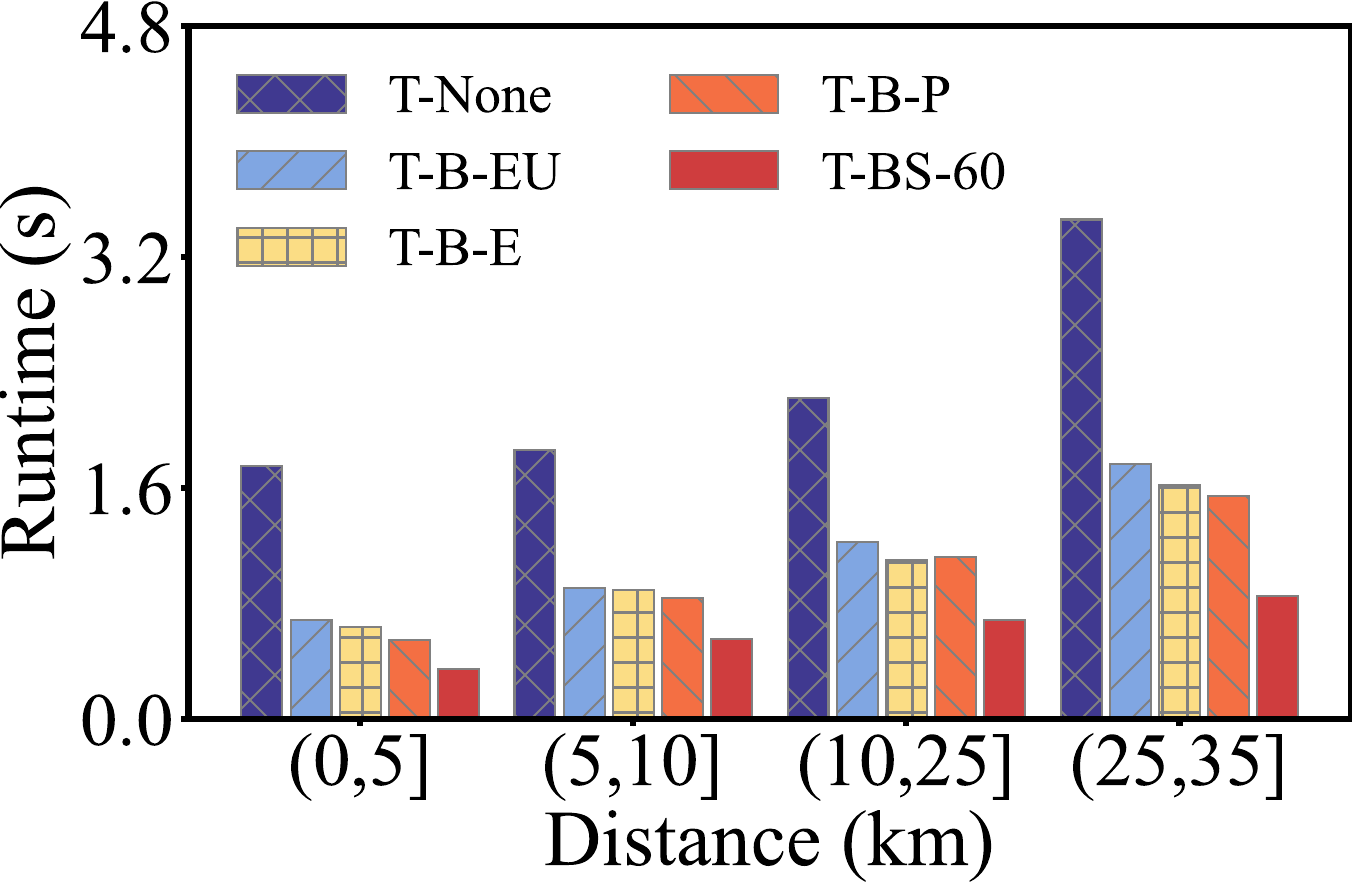}
	}
    \subfigure[By budget values, $D_1$]{
    %\subfigure[$D_1$]{
	\centering
			\includegraphics[width=0.23\linewidth]{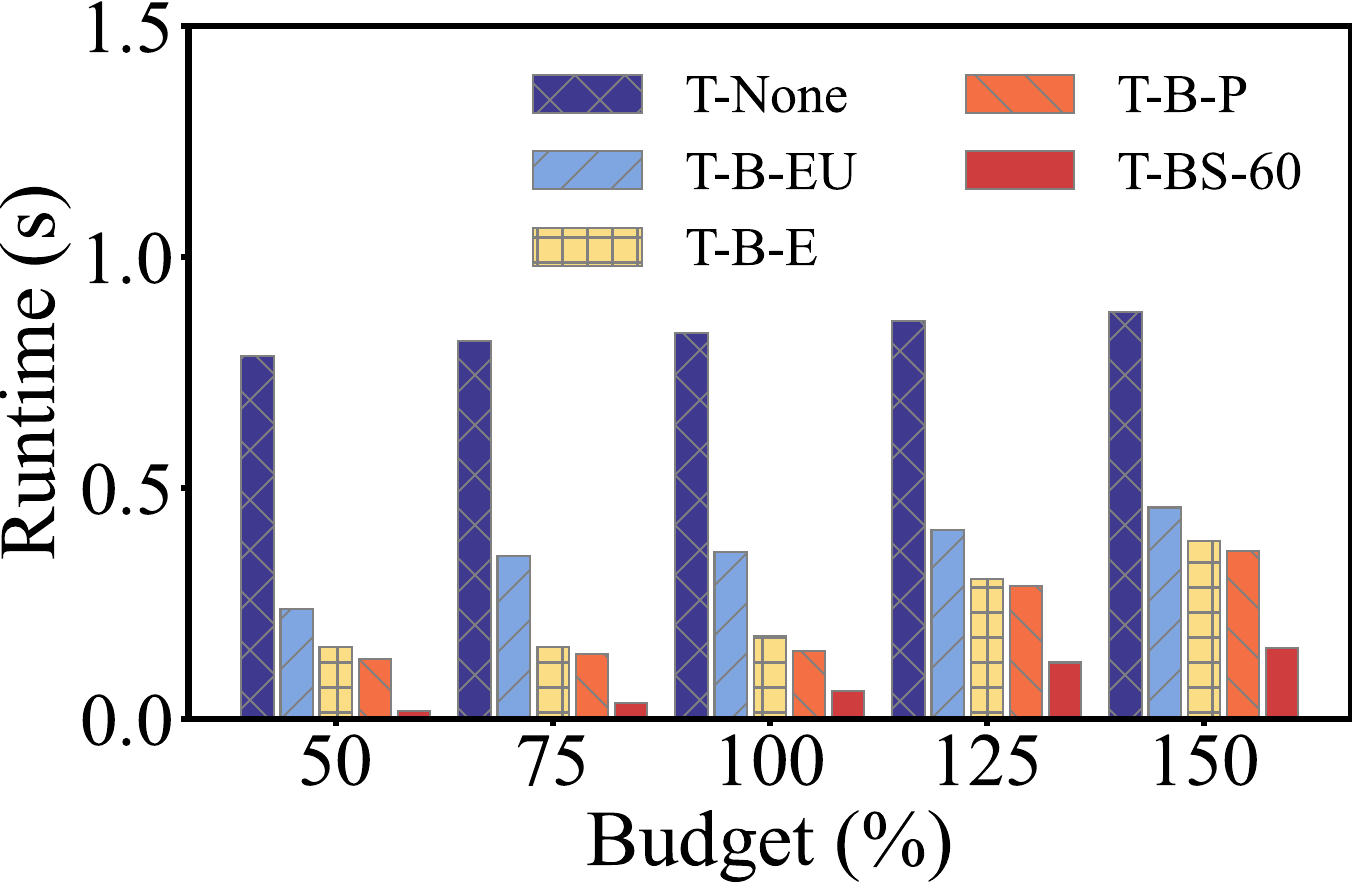}
	}
	\subfigure[By budget values, $D_2$]{
    %\subfigure[$D_2$]{
	\centering	\includegraphics[width=0.23\linewidth]{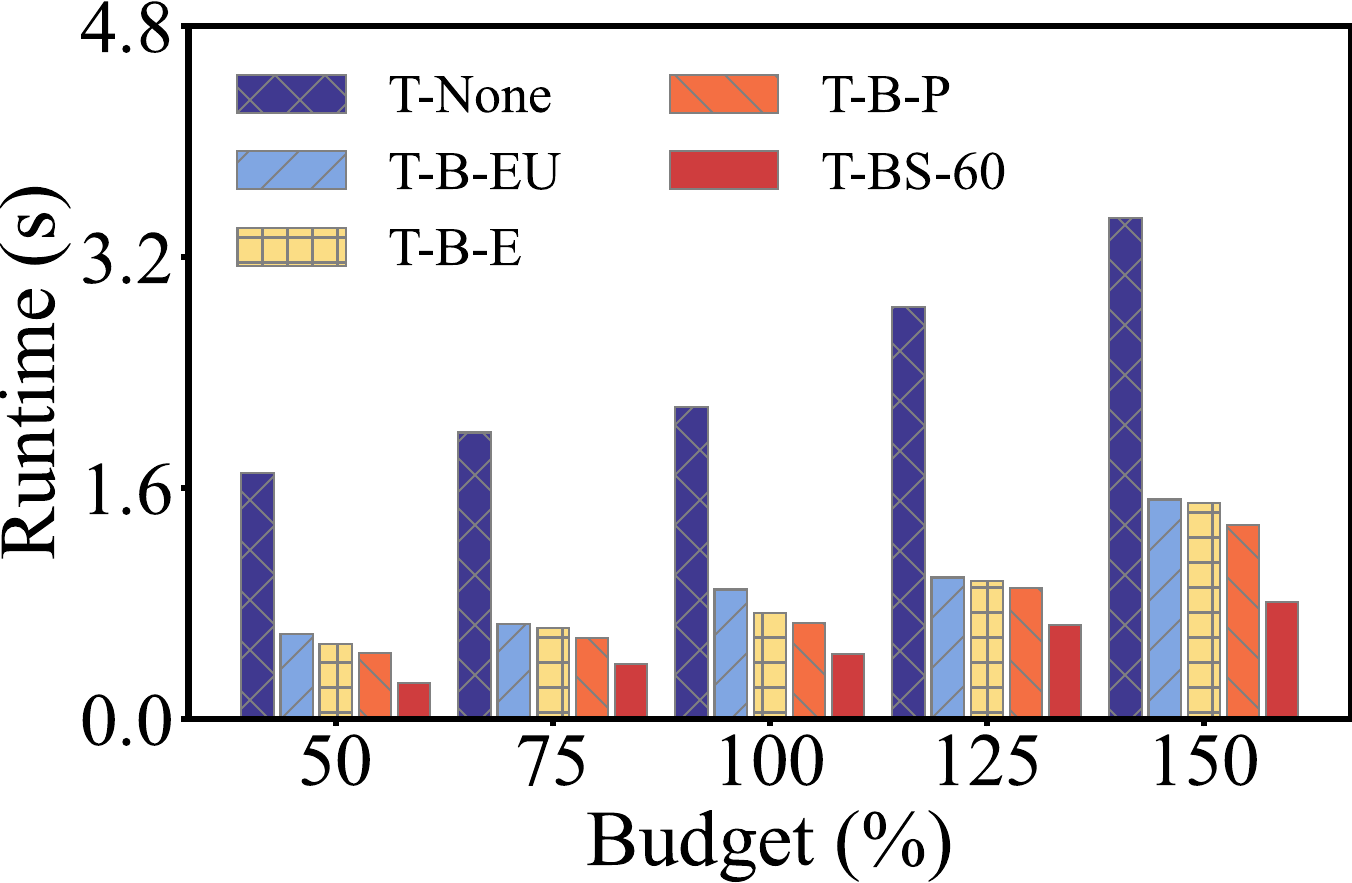}
	}
	\vspace{-1.5em}
	\caption{Stochastic Routing with Binary Heuristics at Off-Peak Hours}
	\vspace{-10pt}
	\label{fig:binaryRouting-offpeak}
\end{figure*}

\begin{figure*}[!htp]
\centering
  \centering
	\subfigure[By distances, $D_1$]{
    %\subfigure[$D_1$]{
	\centering
			\includegraphics[width=0.23\linewidth]{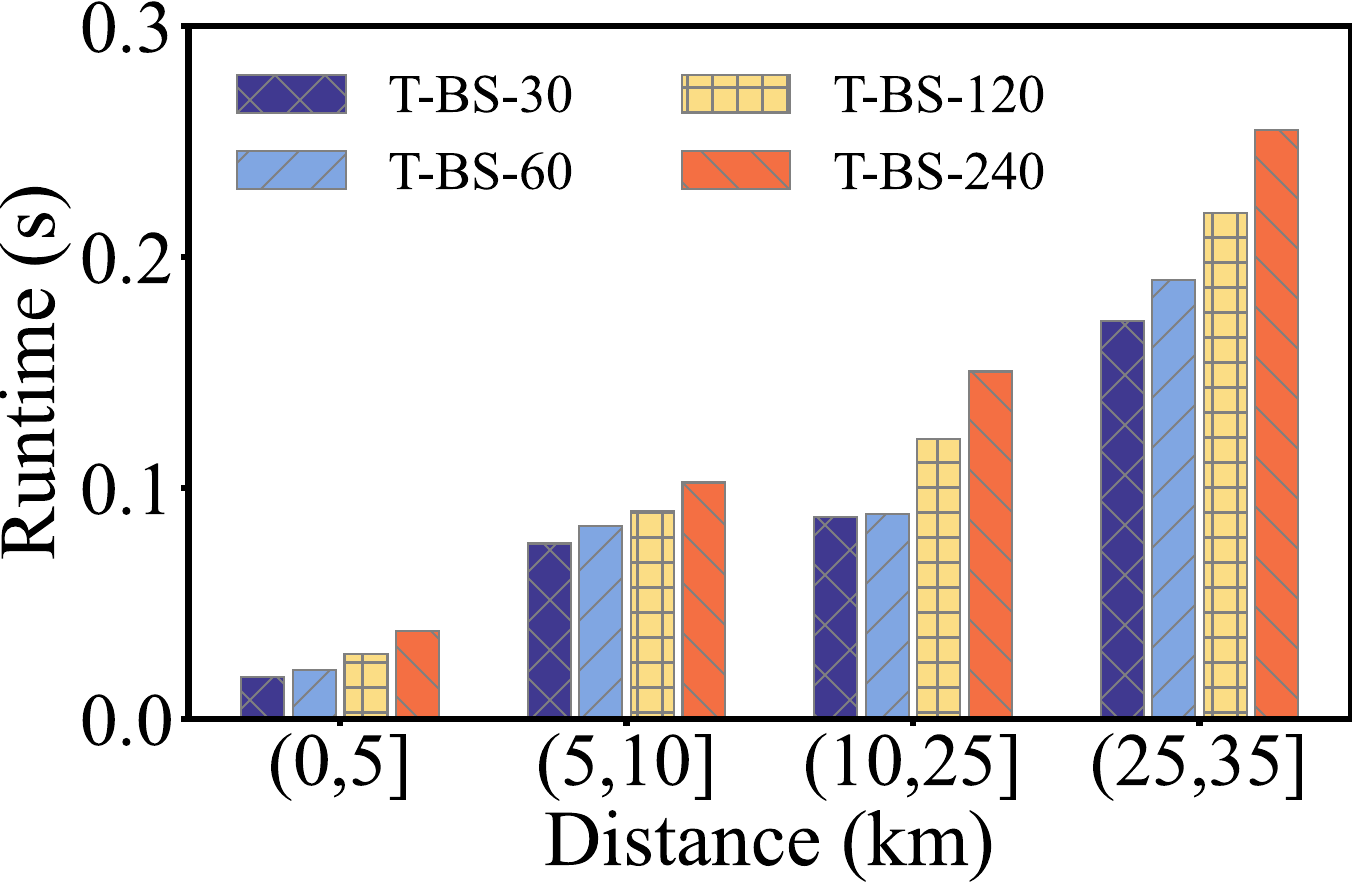}
	}
	\subfigure[By distances, $D_2$]{
    %\subfigure[$D_2$]{
	\centering
			\includegraphics[width=0.23\linewidth]{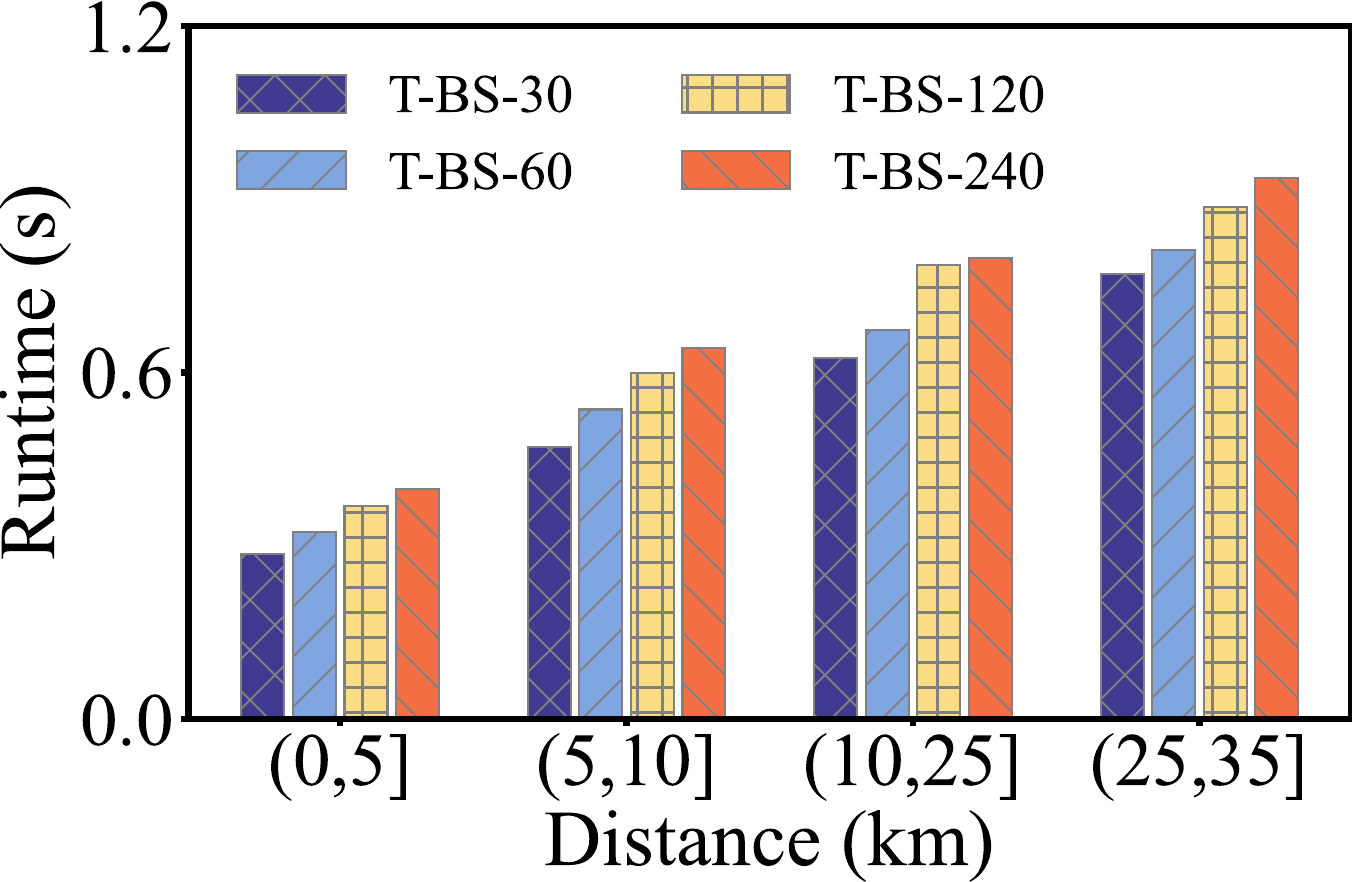}
	}
    \subfigure[By budget values, $D_1$]{
    %\subfigure[$D_1$]{
	\centering
			\includegraphics[width=0.23\linewidth]{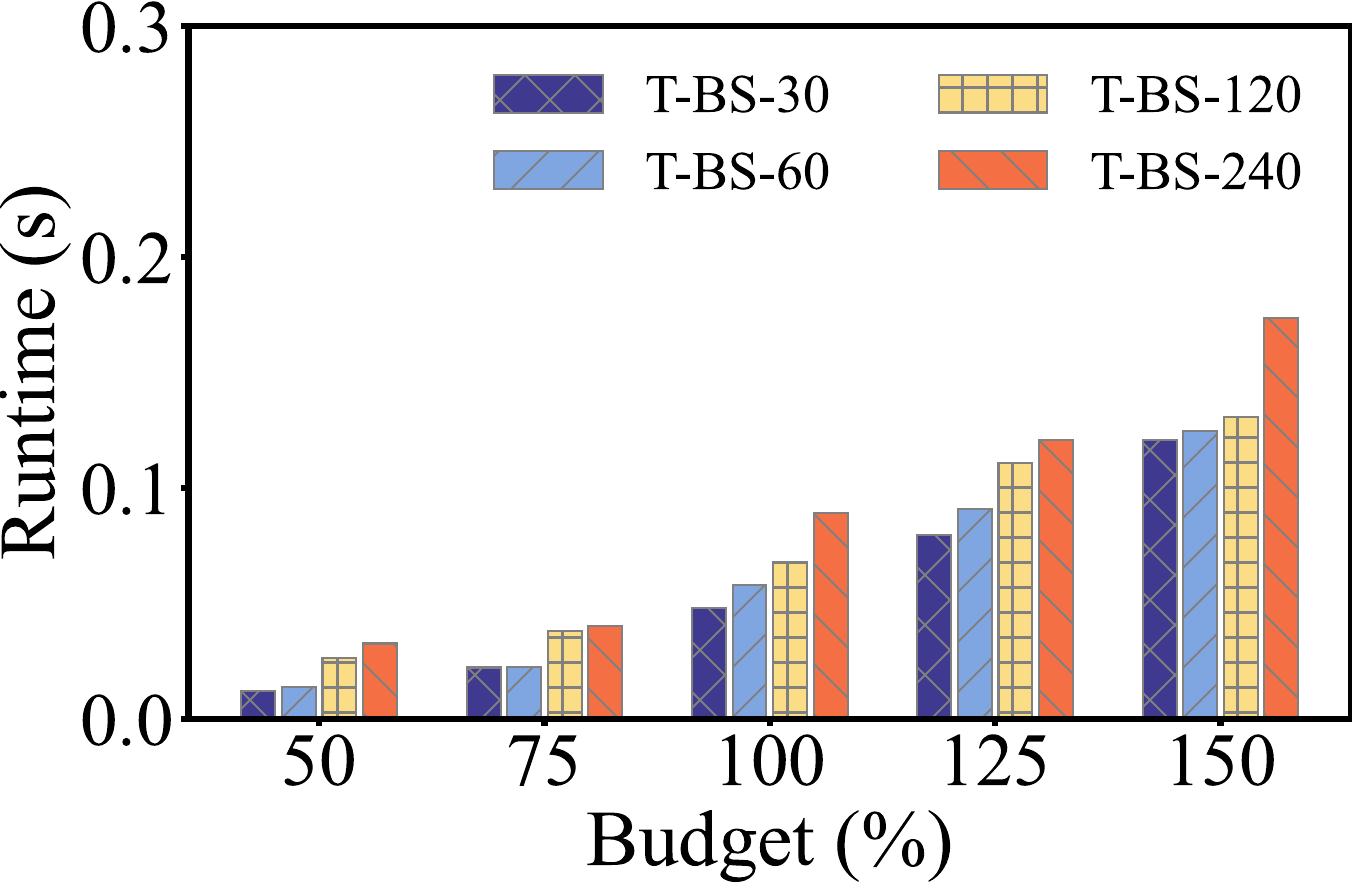}
	}
	\subfigure[By budget values, $D_2$]{
    %\subfigure[$D_2$]{
	\centering
			\includegraphics[width=0.23\linewidth]{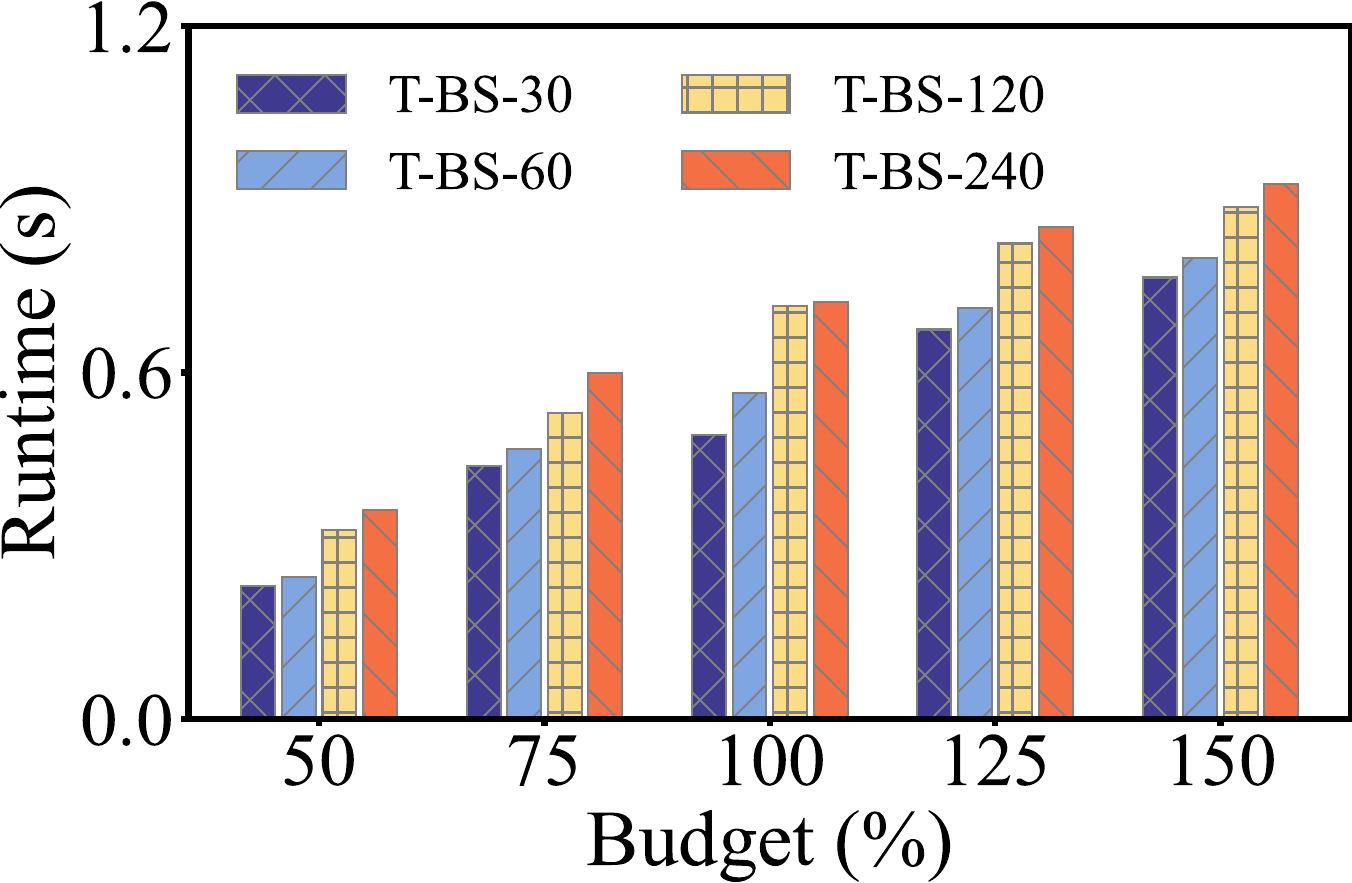}
	}
	\vspace{-1.5em}
	\caption{Stochastic Routing with Budget-Specific Heuristics at Peak Hours}
	\vspace{-10pt}
	\label{fig:budgetRouting-peak}
\end{figure*}

\begin{figure*}[!htp]
\centering
  \centering
	\subfigure[By distances, $D_1$]{
    %\subfigure[$D_1$]{
	\centering
			\includegraphics[width=0.23\linewidth]{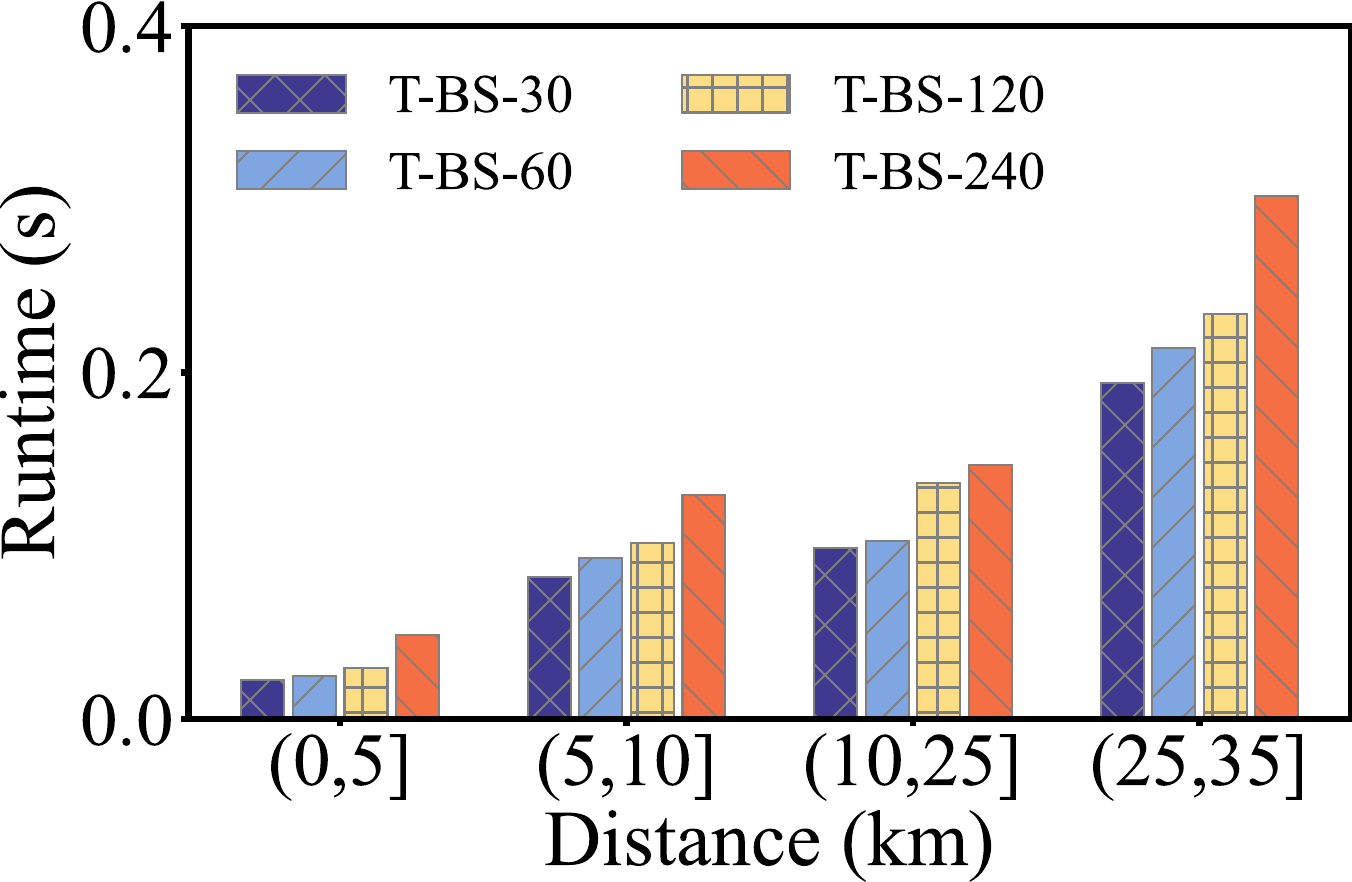}
	}
	\subfigure[By distances, $D_2$]{
    %\subfigure[$D_2$]{
	\centering
			\includegraphics[width=0.23\linewidth]{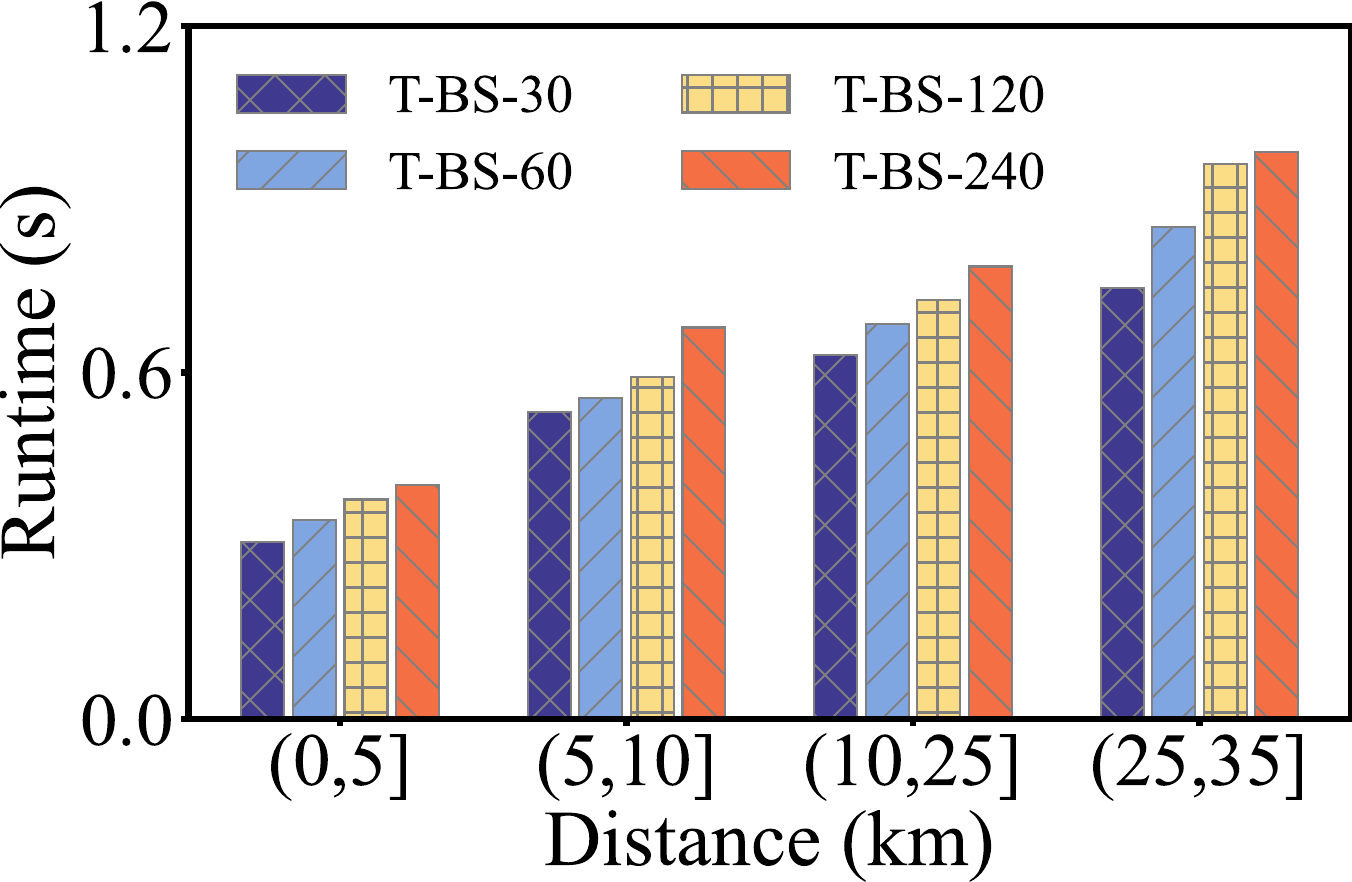}
	}
    \subfigure[By budget values, $D_1$]{
    %\subfigure[$D_1$]{
	\centering
			\includegraphics[width=0.23\linewidth]{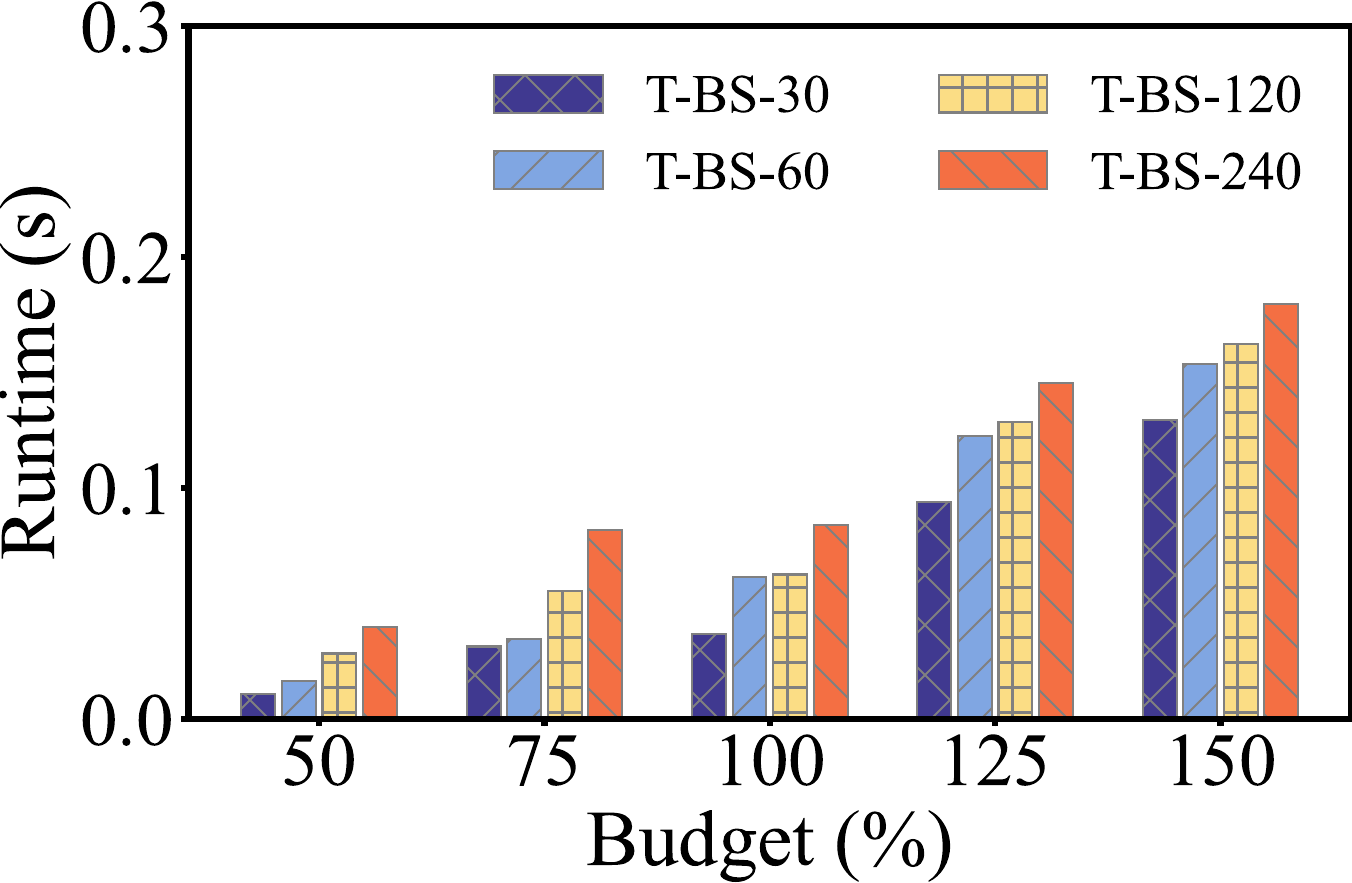}
	}
	\subfigure[By budget values, $D_2$]{
    %\subfigure[$D_2$]{
	\centering
			\includegraphics[width=0.23\linewidth]{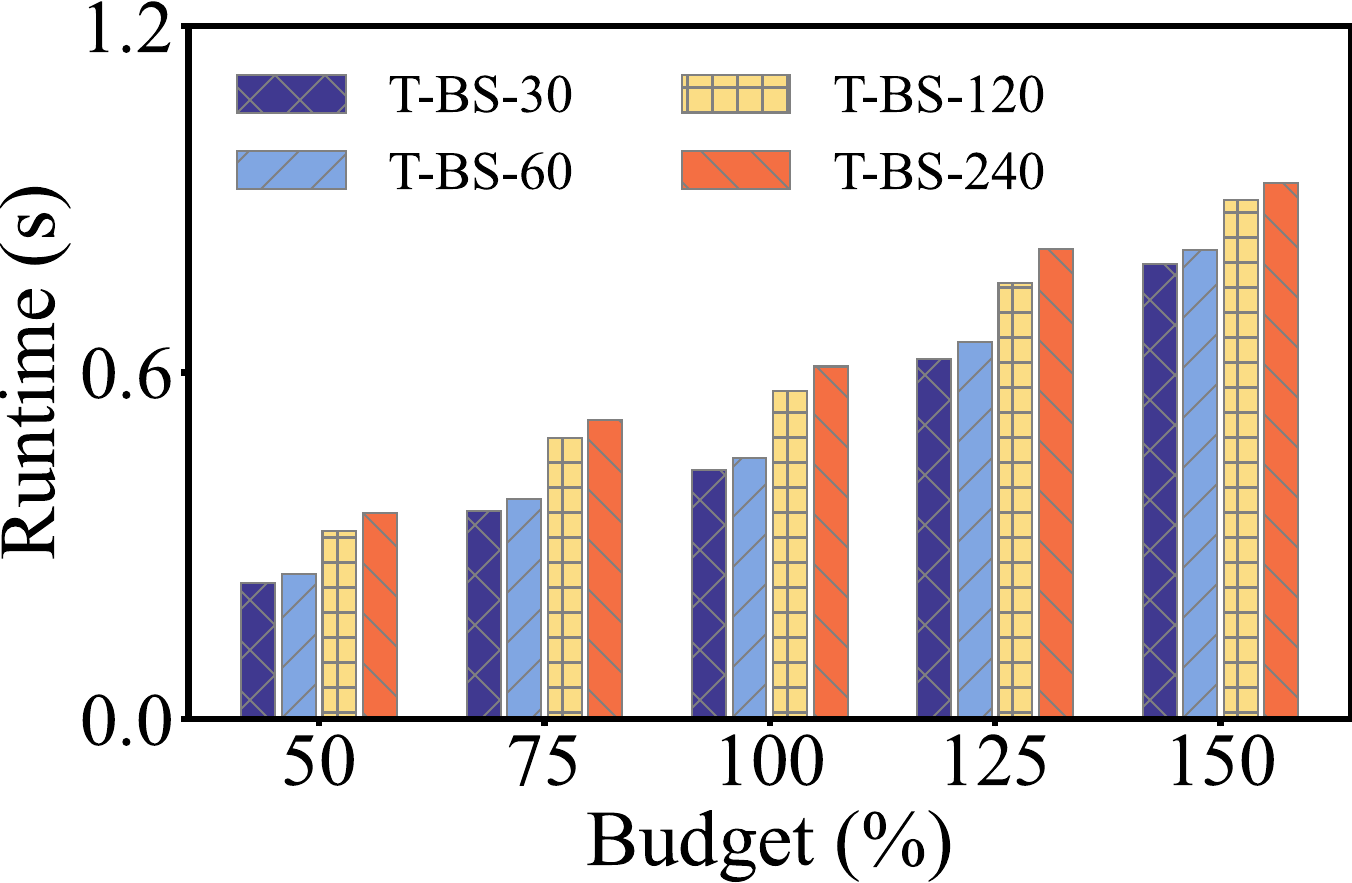}
	}
	\vspace{-1.5em}
	\caption{Stochastic Routing with Budget-Specific Heuristics at Off-Peak Hours}
	\vspace{-10pt}
	\label{fig:budgetRouting-offpeak}
\end{figure*}

\begin{figure*}[!htp]
\centering
  \centering
	\subfigure[By distances, $D_1$]{
    %\subfigure[$D_1$]{
	\centering
			\includegraphics[width=0.23\linewidth]{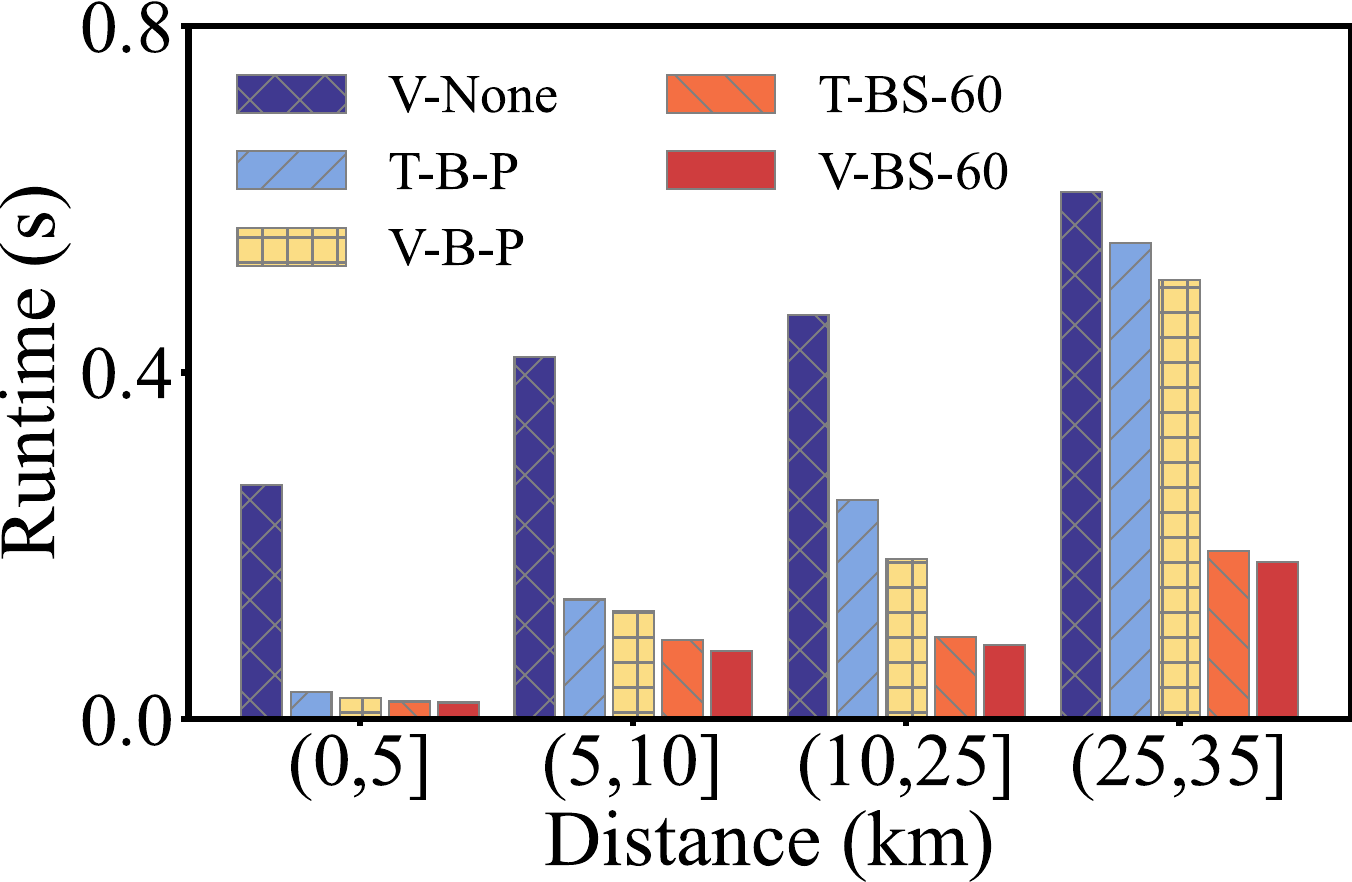}
	}
	\subfigure[By distances, $D_2$]{
    %\subfigure[$D_2$]{
	\centering
			\includegraphics[width=0.23\linewidth]{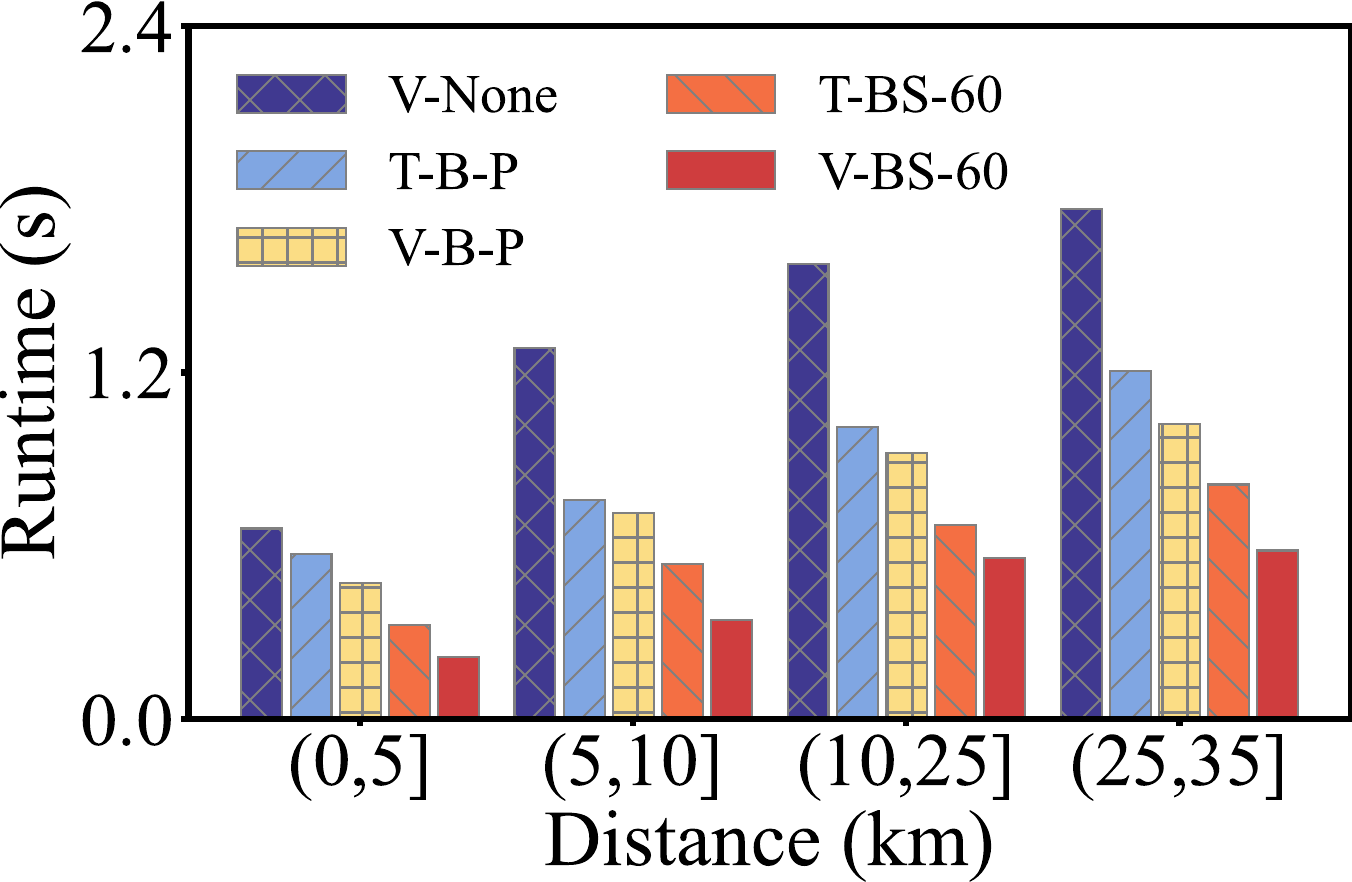}
	}
    \subfigure[By budget values, $D_1$]{
    %\subfigure[$D_1$]{
	\centering
			\includegraphics[width=0.23\linewidth]{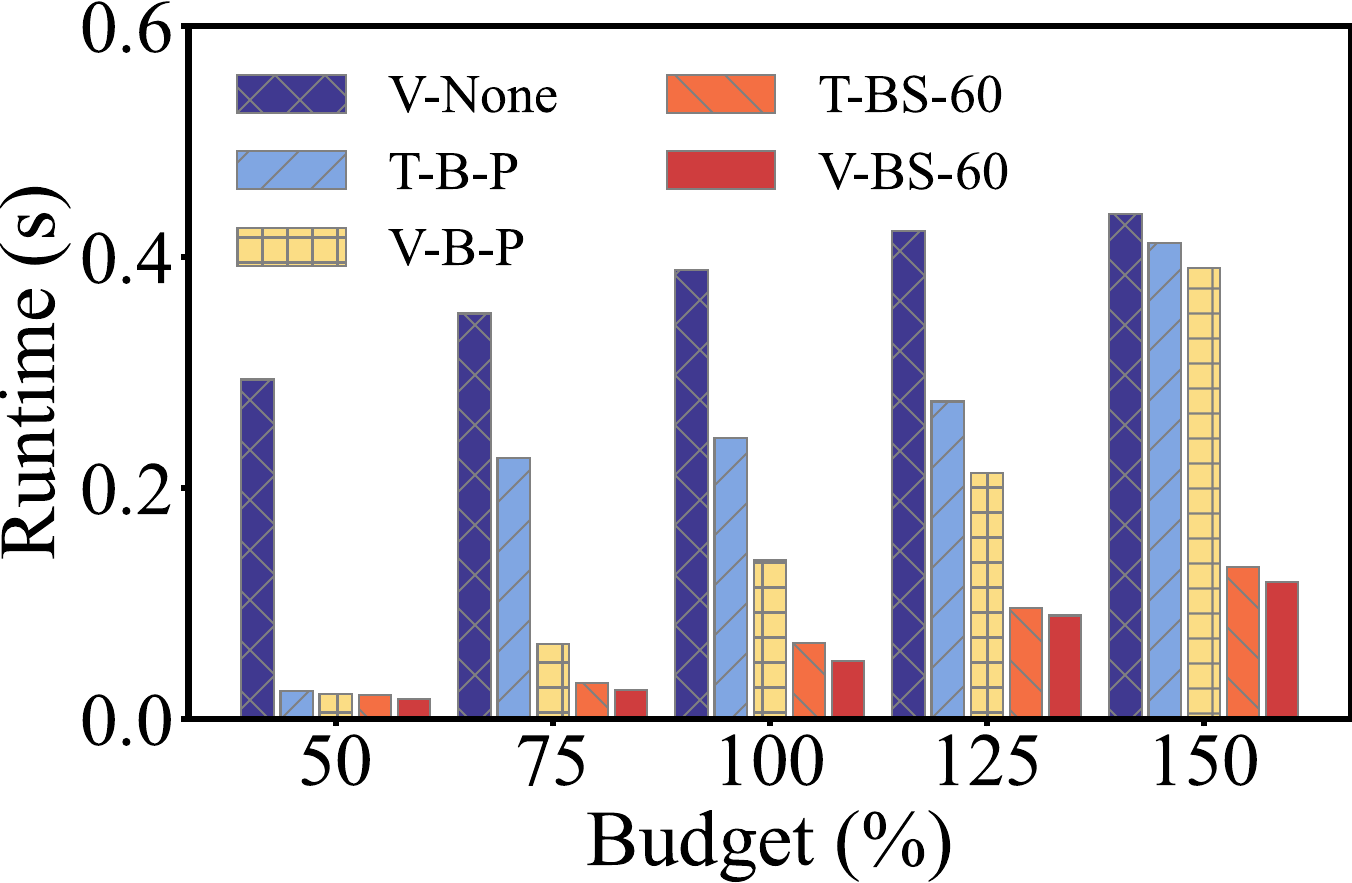}
	}
	\subfigure[By budget values, $D_2$]{
    %\subfigure[$D_2$]{
	\centering
			\includegraphics[width=0.23\linewidth]{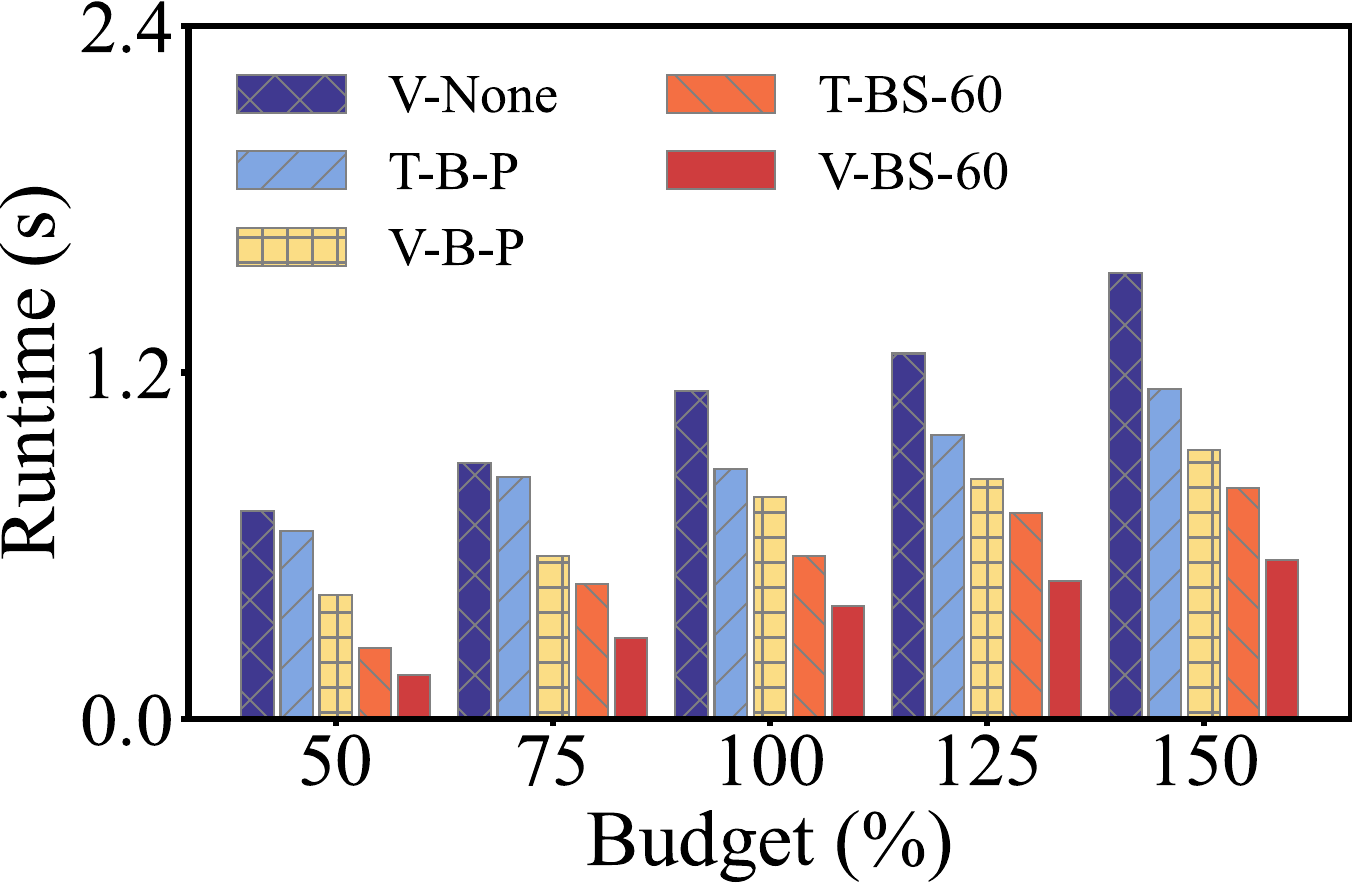}
	}
	\vspace{-1.5em}
	\caption{V-Path based Stochastic Routing at Peak Hours}
	\vspace{-10pt}
	\label{fig:vpathRouting-peak}
\end{figure*}

\begin{figure*}[!htp]
\centering
  \centering
	\subfigure[By distances, $D_1$]{
    %\subfigure[$D_1$]{
	\centering
			\includegraphics[width=0.23\linewidth]{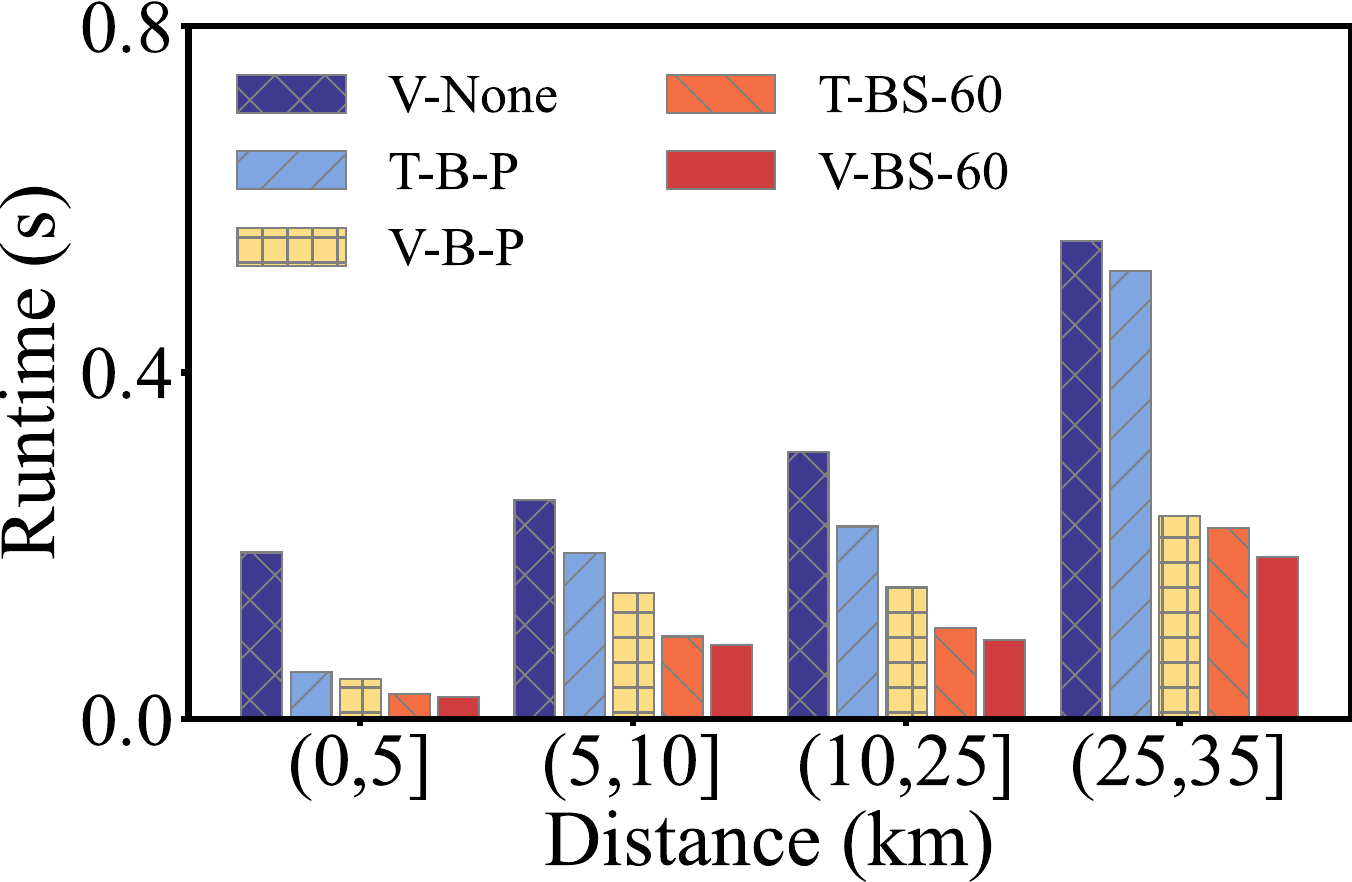}
	}
	\subfigure[By distances, $D_2$]{
    %\subfigure[$D_2$]{
	\centering
			\includegraphics[width=0.23\linewidth]{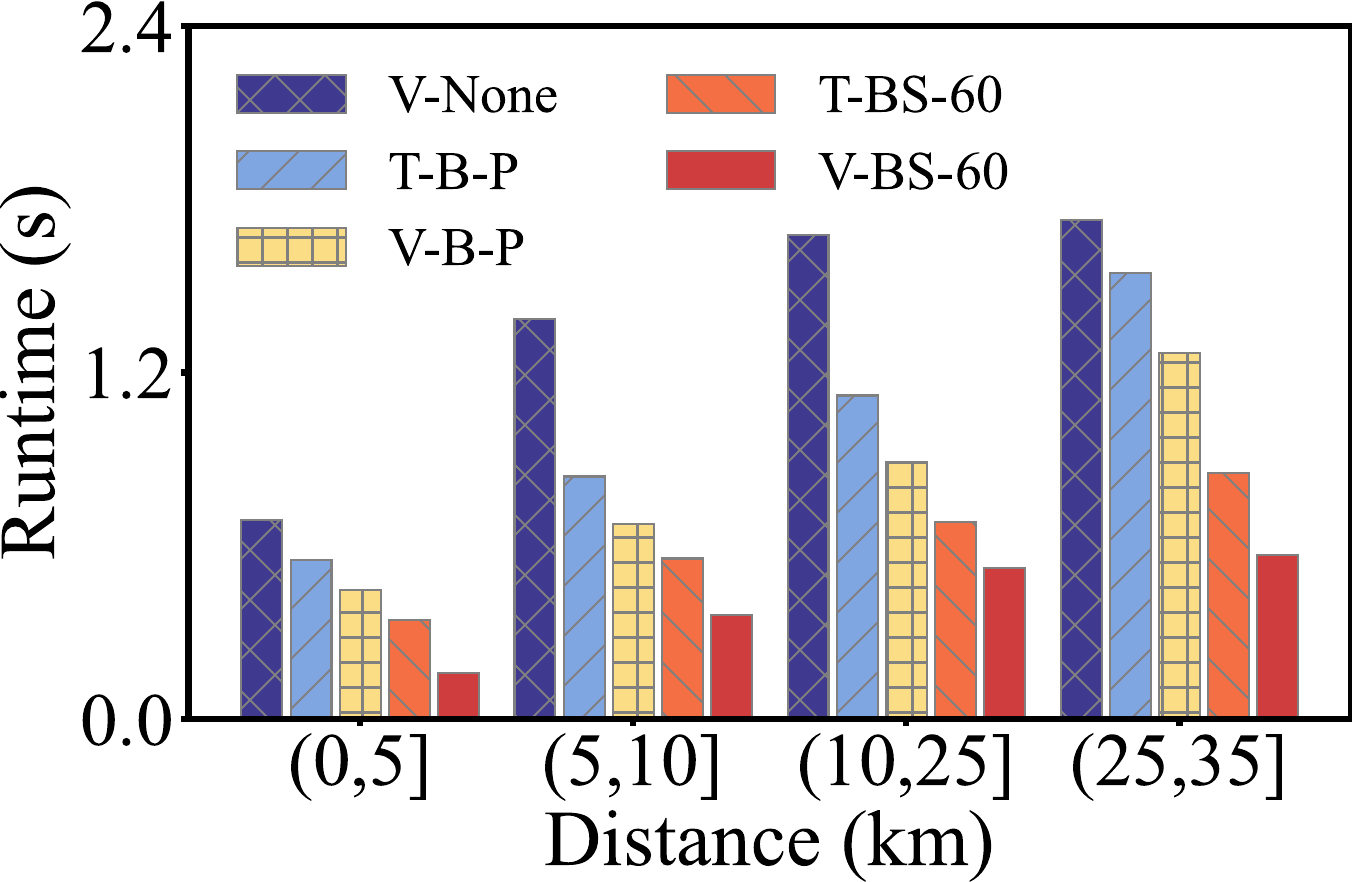}
	}
    \subfigure[By budget values, $D_1$]{
    %\subfigure[$D_1$]{
	\centering
			\includegraphics[width=0.23\linewidth]{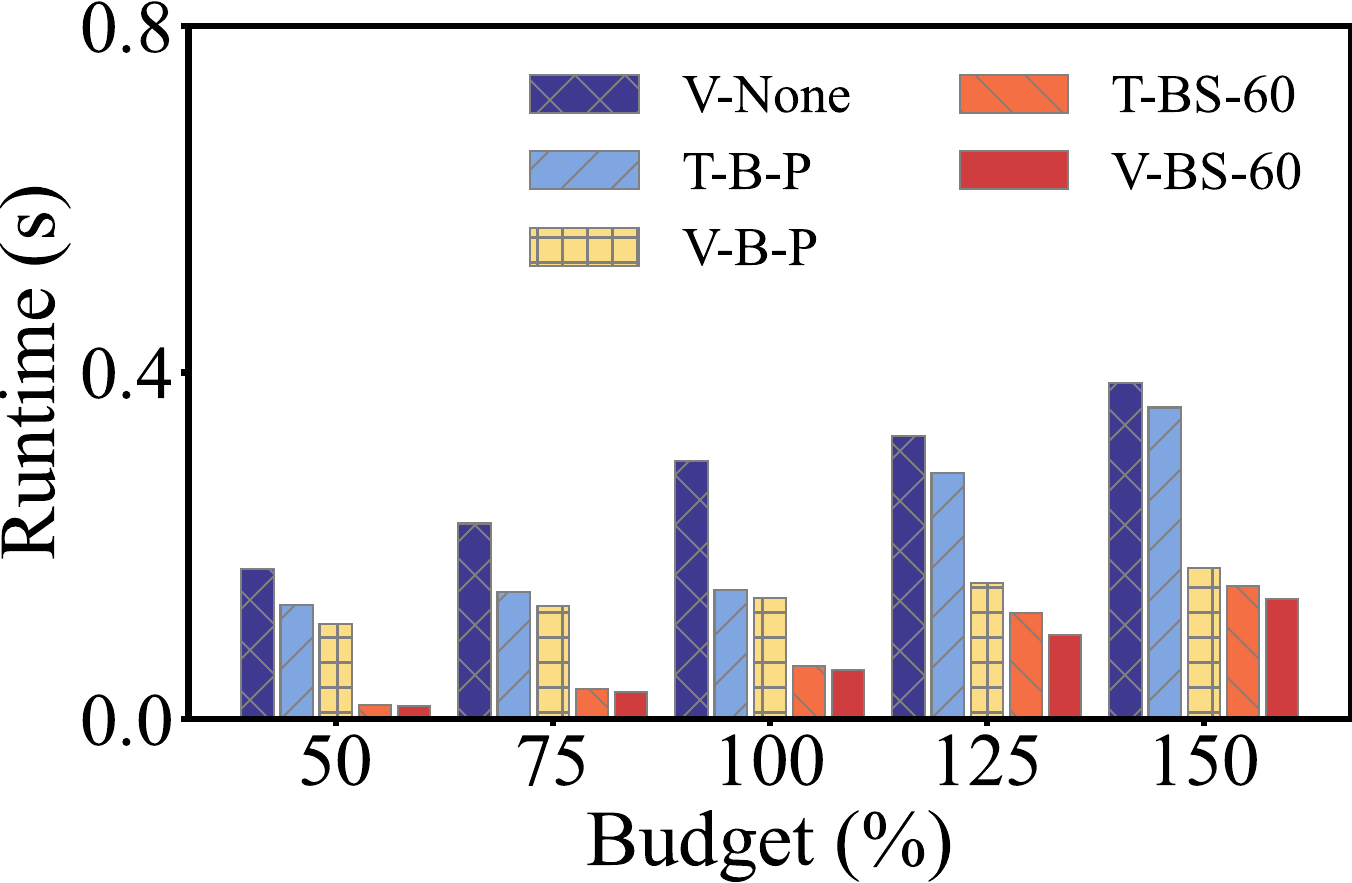}
	}
	\subfigure[By budget values, $D_2$]{
    %\subfigure[$D_2$]{
	\centering
			\includegraphics[width=0.23\linewidth]{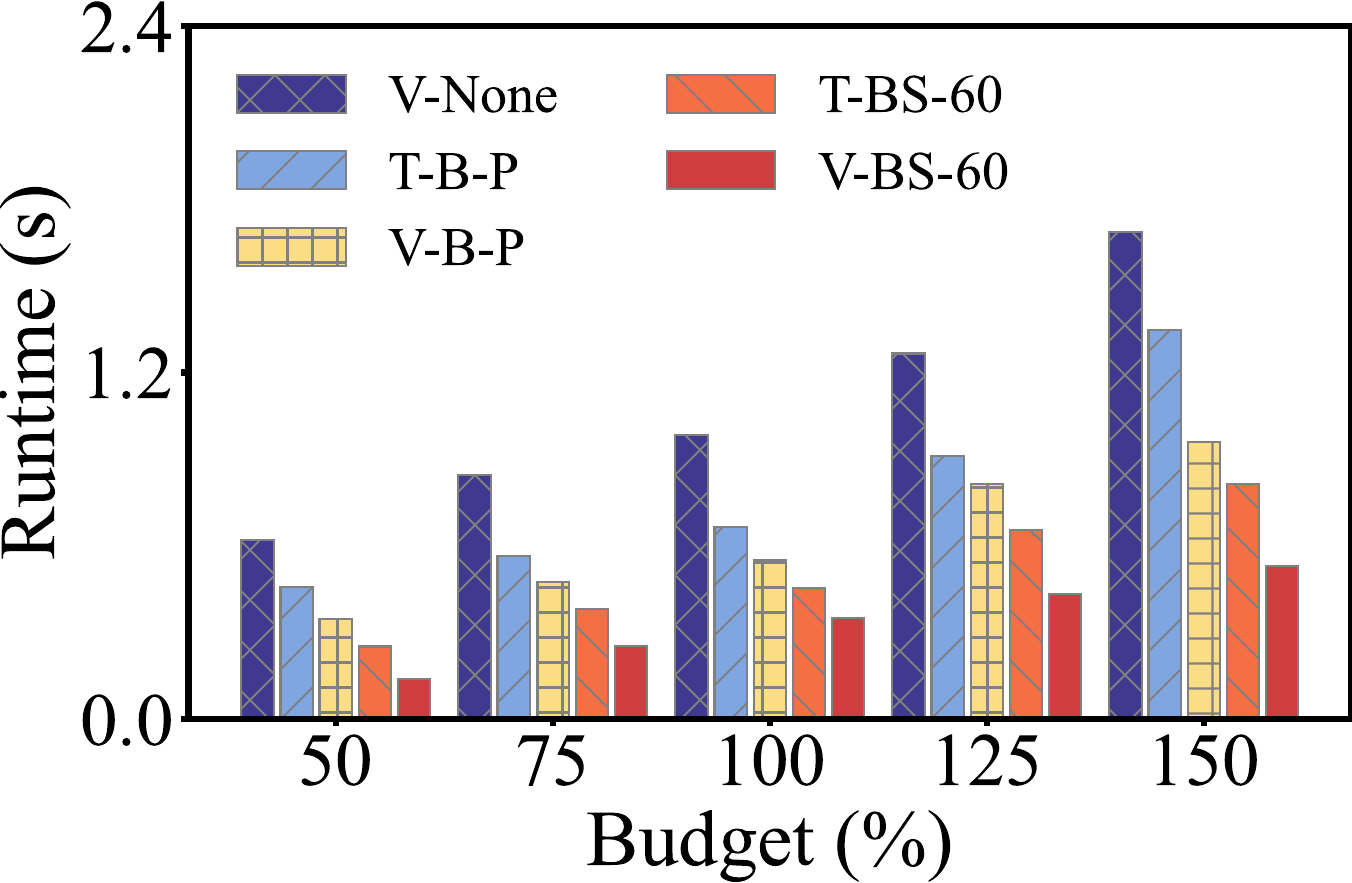}
	}
	\vspace{-1.5em}
	\caption{V-Path based Stochastic Routing at Off-Peak Hours}
	\vspace{-10pt}
	\label{fig:vpathRouting-offpeak}
\end{figure*}

The storage needed for different variations is the same---we maintain a single integer value $v.\mathit{getMin}()$, regardless of the variation. 
Maintaining $v.\mathit{getMin}()$ for all destinations takes 0.92 GB, which can easily fit into main memory. We report the overall offline pre-computation cost for building binary heuristics for $D_1$ and $D_2$ in Table~\ref{tbl:binary}, where run time and storage is reported by hours (h) and  gigabytes (GB), respectively.

\begin{table}[!htp]
\footnotesize
	\centering
	\renewcommand{\arraystretch}{1}
	\begin{tabular}{|c||c|c|c||c|c|c|}
		\hline		\multicolumn{7}{|c|}{Data Set $D_1$}\\ \hline
		&\multicolumn{3}{c||}{{\bf Off-Peak Hours}}&\multicolumn{3}{c|}{{\bf Peak Hours}}\\ \cline{2-7}
	     Methods & T-B-EU & T-B-E & T-B-P& T-B-EU & T-B-E & T-B-P \\ \hline
		Run time (h) & 1.8  & 17.6  & 32.3  & 2  & 19.2  & 24.7 \\ \hline
		Storage (GB) & 0.92  & 0.92  & 0.92  & 0.92  &  0.92 & 0.92 \\ \hline
	\end{tabular}
	\begin{tabular}{|c||c|c|c||c|c|c|}
		\hline
		\multicolumn{7}{|c|}{Data Set $D_2$}\\ \hline
		&\multicolumn{3}{c||}{{\bf Off-Peak Hours}}&\multicolumn{3}{c|}{{\bf Peak Hours}}\\ \cline{2-7}
	          & T-B-EU & T-B-E & T-B-P& T-B-EU & T-B-E & T-B-P \\ \hline
            Run time (h) & 8.5  & 92.8  & 224.5  & 12.6  & 108.1  & 239.2 \\ \hline
		Storage (GB) & 3.06  & 3.06  & 3.06  & 3.06  & 3.06   & 3.06 \\ \hline
	\end{tabular}
	\caption{Binary Heuristics Pre-computation}
	\label{tbl:binary}
 \vspace{-2em}
\end{table}

We next investigate the runtime when using binary heuristics vs. T-None that uses no heuristics in Figures~\ref{fig:binaryRouting-peak} and~\ref{fig:binaryRouting-offpeak}. 
We categorize the routing queries based on their source-destination distances (cf. Figure~\ref{fig:binaryRouting-peak} (a) and (b), Figure~\ref{fig:binaryRouting-offpeak} (a) and (b)) and their time budgets (cf. Figure~\ref{fig:binaryRouting-peak} (c) and (d), Figure~\ref{fig:binaryRouting-offpeak} (c) and (d)). 

The longer the distances, the more time it takes to perform the routing. 
Larger travel time budgets also incur longer runtime. T-B-EU takes longer, i.e., sometimes over 0.5 second, than T-B-E and T-B-P at sub 0.1 second, suggesting that the shortest path tree based binary heuristics used by T-B-E and T-B-P are more accurate and are effective at pruning unpromising candidate paths. 
Further, T-B-P offers the best efficiency, indicating that shortest path tree derived from both edges and T-paths is the most accurate and effective at pruning. 

\noindent
{\bf Budget-specific Heuristics:} 
We consider the largest budget of $5,000$ seconds for both data sets and vary $\delta$ among 30, 60, and 120, and 240. 
We argue that 5,000 seconds is a very large time budget when traveling up to 35 km, i.e., the longest distance considered in the routing queries. %

We first consider the offline phase of building the budget-specific heuristics, i.e., the heuristics tables. These tables are destination-specific.
When building budget-specific heuristics, we did not use parallelization, which, however, is possible because building the heuristics for different vertices is independent. The offline pre-computation cost for $D_1$ and $D_2$ is reported in Table~\ref{tbl:budget}, where run time and storage is reported by hours (h) and gigabytes (GB), respectively.

\vspace{-2mm}
\begin{table}[!htp]
\footnotesize
\renewcommand{\arraystretch}{1}
	\centering
	\begin{tabular}{|c||c|c|c|c||c|c|c|c|}
		\hline
		\multicolumn{9}{|c|}{Data Set $D_1$}\\ \hline
		&\multicolumn{4}{c||}{{\bf Off-Peak Hours}}&\multicolumn{4}{c|}{{\bf Peak Hours}}\\ \cline{2-9}
	      $\delta$ & 30 & 60 & 120 & 240 & 30 & 60 & 120 & 240 \\ \hline
		Run time (h) & 43.2  & 33.4  & 20.4  & 16.8  & 56  & 28.5  & 17.4  & 13.2 \\ \hline
		Storage (GB) & 1.87  & 1.37  & 1.02  & 0.92 & 1.34  & 1.07  & 0.97  & 0.92   \\ \hline
	\end{tabular}
	\setlength{\tabcolsep}{1.2mm}{\begin{tabular}{|c||c|c|c|c||c|c|c|c|}
		\hline
		\multicolumn{9}{|c|}{Data Set $D_2$}\\ \hline
		&\multicolumn{4}{c||}{{\bf Off-Peak Hours}}&\multicolumn{4}{c|}{{\bf Peak Hours}}\\ \cline{2-9}
	      $\delta$ & 30 & 60 & 120 & 240 & 30 & 60 & 120 & 240 \\ \hline
            Run time (h) & 318.8  & 258.2  & 225.7  & 207.9  & 289.2  & 247.1  & 223.8  & 213.2 \\ \hline
		Storage (GB) & 5.95  & 4.45  & 3.41  & 3.06  & 4.38  & 3.54  & 3.21  & 3.06 \\ \hline
	\end{tabular}}
	\caption{ Budget-Specific Heuristics Pre-computation}
	\label{tbl:budget}
 \vspace{-2em}
\end{table}

Figure~\ref{fig:Budget-Specific-preprocessing} reports the average pre-computation time of constructing a heuristic table and its average size for both $D_1$ and $D_2$ at peak hours. The results indicate that smaller $\delta$ values result in larger heuristics tables with more columns and that longer pre-computation times are needed to obtain the heuristics tables. 

Next, we investigate how the budget-specific heuristics help stochastic routing. 
We again categorize routing queries w.r.t. both distances and budge values. 
Figures~\ref{fig:budgetRouting-peak} and~\ref{fig:budgetRouting-offpeak} 
show that queries with longer distances and travel time budgets incur longer online runtime.
Smaller $\delta$ values produce heuristic tables at finer granularities, thus yielding more effective pruning and lower online runtimes. 

We notice that, for $D_2$ at peak hours, the online runtimes for $\delta=30$ and $\delta=60$ are similar, but they require 289.2 hours and 247.1 hours (without using parallelization) to generate and occupy 4.38 GB and 3.54 GB to store heuristic tables for all destinations, respectively. 
% Details are available elsewhere~\cite{pace-appendix}. 
%
As $\delta=60$ requires less pre-computation time and space, we choose it as the default value and conclude that it is practical to trade pre-computation time and storage for interactive response times.
% }

To compare the binary heuristics vs. the budget-specific heuristics (using the default $\delta$ value), we include T-BS-60 in Figures~\ref{fig:binaryRouting-peak} and~\ref{fig:binaryRouting-offpeak}). 
We observe that the budget-specific heuristics significantly improve the runtime compared to the binary heuristics.  
\subsection{V-Path Based Routing}
\label{ssec:vpathrouting}

Figures~\ref{fig:vpathRouting-peak} and ~\ref{fig:vpathRouting-offpeak} show results for V-Path based routing, which in general are much faster than routing with only T-paths. 
In this set of experiments, we only use default $\delta=60$. V-BS-60 achieves the least rounting runtime, which further improves over T-BS-60, suggesting that the  V-path based routing offers the best efficiency, because using long V-paths often makes it possible to go to somewhere close to the destination very quickly. 
When combined with stochastic dominance based pruning, the efficiency is further improved. 

Table~\ref{tbl:comparison} presents comprehensive statistics of all heuristic methods. Although V-BS-60 requires slightly more storage and a longer precomputation time than the other methods, it excels by offering the fastest routing runtime.

\vspace{-1em}
\begin{table}[h]
\footnotesize
	\centering
	\renewcommand{\arraystretch}{1.1}
	\setlength{\tabcolsep}{0.5mm}{\begin{tabular}{|c|c|c|c|c|c|c|}
		\hline		\multicolumn{7}{|c|}{Data Set $D_1$, Peak Hours}\\ \hline
	     Methods & T-B-EU & T-B-E & T-B-P&V-B-P& T-BS-60 & V-BS-60 \\ \hline
		Storage (GB) & 0.92  & 0.92  & 0.92  &0.92& 1.07  & 1.09 \\ \hline
    Precomputation (h) & 2  & 19.2  & 24.7   & 32.3 &  29.1 & 37.0 \\ 
    \hline
    Routing (s) &0.364&0.235&0.214& 0.142&0.078 &0.067\\ \hline
	\end{tabular}
	\begin{tabular}{|c|c|c|c|c|c|c|}
		\hline
		\multicolumn{7}{|c|}{Data Set $D_2$, Peak Hours}\\ \hline
	          Methods & T-B-EU & T-B-E & T-B-P& V-B-P & T-BS-60 &  V-BS-60 \\ \hline
    Storage (GB) & 3.06  & 3.06  & 3.06  & 3.06 & 3.52& 3.54   \\ \hline
      Precomputation (h)  & 12.6  & 108.1  & 239.2   & 284.5  & 248.5 & 302.1 \\ \hline
    Routing (s) & 1.157  & 1.126 & 1.017    & 0.993  & 0.577 & 0.402 \\ 
    \hline
	\end{tabular}}
	\caption{Comparison of different methods}
	\label{tbl:comparison}
  \vspace{-4em}
\end{table}

\subsection{Case Study}
To illustrate the benefits of the proposed methods, we analyze two representative queries from the datasets for peak hours. We compare V-BS-60 against Google Maps for Aalborg and Baidu Maps for Xi’an, as these platforms are widely utilized in the respective cities. Blue lines denote the routes provided by the platforms and red lines denote the routes provided by the proposed method. In instances of overlap, the blue lines are placed on top of the red lines. We assess the probability of arrival within the budget time for all paths based on the cost distributions. 

As depicted in Figure~\ref{fig:case} (a), we set a budget of 15 minutes, within which the V-BS-60 route has a 72.1\% probability of arrival, outperforming the Google Maps route that has a 66.7\% probability. Similarly, for the Xi’an dataset with a 35-minute budget (as shown in Figure~\ref{fig:case} (b)), the probability of the V-BS-60 route exceeds that of the Baidu Maps route. These cases illustrate how the proposed method is capable of enhancing the probability of arriving within a time budget.
\vspace{-1em}
\begin{figure}[!htp]
	\centering
	\subfigure[Aalborg (budget for 15 minutes at peak hours)]{
	\centering
			\includegraphics[height=0.67\linewidth]{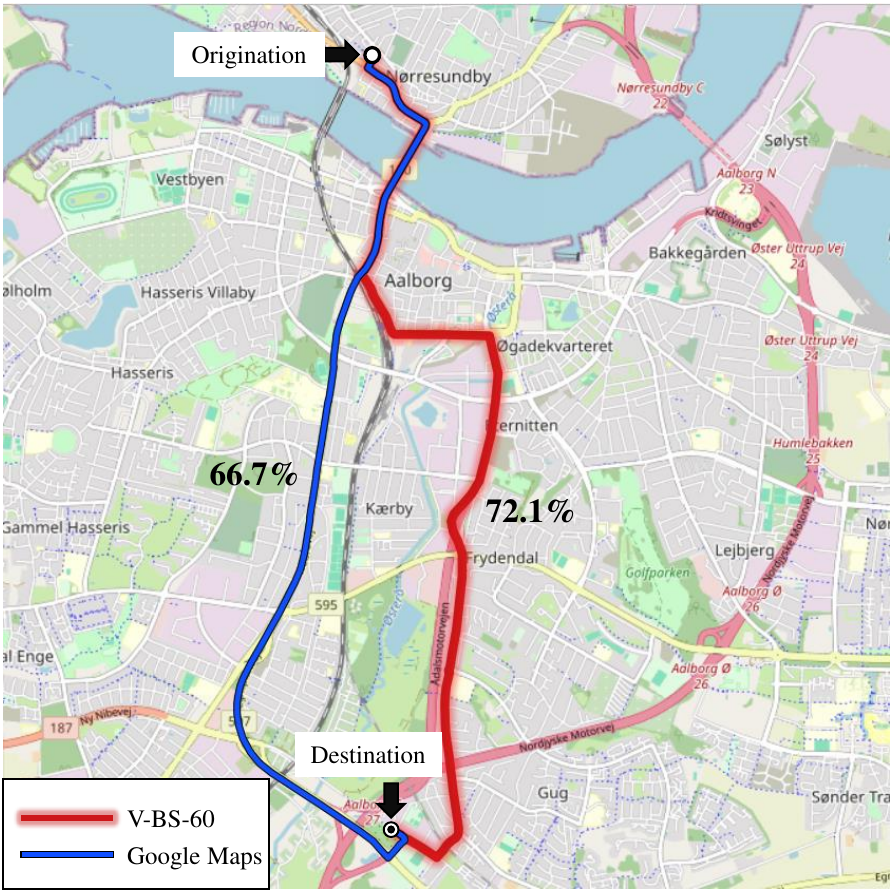}
	}
	\subfigure[Xi'an (budget for 35 minutes at peak hours)]{
	\centering
			\includegraphics[height=0.67\linewidth]{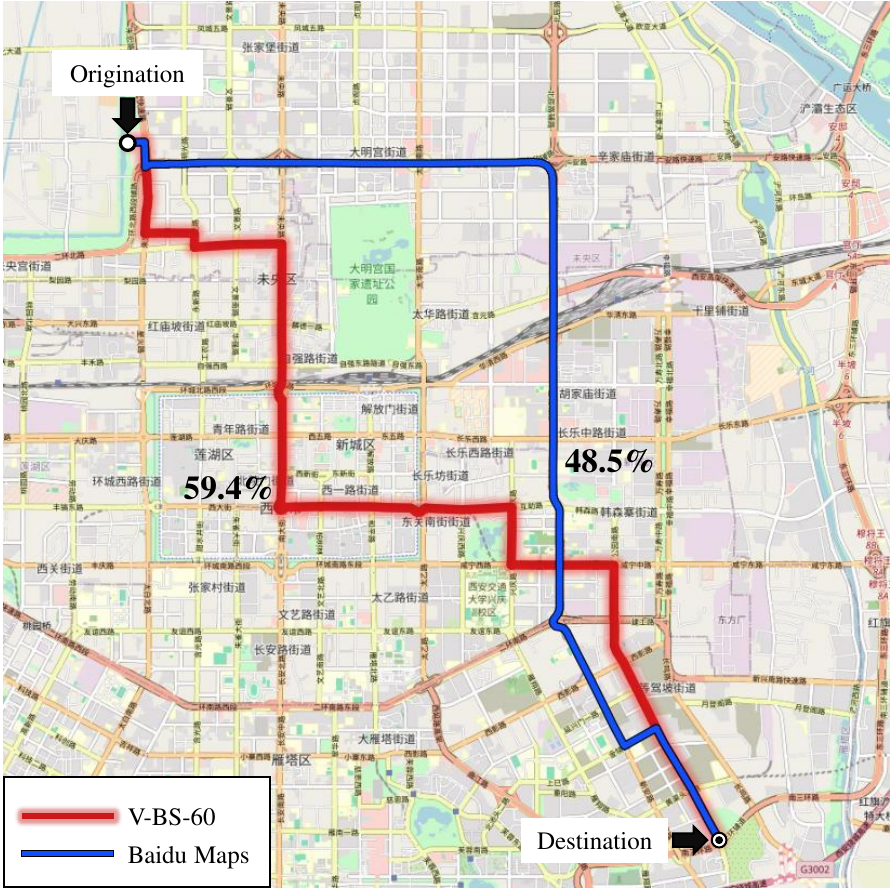}
	}
	\vspace{-1.5em}
	\caption{Case Study}
	\vspace{-8pt}
	\label{fig:case}
\end{figure}

\section{Related Work}
\label{sec:related}

\noindent
{\bf Modeling Travel Costs in Road Networks:} 
Most studies model travel costs in the edge-centric model, meaning that weights are assigned to edges. 
First, some models assign deterministic weights to edges. An edge weight can be a single deterministic cost representing, e.g., average travel-time~\cite{DBLP:conf/aaai/LetchnerKH06,DBLP:conf/aaai/ZhengN13,DBLP:journals/tkde/YangKJ14}, or it can be several deterministic costs representing different travel costs such as average travel-time and fuel consumption~\cite{DBLP:journals/geoinformatica/Guo0AJT15,DBLP:conf/icde/Guo0AJT15}. 
Second, some models assign uncertain weights to edges. 
Most such studies assume that the weights on different edges are independent~\cite{DBLP:conf/icde/YangGJKS14,DBLP:conf/aips/NikolovaBK06, DBLP:conf/uai/WellmanFL95}. 
Some studies consider the dependency between two edges, but disregards dependencies among multiple edges~\cite{DBLP:conf/edbt/HuaP10,DBLP:journals/pvldb/PedersenYJ20, DBLP:journals/pvldb/0002GJ13}, including %Some 
two recent studies that employ machine learning models to infer dependencies 
between two edges where trajectories are unavailable~\cite{DBLP:journals/pvldb/PedersenYJ20, DBLP:journals/pvldb/0002GJ13}. Path representation learning produces effective path representations to enable accurate travel cost estimation~\cite{DBLP:journals/tkde/YangGY22, DBLP:conf/icde/Yang020, DBLP:conf/icde/YangGHYTJ22, DBLP:conf/ijcai/YangGHT021, DBLP:conf/kdd/YangHGYJ23}.  However, such methods are difficult to be integrated with existing routing algorithms.  
The path-centric model~\cite{pvldb17pathcost, DBLP:journals/vldb/YangDGJH18} aims at fully capture the cost dependencies among multiple edges along paths by maintaining joint distributions for the paths when sufficient trajectories are available. 
The path-centric model offers better accuracy when estimating cost distributions of paths over the edge-centric model~\cite{pvldb17pathcost, DBLP:journals/vldb/YangDGJH18}. Thus, we base our work on the path-centric model. 

\noindent
{\bf Stochastic Routing:} Stochastic routing is to determine the optimal routes for vehicles in a network under uncertain conditions~\cite{WangLT19, Pedersen0J20,lu2020, lin2022, pan2023}. Extensive research has been performed on stochastic routing under the edge-centric model~\cite{DBLP:journals/tsas/PedersenYJM23,DBLP:conf/sigmod/Ma0J14,DBLP:journals/vldb/GuoYHJC20,nie2006arriving,DBLP:conf/ijcai/BentH07,DBLP:journals/pvldb/JinLDW11,DBLP:conf/icde/YangGJKS14,DBLP:journals/vldb/PedersenYJ20,DBLP:conf/aips/NikolovaBK06, DBLP:conf/uai/WellmanFL95}. 
Efficient stochastic routing is often based on stochastic dominance based pruning, based on the assumption that edge weights are independent~\cite{DBLP:conf/aips/NikolovaBK06, DBLP:conf/uai/WellmanFL95,DBLP:journals/vldb/HuYGJ18}. 
In addition, efficient convolution computation techniques have been proposed and integrated into routing algorithms to improve efficiency~\cite{nie2006arriving,dean2010speeding}. 
Well-known speed-up techniques for classic shortest path finding, e.g., partition-based reach, arc-flags, arc-potentials, and contraction hierarchies, have also been extended to support uncertain weights in the edge-centric model~\cite{DBLP:conf/alenex/SabranSB14,DBLP:journals/vldb/PedersenYJ20}. 
However, these speed-up techniques rely on the independence assumption that does not hold in the path-centric model, and thus they cannot be applied readily.  
For the path-centric model, stochastic routing algorithms suffer from low efficiency~\cite{DBLP:journals/vldb/YangDGJH18,DBLP:conf/mdm/Andonov018}, where we propose search heuristics and virtual-path based routing to improve routing efficiency. 

\section{Conclusion}
\label{sec:con}

We study stochastic routing under the path-centric model that can capture accurate travel cost distributions. Instead of providing stochastic routing services using edge devices, we target cloud-based stochastic routing services to enable applications such as coordinated fleet transportation. In this setting, sufficient storage is accessible. Thus, we trade increased off-line storage for efficient online routing services. Specifically, we propose binary heuristics and a budget-specific heuristic, which help us explore more promising candidate paths. Then, we introduce virtual paths to make it possible to use stochastic dominance based pruning in the path-centric model. Experimental results suggest that the proposed methods enable efficient stochastic routing under the path-centric model. 
\\

% \begin{acks}
%  This work was supported by the [...] Research Fund of [...] (Number [...]). Additional funding was provided by [...] and [...]. We also thank [...] for contributing [...].
% \end{acks}

% \clearpage

\bibliographystyle{ACM-Reference-Format}
\balance
\bibliography{erc}

\end{document}